\documentclass[preprint,11pt]{article}
\usepackage[affil-it]{authblk}
\usepackage[left=2cm,right=2cm]{geometry}

%%%%Ragesh definitions%%%%
% \usepackage{makeidx}
% \usepackage{graphicx}
% % \usepackage{xcolor}
% \usepackage{ifthen}
% \usepackage{microtype}%if unwanted, comment out or use option "draft"
% \usepackage{boxedminipage}
% \usepackage{amsmath}
% \usepackage{amsfonts}
% \usepackage{amssymb,mathrsfs}
% \usepackage{authblk}
% \usepackage{breakcites}
% \usepackage{chemscheme}
% \usepackage{adjustbox}
% \usepackage{hyperref}
% \usepackage{framed}
% \usepackage{float}
% \usepackage{breqn}
% \usepackage{array, makecell}

% \setlength{\topmargin}{-0.5in}
% \setlength{\textwidth}{6.5in}
% \setlength{\oddsidemargin}{0.0in}
% \setlength{\evensidemargin}{0.0in}
% \setlength{\textheight}{9.0in}

\newcommand{\lv}[1]{}

\newcommand{\veps}{\varepsilon}

\newcommand{\X}{\mathcal{X}}

%\newtheorem{fact}{Fact}
%%%%%%%%%%%%%%%%%%%%
%%%%Ragesh definitions%%%%

%%%% Dishant Additions %%%%

%To Merge Rows
\usepackage{multirow}

% Increase Cell Spacing of Table
\usepackage{array, makecell, cellspace}
%\setlength{\cellspacetoplimit}{3ex}
%\setlength{\cellspacebottomlimit}{3ex}

%% To increse spacing within lines
\usepackage{setspace}
%\renewcommand{\baselinestretch}{1.5} 

%%To fit table in boundaries
\usepackage{adjustbox}

%%% Big Fraction

%%% Big Brackets
\makeatletter
\newcommand{\vast}{\bBigg@{3.5}}
\makeatother

%%To remove page number
\usepackage{floatpag}
\usepackage{afterpage}

%%%---Footnote glue at bottom of page
\usepackage[bottom]{footmisc}

%--For Itemize no indent---
\usepackage{enumitem}

%--To add extra features to frame--
\usepackage{mdframed}

%--To make float object---
\usepackage{float}

%----Algorithm Package-----
\usepackage[linesnumbered,lined]{algorithm2e}

\SetCommentSty{mycommfont}
\DontPrintSemicolon
%-------------------------

%--For Footnote---------
%\usepackage{subcaption}
%\usepackage{longtable}
%\usepackage{caption}
\usepackage{footnote}
\usepackage{tablefootnote}
\makesavenoteenv{tabular}

%---Increase the Height of cell--
%----In table----------

%-----Figure Name Change to Algorithm
\usepackage{newfloat}
\DeclareFloatingEnvironment[fileext=lop]{Algorithm}

%----Bold The Caption---
\usepackage[labelfont=bf]{caption}

%----NEW Commands--------
%---Overline and Bar-----

%---Ceiling and Floor----

\usepackage{geometry}
\usepackage{framed}
\usepackage{float}
%\floatstyle{boxed} 
\restylefloat{figure}

\usepackage{lineno}
\usepackage{amssymb}
\usepackage{mathrsfs}
\usepackage{amsmath}
\usepackage{amsthm}
\usepackage{dsfont}
\usepackage{bbm}
\usepackage{mathtools}
\usepackage{hyperref}
\newtheorem{definition}{Definition}[section]
\newtheorem{theorem}{Theorem}
\newtheorem{corollary}{Corollary}[theorem]
\newtheorem{lemma}[theorem]{Lemma}

\newtheorem{prop}{Proposition}

\usepackage[table]{xcolor}
\usepackage{array, makecell}

\usepackage[normalem]{ulem}

\usepackage{graphicx}
\usepackage{relsize}
\usepackage[numbers]{natbib}

%%

%%
%%

%%%%Dishant Additions%%%%%%5

%%To avoid page break between theorem and its proof
%\usepackage{needspace} did'nt work
\allowdisplaybreaks

\usepackage{xcolor}
\usepackage{thmtools,thm-restate}

\definecolor{mycolor}{rgb}{1, 0.0, 0.0}

%%%%

%%%%%% NEW COMMANDs
\newcommand{\costt}[1]{\textup{cost($#1$)}}
\newcommand{\ftcostt}[1]{\textup{$\Phi$($#1$)}}
\newcommand{\rp}[2]{#1\strut^{\hspace{0.15mm}#2}}

\usepackage{scalerel}[2016/12/29]

\newcommand{\zl}{\scaleobj{1.15}{z}}
\newcommand{\OC}{\mathbb{O}}

\allowdisplaybreaks

\graphicspath{{./Figures/}}

\setcounter{tocdepth}{3}
\begin{document}

%% Title, authors and addresses

\title{\vspace{-2.5cm}Tight FPT Approximation for Constrained k-Center and k-Supplier}
%\lv{
%\subtitle{(Full Version)}
%}
%
%\titlerunning{}
% abbreviated title (for running head)
%                                     also used for the TOC unless
%                                     \toctitle is used
%
\author{Dishant Goyal and Ragesh Jaiswal \\ Department of Computer Science and Engineering, \\
Indian Institute of Technology Delhi  \thanks{Email addresses: \texttt{\{Dishant.Goyal, rjaiswal\}@cse.iitd.ac.in}} }
\date{}
%
%\authorrunning{Jaiswal et al.}
% abbreviated author list (for running head)
%
%%%% list of authors for the TOC (use if author list has to be modified)
%\tocauthor{Ragesh Jaiswal}
%
{\def\addcontentsline#1#2{}\maketitle}
%\maketitle     % typeset the title of the contribution
%%\thispagestyle{empty}

\begin{abstract}
In this work, we study a range of \emph{constrained} versions of the \emph{$k$-supplier} and \emph{$k$-center} problems. In the classical (\emph{unconstrained}) $k$-supplier problem, we are given a set of clients $C$ in a metric space $\mathcal{X}$, with distance function $d(.,.)$. We are also given a set of feasible facility locations $L \subseteq \mathcal{X}$. The goal is to open a set $F$ of $k$ facilities in $L$ to minimize the maximum distance of any client to the closest open facility, i.e., minimize, $\costt{F,C} \equiv \max_{j \in C} \big\{ d(F,j) \big\}$, where $d(F,j)$ is the distance of client $j$ to the closest facility in $F$. The $k$-center problem is a special case of the $k$-supplier problem where $L = C$.

In this work, we study various constrained versions of the $k$-supplier problem such as: \emph{capacitated}, \emph{fault-tolerant}, $\ell$-diversity, etc. These problems fall under a broad framework of \emph{constrained clustering}. 
A unified framework for constrained clustering was proposed by Ding and Xu~\cite{constrained:2015_Ding_and_Xu} in context of the $k$-median and $k$-means objectives. In this work, we extend this framework to the $k$-supplier and $k$-center objectives. 
This unified framework allows us to obtain results simultaneously for the following constrained versions of the $k$-supplier problem: $r$-gather, $r$-capacity, balanced, chromatic, fault-tolerant, strongly private, $\ell$-diversity, and fair $k$-supplier problems, with and without outliers.
We obtain the following results:
\begin{itemize}
\item We give $3$ and $2$ approximation algorithms for the constrained $k$-supplier and $k$-center problems, respectively, with $\mathsf{FPT}$ running time $k^{O(k)} \cdot n^{O(1)}$, where $n = |C \cup L|$. Moreover, these approximation guarantees are tight; that is, for any constant $\veps>0$, no algorithm can achieve $(3-\veps)$ and $(2-\veps)$ approximation guarantees for the constrained $k$-supplier and $k$-center problems in $\mathsf{FPT}$ time, assuming $\mathsf{FPT} \neq \mathsf{W}[2]$. \item We study the constrained clustering problem with outliers. Our algorithm gives $3$ and $2$ approximation guarantees for the constrained \emph{outlier} $k$-supplier and $k$-center problems, respectively, with $\mathsf{FPT}$ running time $(k+m)^{O(k)} \cdot n^{O(1)}$, where $n = |C \cup L|$ and $m$ is the number of outliers.    %Moreoever, our algorithm works for the \emph{weighted} version of the constrained clustering problem, where every client $j \in C$ has a positive real weight $w(j)$. 
\item Our techniques generalise for distance function $d(., .)^z$. That is, for any positive real number $z$, if the cost of a client is defined by $d(.,.)^{\zl}$ instead of $d(.,.)$, then our algorithm gives $\rp{3}{\zl}$ and $\rp{2}{\zl}$ approximation guarantees for the constrained $k$-supplier and $k$-center problems, respectively. 
%In particular, we study the following constrained versions of the $k$-supplier problem: $r$-gather, $r$-capacity, balanced, chromatic, fault-tolerant, strongly private, $\ell$-diversity, and fair $k$-supplier problems, with and without outliers. 
\end{itemize}
\end{abstract}

\section{Introduction}\label{section:introduction}
We start with the definition of the \emph{$k$-supplier} problem:

\begin{definition}[$k$-Supplier]
Let $(\X, d)$ be a metric space. Let $k$ be any positive integer and $z$ be any positive real number. Given a set $L \subseteq \X$ of feasible facility locations, and a set $C \subseteq \X$ of clients, find a set $F \subseteq L$ of $k$ facilities that minimises the cost: $\costt{F,C} \equiv \max_{j \in C} \Big\{ \min_{i \in F} \big\{ d(i,j)^{\zl} \big\} \Big\}$.
\end{definition}

\noindent When $L = C$, the problem is known as the \emph{$k$-center} problem. In the above definition, we did not impose any constraint on the clients. Therefore, this version is also known as the~\emph{unconstrained} $k$-supplier problem. Note that we will use the term $k$-supplier problem and~\emph{unconstrained} $k$-supplier problem interchangeable in this paper. The $k$-supplier and $k$-center problems have natural applications in deciding the optimal location of placing the facilities like: hospitals, schools, post offices, etc. in a geographical area~\cite{survey:2004_facility_location_Daskin,survey:2017_facility_location_Amir}. It ensures that no client pays a very high transportation cost for availing a particular facility. 
Furthermore, these problems are extensively used for clustering large data sets in areas such as data mining, pattern recognition, information retrieval, etc.. The clients that are assigned to the same facility belongs to the same cluster, and the corresponding facility is known as the \emph{cluster center}. Keeping this in mind, from now on, we will use the term \emph{facility} and \emph{center} interchangeably. 

In many applications, additional constraints are imposed on the clusters. For example, in privacy preserving clustering, every cluster is required to have at least $r$ clients. This problem is known as the \emph{$r$-gather $k$-supplier}/\emph{center} problem or the \emph{lower bounded $k$-supplier}/\emph{center} problem~\cite{rgather:k_anonymity_2005_Aggarwal,rgather:k_anonymity_2002_Sweeney}. 
Similarly, in many resource allocation problems, we have an upper bound constraint imposed on every facility location. That is, a facility can not serve more than $u$ clients, for some integer constant $u>0$. This ensures that the load is almost equally distributed among the facilities. This problem is known as the \emph{capacitated $k$-supplier}/\emph{center} problem~\cite{capacitated:kcenter_1992_Barillan,capacitated:kcenter_2000_khuller,capacitated:kcenter_2012_khuller,capacitated:kcenter_2015_An}. Likewise, there are many other constrained versions of the $k$-supplier/center problems namely \emph{fault-tolerant}~\cite{fault:kcenter_2000_khuller}, \emph{fair}~\cite{fairness:2019_Bera_NIPS}, \emph{chromatic}~\cite{constrained:2015_Ding_and_Xu}, \emph{$\ell$-diversity}~\cite{L_diversity:2010_kcenter_Li_Jian} problems, etc..

In the past, many of the constrained versions of the clustering problems were studied separately as independent problems. 
Recently, in $2015$, Ding and Xu~\cite{constrained:2015_Ding_and_Xu} gave a unified framework for these problems that they called the \emph{constrained clustering} framework. They proposed this unified framework in the context of the $k$-median and $k$-means problems in the continuous Euclidean spaces. 
More recently, Goyal~\emph{et al.}~\cite{constrained:2020_GJK_FPT} extended the framework to general discrete metric spaces. Similar results were obtained by Feng~\emph{et al.}~\cite{fpt:2020_Feng_Zhang_Unified_Framework} using a slightly different framework. In this work, we further extend the unified framework to the $k$-supplier and $k$-center objectives. Using this, we design $\mathsf{FPT}$ time approximation algorithms for a range of constrained clustering problems. To put it another way, the unified framework allows us to use a common algorithmic technique for various constrained clustering problems. A natural question we need to address at the beginning of this work is: ``\textit{why should one be interested in designing approximation algorithms with $\mathsf{FPT}$ running time for these problems?}" 
The answer lies in the fact that the $k$-supplier and $k$-center problems are $\mathsf{W}[2]$-hard parameterized by $k$~\cite{hardness:2005_kcenter_W2hard_Demaine}. Therefore, we can not obtain exact algorithms for the $k$-supplier and $k$-center problems in $\mathsf{FPT}$ running time unless $\mathsf{W}[2] = \mathsf{FPT}$. Very recently, stronger $\mathsf{FPT}$ hardness results have been established for these problems. The following are these two results that easily follow from the work of Hochbaum and Shmoys~\cite{Supplier:1986_Shmoys}, and Pasin Manurangsi~\cite{fpt:2020_Pasin_SODA}.
\begin{theorem}\label{theorem:hardness_ksupplier}
For any constant $\veps>0$, $z >0$, and any function $g \colon \mathbb{Z}^{+} \to \mathbb{R}^{+}$, the $k$-supplier problem can not be approximated to factor $(\rp{3}{z} - \veps)$ in time $g(k) \cdot n^{o(k)}$, assuming Gap-ETH.
\end{theorem}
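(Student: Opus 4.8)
The plan is to combine a gap-producing, parameter-preserving polynomial reduction from the decision version of \textsc{Set Cover} with the known $\mathsf{FPT}$ running-time lower bound for that problem, in the style of Hochbaum and Shmoys~\cite{Supplier:1986_Shmoys}. The source statement I would use is the following lower bound for the \emph{exact} decision version of \textsc{Set Cover} parameterised by the number of chosen sets, which follows from the works cited above: assuming Gap-ETH, there is no algorithm running in time $g(k) \cdot N^{o(k)}$ that, given a universe $U$, a family $\mathcal{S} = \{S_1, \dots, S_m\}$ of subsets of $U$ and an integer $k$, decides whether $U$ can be covered by $k$ of the sets of $\mathcal{S}$; here $N = |U| + m$. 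Since the reduction below manufactures the whole multiplicative gap by itself, no gap version of \textsc{Set Cover} is needed.

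Given such an instance, I would build the metric as follows. Let $\X := U \cup \{p_1, \dots, p_m\}$ consist of one point per element of $U$ together with one point $p_i$ per set $S_i$, and let $H$ be the bipartite graph on $\X$ with a unit-length edge between $u$ and $p_i$ exactly when $u \in S_i$. Take $d$ to be the shortest-path metric of $H$ truncated at $3$, with pairs of points in distinct components of $H$ assigned distance $3$. The truncation of a (possibly extended) metric is again a metric, so $(\X, d)$ is a legitimate metric space; and since $H$ is bipartite with the $p_i$'s on one side and $U$ on the other, we get $d(u, p_i) = 1$ when $u \in S_i$ and $d(u, p_i) = 3$ when $u \notin S_i$. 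Now set the clients to be $C := U$, the facility locations to be $L := \{p_1, \dots, p_m\}$, and keep the parameter $k$; the construction is polynomial time and $n := |C \cup L| = N$.

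The gap is then read off directly. If $\{S_{i_1}, \dots, S_{i_k}\}$ covers $U$, then $F := \{p_{i_1}, \dots, p_{i_k}\}$ places every client at distance $1$ from $F$, so $\costt{F,C} = \rp{1}{z} = 1$; since every client is at distance at least $1$ from every facility, the optimum equals $1$ in the yes-case. If no $k$ sets of $\mathcal{S}$ cover $U$, then for every $k$-element $F \subseteq L$ some client lies in none of the corresponding sets and is therefore at distance $3$ from every facility of $F$ (recall $d(u, p_i) \in \{1, 3\}$), and since no distance exceeds $3$, every feasible solution has cost exactly $\rp{3}{z}$, so the optimum equals $\rp{3}{z}$ in the no-case. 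Hence a $(\rp{3}{z} - \veps)$-approximation algorithm for $k$-supplier running in time $g(k) \cdot n^{o(k)}$ would return a solution of cost $< \rp{3}{z}$ precisely on the yes-instances, thereby deciding \textsc{Set Cover} in time $g(k) \cdot N^{o(k)}$ and contradicting the source lower bound. The extension to arbitrary $z > 0$ is free, since the only distances that can occur between a client and its serving facility are $1$ and $3$, with $z$-th powers $1$ and $\rp{3}{z}$.

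The routine parts are the metric check (immediate from ``truncation of a metric is a metric'') and verifying that the parameter and the instance size are preserved so the lower bound transfers verbatim. There is no genuine obstacle; the one point that deserves care is invoking the source hardness in its full quantitative form --- we need the $N^{o(k)}$ running-time lower bound for \textsc{Set Cover} / \textsc{Dominating Set} parameterised by solution size, not just its $\mathsf{W}[2]$-hardness --- after which the theorem indeed ``easily follows''.
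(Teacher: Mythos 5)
Your proposal is correct and follows essentially the same route the paper takes: the classical Hochbaum--Shmoys-style gap reduction from \textsc{Set Cover} (covered element--set pairs at distance $1$, all others at distance $3$, yielding optimum $1$ versus $\rp{3}{z}$) combined with Manurangsi's Gap-ETH lower bound ruling out $g(k)\cdot N^{o(k)}$ algorithms for exact \textsc{Set Cover} parameterised by $k$. This matches both the deferred proof the paper cites and the analogous in-paper argument for the $k$-center case in Appendix~\ref{appendix:reduction}, differing only cosmetically in that you realise the $\{1,3\}$ distances via a truncated shortest-path metric rather than an explicit distance table.
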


\begin{theorem}\label{theorem:hardness_kcenter}
For any constant $\veps>0$, $z >0$, and any function $g \colon \mathbb{Z}^{+} \to \mathbb{R}^{+}$, the $k$-center problem can not be approximated to factor $(\rp{2}{z} - \veps)$ in time $g(k) \cdot n^{o(k)}$, assuming Gap-ETH.
\end{theorem}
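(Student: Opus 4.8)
\textbf{The plan} is to compose the classical gap-producing reduction of Hochbaum and Shmoys~\cite{Supplier:1986_Shmoys} from \textsc{Dominating Set} to $k$-center with the $\mathsf{FPT}$-time inapproximability of \textsc{Dominating Set} proved by Manurangsi~\cite{fpt:2020_Pasin_SODA}. Recall that $k$-\textsc{Dominating Set} asks whether a graph $G=(V,E)$ has a dominating set of size at most $k$. Given such an instance, I would build the $k$-center instance on the point set $V$, with $C=L=V$ and metric
\[
d(u,v)=\begin{cases} 0 & \text{if } u=v,\\ 1 & \text{if } \{u,v\}\in E,\\ 2 & \text{otherwise.}\end{cases}
\]
Every nonzero distance lies in $\{1,2\}$ and $2\le 1+1$, so $(V,d)$ is a metric space. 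If $D$ is a dominating set with $|D|=k$, then opening $F=D$ places every client within distance $1$ of an open center, so $\costt{F,C}\le\rp{1}{\zl}=1$. Conversely, if $G$ has no dominating set of size $k$, then for every $F\subseteq V$ with $|F|=k$ some vertex $v$ is neither in $F$ nor adjacent to $F$, whence $d(F,v)=2$ and $\costt{F,C}=\rp{2}{\zl}$.

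Consequently the optimum is $1$ on \textsc{yes}-instances and $\rp{2}{\zl}$ on \textsc{no}-instances. Moreover, since all distances are at most $2$, the only cost a $k$-center solution can attain besides $1$ is $\rp{2}{\zl}$; hence, for $\veps\in(0,\rp{2}{\zl}-1)$ (the range in which the claim is non-trivial), any $(\rp{2}{\zl}-\veps)$-approximation algorithm returns on a \textsc{yes}-instance a solution of cost $\le\rp{2}{\zl}-\veps<\rp{2}{\zl}$, i.e. of cost exactly $1$, and on a \textsc{no}-instance a solution of cost $\rp{2}{\zl}$ --- so it decides $k$-\textsc{Dominating Set}. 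The reduction runs in time polynomial in $|V|$, leaves the parameter $k$ unchanged, and produces an instance with $n=|C\cup L|=|V|$; therefore a $(\rp{2}{\zl}-\veps)$-approximation running in $g(k)\cdot n^{o(k)}$ time would decide $k$-\textsc{Dominating Set} in $g'(k)\cdot|V|^{o(k)}$ time for a suitable $g'$, contradicting Manurangsi's result that, assuming Gap-ETH, $k$-\textsc{Dominating Set} admits no such algorithm (here one could equally invoke the plain-$\mathsf{ETH}$ hardness of exact $k$-\textsc{Dominating Set}, since the gap is already created by the reduction).

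\textbf{Expected main obstacle.} The reduction itself is routine; the care goes into (i) confirming that capping all distances at $2$ still yields a genuine metric and, crucially, that no feasible solution can have cost strictly between $1$ and $\rp{2}{\zl}$ --- this is exactly what pins the threshold at $\rp{2}{\zl}$ for every $z>0$ --- and (ii) invoking Manurangsi's lower bound in precisely the form ``no $g(k)\cdot n^{o(k)}$-time algorithm under Gap-ETH'' and checking that a polynomial-time, parameter-preserving reduction transfers it (standard, but the dependence on $k$ in the exponent must be tracked). Theorem~\ref{theorem:hardness_ksupplier} follows by the same recipe, reducing instead from \textsc{Set Cover}: make the universe elements the clients and the sets the feasible facility locations, set the distance between a set and each element it contains to $1$ and between a set and each element it omits to $3$ (with distance $2$ among clients and among facilities); a size-$k$ set cover gives cost $1$ and its absence gives cost $\rp{3}{\zl}$, and the remaining argument is identical.
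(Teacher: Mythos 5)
Your proof is correct and follows essentially the same route as the paper's (Appendix~\ref{appendix:reduction}): a threshold-metric gap reduction from a $\mathsf{W}[2]$-hard covering problem --- you use $k$-\textsc{Dominating Set} where the paper uses \textsc{Set Coverage}, but these are polynomial-time, parameter-preservingly equivalent, so Manurangsi's $g(k)\cdot n^{o(k)}$ Gap-ETH lower bound transfers either way --- followed by the observation that a $(\rp{2}{z}-\veps)$-approximation decides the gap. The only substantive difference is that the paper's bipartite construction must additionally argue that any center opened at an element-vertex can be relocated to a set-vertex without increasing the cost, a step your symmetric graph-metric construction avoids entirely.
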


\noindent A detailed proof of Theorem~\ref{theorem:hardness_ksupplier} is given Appendix D of~\cite{fairness:2021_GJ}. For Theorem~\ref{theorem:hardness_kcenter}, we give a detailed proof in Appendix~\ref{appendix:reduction} of this paper. 
Note that the above hardness results applies for all the constrained clustering problems that we study in this paper. 
Our main contribution is to give matching upper bounds for all the constrained clustering problems. That is, we design $\mathsf{FPT}$ time $\rp{3}{z}$ and $\rp{2}{z}$ approximation algorithms for various constrained $k$-supplier and $k$-center problems, respectively. 

Let us now define the constrained clustering framework for the $k$-supplier and $k$-center problems. 
Let $\mathbb{O} = \{O_{1},\dotsc,O_k\}$ be any arbitrary partitioning of the client set $C$. 
Let $F \subseteq L$ be any set of $k$ facilities. 
Let $f_{i}^{*}$ be a facility in $F$ that minimizes the $1$-supplier cost of partition $O_i$. 
That is, $f_{i}^{*}$ is the facility in $F$ that minimises $\max_{x \in O_i} \{d(c, f_{i}^{*})^z\}$.
Then, the $k$-supplier cost of the partitioning $\mathbb{O}$ with respect to the facility set $F$ is given as follows:

$$\Psi(F, \mathbb{O}) \equiv \max_{i = 1}^{k} \Big\{ \max_{x \in O_{i}} \big\{ d(x, f_{i}^{*})^{
\zl} \big\} \Big\}$$

\noindent In other words, a partition $O_i$ is completely assigned to a facility location $f_{i}^{*}$ in $F$, and the assignment cost of every client in $O_i$ is measured with respect to $f_{i}^{*}$. 
Then, $\Psi(F,\mathbb{O})$ is simply the maximum assignment cost over all the clients. Furthermore, the optimal $k$-supplier cost of $\mathbb{O}$ is given as follows: $\Psi^{*}(\mathbb{O}) \equiv \min\limits_{\textrm{$k$-center-set $F$}} \Psi(F,\mathbb{O})$. Now, suppose that we are given a collection $\mathbb{S} = \{\mathbb{O}_{1},\dotsc,\mathbb{O}_{t} \}$ of $t$ different partitionings of $C$. The goal of the constrained clustering problem is to find a partitioning in $\mathbb{S}$ that has the minimum $k$-supplier cost.
Formally, we define the problem as follows: 
\begin{definition}[Constrained $k$-Supplier Problem]\label{definition:constrained_k_supplier}
Let $(\X, d)$ be a metric space, $k$ be any positive integer, and $z$ be any positive real number.
Given a set $L \subseteq \X$ of feasible facility locations, a set $C \subseteq \X$ of clients, and a set $\mathbb{S}$ of feasible partitionings of $C$, find a partitioning $\mathbb{O} = \{O_{1}, O_{2}, \dotsc, O_{k}\}$ in $\mathbb{S}$, that minimizes the cost function:
$\Psi^{*}(\mathbb{O}) ~ \equiv ~ \min\limits_{\textrm{$k$-center-set $F$}} \Psi(F, \OC)$.
\end{definition}
\noindent The above definition encapsulates many constrained clustering problems. For example, consider the $r$-gather clustering problem, in which the goal is to find a clustering $\mathbb{O} = \{O_1,\dotsc,O_k\}$ of the client set such that the $i^{th}$ cluster has at least $r_i$ clients in it, for some constant $r_i \geq 0$. For this problem, the set $\mathbb{S}$ can be concisely defined as $\mathbb{S} \coloneqq \{ \OC ~ \mid ~ \textrm{for every cluster } O_{i} \in \OC$, $|O_{i}| \geq r_i \}$, where $\OC = \{O_{1},O_{2},\dotsc,O_{k} \}$ is a partitioning of the client set. Table~\ref{table:definitions} describes a list of other constrained clustering problems that satisfy the above definition of constrained $k$-supplier problem. In the list, one exception is the \emph{fault-tolerant $k$-supplier} problem; it does not satisfy the definition of the constrained $k$-supplier problem since its objective function is different from $\Psi(F,\mathbb{O})$. In Appendix~\ref{appendix:fault_tolerant}, we show that the fault-tolerant $k$-supplier problem can be reduced to the \emph{chromatic $k$-supplier} problem, which satisfies the definition of the constrained $k$-supplier problem. We would also like to point out that we are considering the \emph{soft} assignment version of the constrained clustering problems. That is, it is allowed to open more than one facility at any particular location in $L$. 
This version is different from the \emph{hard} assignment version where a single copy of a facility can be opened at any particular location in $L$. Note that in both the versions the total number of open facilities are at most $k$. 
Note that the soft assignment version is easier than the hard assignment version since the soft assignment version can be reduced to the hard assignment version by creating $k$ copies of every location in $L$. 

\begin{table}[h]
\begin{adjustbox}{width=\columnwidth,center}
\centering
\setcellgapes{1ex}\makegapedcells
\begin{tabular}{|l|l|l|}
\hline
\# & {\bf Problem} & {\bf Description} \\ \hline
1. & \makecell[l]{$r$-Gather $k$-supplier problem } & \makecell[l]{Given $k$ positive integers: $r_1,\dotsc,r_k$, find clustering $\OC = \{O_1, ..., O_k\}$ \\ with minimum $\Psi^{*}(\OC)$ such that for all $i$, $|O_i| \geq r_{i}$}.\\ \hline
2. & \makecell[l]{$r$-Capacity $k$-supplier problem } & \makecell[l]{Given $k$ positive integers: $r_1,\dotsc,r_k$, find clustering $\OC = \{O_1, ..., O_k\}$ \\ with minimum $\Psi^{*}(\OC)$ such that for all $i$, $|O_i| \leq r_{i}$} \\ \hline
3. & \makecell[l]{Balanced $k$-supplier problem } & \makecell[l]{Given positive integers: $\ell_1,\dotsc,\ell_k$, and $r_1,\dotsc,r_k$, find clustering \\ $\OC = \{O_1, ..., O_k\}$ with minimum $\Psi^{*}(\OC)$ such that for all $i$, $\ell_{i} \leq |O_i| \leq r_{i}$} \\ \hline
4. & \makecell[l]{Chromatic $k$-supplier problem } & \makecell[l]{Given that every client has an associated color,  find a clustering \\ $\OC = \{O_1, ..., O_k\}$ with minimum $\Psi^{*}(\OC)$ such that for all $i$, $O_i$ should \\ not have any two points with the same color.} \\ \hline
5. & \makecell[l]{Fault-tolerant $k$-supplier problem } & \makecell[l]{Given positive integer $l_{x} \leq k$ for every client $x \in C$, \\ find a set $F$ of $k$ centers, such that the maximum assignment cost \\ of $x$ to $l_{x}^{th}$ closest facility is minimized.}  \\ \hline
%
% 8. & \makecell[l]{Balanced Fair $k$-supplier problem \\ $k$-{\tt BFsupplier}} & \makecell[l]{Given the sets of positive integers: $\{\ell_{1},\dotsc,\ell_{k}\}$ and $\{r_1,\dotsc,r_k\}$; \\ a partitioning $C_1,\dotsc,C_{t}$ of the client set $C$, and two fairness vectors \\ $\alpha,\beta \in [0,1]^{t}$, find a clustering $\mathbb{O} = \{O_1,\dotsc,O_k\}$ with minimum $\Psi^{*}(\mathbb{O})$ \\ such that it satisfies that $\beta_j \cdot |O_i| \leq |O_i \cap C_j| \leq \alpha_j \cdot |O_i|$ and $\ell_{i} \leq |C_i| \leq r_i$ \\ for every $i \in [k]$ and $j \in [t]$.} \\ \hline
% %
6. & \makecell[l]{Strongly private $k$-supplier problem} & \makecell[l]{Given a partitioning $C_1,\dotsc,C_{\omega}$ of the client set $C$, and a set of integers: \\ $\{\ell_1,\dotsc,\ell_{\omega}\}$,  find a clustering $\mathbb{O} = \{O_1, ..., O_{k}\}$ with minimum $\Psi^{*}(\mathbb{O})$ that \\ satisfies $|C_j \cap O_i| \geq \ell_{j}$ for every $i \in [k]$ and $j \in [\omega]$.} \\ \hline
7. & \makecell[l]{$\ell$-Diversity $k$-supplier problem } & \makecell[l]{Given a partitioning $C_1,\dotsc,C_{\omega}$ of the client set $C$, a real number $\ell>1$,\\find a clustering $\OC = \{O_1, ..., O_k\}$ with minimum $\Psi^{*}(\OC)$ such that \\ the fraction of points belonging to the same partition inside $O_i$ is $\leq 1/\ell$.} \\ \hline
8. & \makecell[l]{Fair $k$-supplier problem} & \makecell[l]{Given $\omega$ color classes $C_1,\dotsc,C_{\omega}$ (not necessarily disjoint), such that every \\ $C_j$ is a subset of the client set $C$, and two fairness vectors $\alpha,\beta$ $\in [0,1]^{\omega}$, \\ find a clustering $\mathbb{O} = \{O_1,\dotsc,O_k\}$ with minimum $\Psi^{*}(\mathbb{O})$  such that it \\ satisfies that $\beta_j \cdot |O_i| \leq |O_i \cap C_j| \leq \alpha_j \cdot |O_i|$  for every $i \in [k]$ and $j \in [\omega]$.} \\ \hline
%
% 8. & \makecell[l]{Colorful $k$-supplier problem \\ $k$-{\tt Cosupplier}} & \makecell[l]{The client set $C$ is partitioned into $\omega$ color classes $C_1,\dotsc,C_{\omega}$ with \\ coverage requirements $p_1,\dotsc,p_{\omega}$. \\, find a center set $F = \{f_1,\dotsc,f_k\}$ with minimum $\Psi^{*}(\mathbb{O})$ that \\ satisfies $|C_j \cap O_i| \geq \ell_{j}$ for every $i \in [k]$ and $j \in [t]$.} \\ \hline
% %
% 9. & \makecell[l]{Non-Uniform $k$-supplier problem \\ $k$-{\tt NUsupplier}} & \makecell[l]{Given $k$ radii $r_1,\dotsc,r_k$, find a clustering $\mathbb{O} = \{O_1, ..., O_{k}\}$ and a center set \\ $F = \{f_1,\dotsc,f_k\}$ that satisfies $\max_{x \in O_i} \big\{ d(x,f_i) \big\} \leq r_i$ for every $i \in [k]$.} \\ \hline
%
% 11. & \makecell[l]{Outlier $k$-supplier problem \\ $k$-{\tt Osupplier}} & \makecell[l]{Find a subset $Z\subseteq C$ of size $m$ and a clustering $\mathbb{O}' = \{O_1', ..., O_{k}'\}$ of the \\ set $C' \coloneqq C \setminus Z$, such that $\Psi^{*}(\mathbb{O}')$ is minimized.} \\ \hline
% %
% 12. & \makecell[l]{Balanced Outlier $k$-supplier problem \\ $k$-{\tt BOsupplier}} & \makecell[l]{Given the sets of positive integers: $\{\ell_{1},\dotsc,\ell_{k}\}$ and $\{r_1,\dotsc,r_k\}$, \\ find a subset $Z\subseteq C$ of size $m$ and a clustering $\mathbb{O}' = \{O_1', ..., O_{k}'\}$ of the \\ set $C' \coloneqq C \setminus Z$, with minimum $\Psi^{*}(\mathbb{O}')$ such that it satisfies $\ell_{i} \leq |O'_{i}| \leq r_i$} \\ \hline
%
\end{tabular}
\end{adjustbox}
\caption{Constrained $k$-supplier problems with $\mathsf{FPT}$ time partition algorithms (see Appendix~\ref{appendix:partition_algorithm}).
}\label{table:definitions}
\end{table}

Now, we describe a general algorithmic technique to solve any problem that satisfies the definition of the constrained $k$-supplier problem. More precisely, we show that any constrained $k$-supplier problem can be solved using two basic ingredients: the \emph{list $k$-clustering problem} and a \emph{partition algorithm}. The notion of the list $k$-clustering problem was formalised by Bhattacharya~\emph{et al.}~\cite{constrained:2016_bjk} in the context of the $k$-median and $k$-means objectives. We extend the notion to the $k$-supplier objective as follows:
\begin{definition}[List $k$-Supplier]
Let $\mathcal{I} = (L,C,k,d,z)$ be any instance of the $k$-supplier problem. The goal of the problem is: given $\mathcal{I}$, find a list $\mathcal{L}$ of $k$-center-sets (i.e., each element of the list is a set of $k$ elements from $L$) such that for any partitioning  $\mathbb{O} = \{O_1,\dotsc,O_k\}$ of the client set $C$, the list $\mathcal{L}$ contains a $k$-center-set $F$ such that $\Psi(F,\OC) \leq \alpha \cdot \Psi^{*}(\OC)$ for $\alpha = \rp{3}{z}$. For the $k$-center objective $\alpha = \rp{2}{z}$.
\end{definition}

\noindent Furthermore, we define a partition algorithm as follows:
\begin{definition}[Partition Algorithm]
Let $\mathcal{I} = (L,C,k,d,z)$ be any instance of the $k$-supplier problem, and let $\mathbb{S} = \{\OC_{1}, \OC_{2}, \dotsc,\OC_{t}\}$ be a collection of clusterings of $C$. Given a center set $F \subseteq L$, a partition algorithm outputs a clustering in $\mathbb{S}$ that has the least clustering cost $\Psi(F,\mathbb{O})$ with respect to $F$.
\end{definition}
\noindent Note that the set $\mathbb{S}$ differs for different constrained clustering problems; therefore, the partition algorithm differs for different constrained clustering problems. 
The simplest example of the partition algorithm is for the unconstrained $k$-supplier problem. For the unconstrained $k$-supplier problem, the set $\mathbb{S}$ is the collection of all possible $k$-partitionings of $C$ and the partition algorithm is simply the standard Voronoi partitioning algorithm. 
That is, the algorithm assigns a client to its closest facility location in $F$ and the obtained partitioning has the least clustering cost with respect to $F$.

Now, suppose that we have an algorithm for the list $k$-supplier problem and a partition algorithm for a particular constrained $k$-supplier problem.
Then we can obtain an approximation algorithm for that constrained $k$-supplier problem. 
The following theorem proves this result:
\begin{theorem}\label{theorem:list_alpha_approx}
Let $\mathcal{I} = (L,C,k,d,z,\mathbb{S})$ be any instance of the constrained $k$-supplier problem, and let $A_{\mathbb{S}}$ be a partition algorithm for $\mathbb{S}$ with running time $T_{A}$.
Let $B$ be any algorithm for the list $k$-supplier problem with running time $T_{B}$.
%Suppose we are given a list $\mathcal{L}$, then we can design 
Then, there is an algorithm that outputs a clustering $\OC \in \mathbb{S}$ that is an $\alpha$-approximate solution for the constrained $k$-supplier instance $\mathcal{I}$. For the $k$-supplier objective, $\alpha = \rp{3}{z}$, and for the $k$-center objective, $\alpha = \rp{2}{z}$. Moreover, the running time of the algorithm is $O(T_B + |\mathcal{L}| \cdot T_A)$.
\end{theorem}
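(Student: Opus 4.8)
The plan is to give the natural ``enumerate-and-select'' algorithm and to argue correctness with a short chain of inequalities. First I would run the list $k$-supplier algorithm $B$ on the instance $\mathcal{I}' = (L,C,k,d,z)$ to obtain the list $\mathcal{L}$ of $k$-center-sets. Then, for each $F \in \mathcal{L}$, I would invoke the partition algorithm $A_{\mathbb{S}}$ on $F$ to obtain a clustering $\OC_F \in \mathbb{S}$ achieving $\Psi(F,\OC_F) = \min_{\OC \in \mathbb{S}} \Psi(F,\OC)$, and record the value $\Psi(F,\OC_F)$. Finally, among all $F \in \mathcal{L}$, I would output the clustering $\hat\OC = \OC_{\hat F}$ for which $\Psi(\hat F, \OC_{\hat F})$ is smallest.

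For correctness, let $\OC^{*} \in \mathbb{S}$ be an optimal solution to the constrained $k$-supplier instance $\mathcal{I}$, i.e. $\Psi^{*}(\OC^{*}) \le \Psi^{*}(\OC)$ for every $\OC \in \mathbb{S}$. The crucial observation is that, although the algorithm never computes $\OC^{*}$, the definition of the list $k$-supplier problem guarantees a $k$-center-set $F^{*} \in \mathcal{L}$ with $\Psi(F^{*},\OC^{*}) \le \alpha \cdot \Psi^{*}(\OC^{*})$, since that guarantee holds for \emph{every} partitioning of $C$ and $\OC^{*}$ is one such partitioning. Running $A_{\mathbb{S}}$ on $F^{*}$ produces $\OC_{F^{*}} \in \mathbb{S}$ with $\Psi(F^{*},\OC_{F^{*}}) \le \Psi(F^{*},\OC^{*}) \le \alpha \cdot \Psi^{*}(\OC^{*})$. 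By the minimality enforced in the selection step, $\Psi(\hat F,\hat\OC) \le \Psi(F^{*},\OC_{F^{*}})$, and since $\Psi^{*}(\hat\OC) = \min_{F} \Psi(F,\hat\OC) \le \Psi(\hat F,\hat\OC)$, we conclude $\Psi^{*}(\hat\OC) \le \alpha \cdot \Psi^{*}(\OC^{*})$; that is, $\hat\OC \in \mathbb{S}$ is an $\alpha$-approximate solution to $\mathcal{I}$. The value of $\alpha$, namely $\rp{3}{z}$ for the $k$-supplier objective and $\rp{2}{z}$ for the $k$-center objective, is inherited verbatim from the corresponding guarantee in the definition of the list $k$-supplier problem.

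For the running time, producing $\mathcal{L}$ costs $T_B$; the loop makes $|\mathcal{L}|$ calls to $A_{\mathbb{S}}$ at cost $T_A$ each, together with $O(|\mathcal{L}|)$ bookkeeping to maintain the best value (evaluating $\Psi(F,\OC_F)$ for a fixed $F$ and clustering is polynomial and can be absorbed into $T_A$), giving $O(T_B + |\mathcal{L}| \cdot T_A)$ overall. There is essentially no hard step here; the only points requiring care are the quantifier order in the list $k$-supplier guarantee --- it is ``for every partitioning $\OC$ there exists $F \in \mathcal{L}$'', which is exactly what permits instantiating it at the unknown $\OC^{*}$ --- and the fact that the output's cost as a constrained solution is $\Psi^{*}(\hat\OC)$, which is only upper bounded by (not necessarily equal to) the recorded quantity $\Psi(\hat F,\hat\OC)$; but an upper bound is all that is needed.
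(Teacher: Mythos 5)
Your proposal is correct and follows essentially the same route as the paper's proof in Appendix~B: run $B$ to get $\mathcal{L}$, run $A_{\mathbb{S}}$ on each $k$-center-set, return the minimum, and instantiate the list guarantee at the unknown optimal clustering $\OC^{*}$. Your closing remark that the output's quality is measured by $\Psi^{*}(\hat\OC) \le \Psi(\hat F,\hat\OC)$ is a small but welcome clarification that the paper leaves implicit.
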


\begin{proof}
The proof easily follows from the definitions of the list $k$-supplier problem and the partition algorithm; therefore, we defer the proof to Appendix~\ref{appendix:proof_theorem_3}.
\end{proof}

The goal now becomes to design an algorithm for the list $k$-supplier problem and the partition algorithms for different constrained clustering problems. 
In Section~\ref{section:list_k_supplier}, we design an $\mathsf{FPT}$ time algorithm for the list $k$-supplier problem. 
Formally, we state the result as follows:

\begin{theorem}\label{theorem:list_k_supplier}
Given an instance $\mathcal{I} = (L,C,k,d,z)$ of the $k$-supplier problem. There is an algorithm for the list $k$-supplier problem that outputs a list $\mathcal{L}$ of size at most $k^{O(k)} \cdot n$. Moreover, the running time of the algorithm is $k^{O(k)} \cdot n + O(n^2 \log n)$, which is $\mathsf{FPT}$ in $k$.
\end{theorem}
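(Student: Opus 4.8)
The plan is to reduce the list $k$-supplier problem to enumerating, for the unknown optimal partition, one ``leader'' client per cluster, and the whole argument rests on one structural observation. Fix any partitioning $\mathbb{O} = \{O_1,\dots,O_k\}$ of $C$ and let $F^{*} = \{f^{*}_1,\dots,f^{*}_k\}$ be a $k$-center-set attaining $\Psi^{*}(\mathbb{O})$, so that, after relabelling, $\max_{x\in O_i} d(x,f^{*}_i)^{\zl} \le \Psi^{*}(\mathbb{O})$ for every $i$; write $r = \Psi^{*}(\mathbb{O})^{1/\zl}$. If for each $i$ we choose an arbitrary ``leader'' $c_i \in O_i$ and let $g_i$ be a nearest facility to $c_i$ in $L$, then for every $x \in O_i$ two uses of the triangle inequality give $d(x,g_i) \le d(x,f^{*}_i) + d(f^{*}_i,c_i) + d(c_i,g_i) \le r + r + r = 3r$, using $c_i \in O_i$ for the middle term and $f^{*}_i \in L$ (hence $d(c_i,g_i) \le d(c_i,f^{*}_i)$) for the last; raising to the power $\zl$ yields $\max_{x\in O_i} d(x,g_i)^{\zl} \le \rp{3}{\zl}\cdot\Psi^{*}(\mathbb{O})$. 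Therefore $F = \{g_1,\dots,g_k\}$ satisfies $\Psi(F,\mathbb{O}) \le \rp{3}{\zl}\cdot\Psi^{*}(\mathbb{O})$; for the $k$-center problem $L = C$, so taking $g_i = c_i$ and dropping the middle term gives $\Psi(F,\mathbb{O}) \le \rp{2}{\zl}\cdot\Psi^{*}(\mathbb{O})$. Consequently it suffices to output a list that, for \emph{every} partitioning $\mathbb{O}$, contains the center set induced in this way by some choice of one leader per cluster.

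Since $\mathbb{O}$ is unknown, I would generate such center sets by a $k$-round branching procedure that grows partial $k$-center-sets. Having fixed a guess $\rho$ for $\Psi^{*}(\mathbb{O})$, it keeps a partial center set $F'$ (initially empty); in each round it looks at the ``uncovered'' clients — those at $d(\cdot,\cdot)^{\zl}$-distance more than $\rp{3}{\zl}\cdot\rho$ from $F'$ — branches over a set of candidate leaders selected among them, and for each candidate $c$ recurses with $F'$ replaced by $F' \cup \{g\}$, where $g$ is a nearest facility to $c$; whenever no client is uncovered it outputs $F'$ padded arbitrarily to $k$ facilities, and $\mathcal{L}$ is the collection of all outputs. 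For correctness fix $\rho = \Psi^{*}(\mathbb{O})$: by the observation above, once a cluster's leader has been added, that whole cluster lies within $\rp{3}{\zl}\rho$ of $F'$, so every uncovered client belongs to a cluster that still lacks a leader; hence, along the branch that in each round picks a leader from such a cluster, all $k$ clusters receive their own leaders, at most $k$ rounds occur, and the output is exactly the center set of the previous paragraph, giving $\Psi(F',\mathbb{O}) \le \rp{3}{\zl}\cdot\Psi^{*}(\mathbb{O})$ (and $\rp{2}{\zl}$ in the $k$-center case). Such a branch exists provided the candidate-leader set chosen in each round meets every leaderless cluster.

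Bounding $|\mathcal{L}|$ by $k^{O(k)}\cdot n$ and the running time by $k^{O(k)}\cdot n + O(n^2\log n)$ is the remaining — and, I expect, the main — technical work. The additive $O(n^2\log n)$ is preprocessing: compute and sort all pairwise distances $d(x,\ell)$ with $x\in C$, $\ell\in C\cup L$; afterwards each nearest-facility query and each ``uncovered'' test takes $n^{O(1)}$ time, so every recursion node is cheap. The crux is the branching. Taking \emph{all} uncovered clients as candidate leaders branches up to $n$ ways at each of up to $k$ levels, which only gives $n^{O(k)}$ and is not $\mathsf{FPT}$. One must instead produce, in each round, a set of only $k^{O(1)}$ candidate leaders that nonetheless meets every leaderless cluster; isolating such a set cleanly, and verifying that it does meet all leaderless clusters, is the step I expect to be delicate. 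It should exploit that the uncovered points lie in at most $k$ clusters, each of diameter at most $2r$ in the metric, so that a bounded-in-$k$ subfamily of the uncovered points — plausibly one produced by a greedy farthest-first type selection — already suffices. Granting this, the recursion tree has depth $k$ and branching $k^{O(1)}$, hence $k^{O(k)}$ leaves for each guess $\rho$; carrying this out over a suitable family of $O(n)$ guesses of $\Psi^{*}(\mathbb{O})$ (all that is needed once the distances are precomputed) yields $|\mathcal{L}| = k^{O(k)}\cdot n$ and total time $k^{O(k)}\cdot n + O(n^2\log n)$, as claimed.
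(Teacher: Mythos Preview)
Your structural observation --- one arbitrary leader per cluster, replaced by its nearest facility, gives a $\rp{3}{z}$-approximation --- is exactly what the paper proves in Lemma~\ref{lemma:ksupplier_conversion} (specialised to $m=0$). Where you diverge is in how to \emph{enumerate} the right $k$-tuples. The paper does not guess $\rho$ or branch at all. It first computes a single $(1,O(\ln n))$ bi-criteria solution $S\subseteq L$ for the unconstrained instance: for the correct guess of $OPT$, take for each $f\in L$ the set of clients within $OPT^{1/z}$ of $f$ and run the standard greedy for max $k$-coverage; this returns $|S|=O(k\ln n)$ facilities with every client within $OPT^{1/z}$ of $S$ (Lemma~\ref{lemma:bicriteria}). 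Since $\Psi^{*}(\mathbb{O})\ge OPT$ for \emph{every} partitioning $\mathbb{O}$, the nearest element of $S$ to any leader $c_i$ is at distance at most $\Psi^{*}(\mathbb{O})^{1/z}$, so your triangle-inequality calculation applies verbatim with proxies drawn from $S$ rather than from all of $L$. The list is then simply all $k$-subsets of $S$, and $|S|^k=(O(k\ln n))^k=k^{O(k)}\cdot n$ is immediate --- no per-$\mathbb{O}$ guess of $\rho$ is needed, because the single threshold $OPT$ works for all $\mathbb{O}$ simultaneously.

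Your branching route has, besides the candidate-selection step you already flag as delicate, a correctness concern you did not address. Once $F'$ contains two or more facilities, a still-leaderless cluster $O_j$ may become covered by the \emph{union} of $F'$ --- each of its points within $3\rho^{1/z}$ of some facility --- without any single facility in $F'$ being within $3\rho^{1/z}$ of all of $O_j$. At that moment $O_j$ contributes no uncovered client, so no branch that restricts candidates to uncovered clients can ever place a leader in $O_j$; the best single facility in the final $F'$ for $O_j$ may then be as far as $5r$ away (diameter $2r$ plus coverage radius $3r$), yielding only $\rp{5}{z}$. Your assertion that ``all $k$ clusters receive their own leaders'' along some branch tacitly assumes this never blocks every ordering, which requires an argument. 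Even setting that aside, the farthest-first reduction to $k^{O(1)}$ candidates is not clearly sound: two leaderless clusters of diameter at most $2r$ can sit within $2r$ of one another, and nothing prevents all farthest-first picks from landing in just one of them. The paper's up-front bi-criteria computation sidesteps both difficulties.
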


\noindent Using Theorems~\ref{theorem:list_alpha_approx} and~\ref{theorem:list_k_supplier}, we further obtain the following two corollaries:

\begin{corollary} [Main Result]\label{corollary:1}
For any constrained version of the $k$-supplier problem that has a partition algorithm with running time $T$, there exists a $\rp{3}{z}$ approximation algorithm with running time $T \cdot k^{O(k)} \cdot n + O(n^2 \log n)$.
\end{corollary}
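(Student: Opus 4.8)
The plan is to combine Theorem~\ref{theorem:list_alpha_approx} and Theorem~\ref{theorem:list_k_supplier} in the obvious way. Given a constrained $k$-supplier instance $\mathcal{I} = (L,C,k,d,z,\mathbb{S})$ admitting a partition algorithm $A_{\mathbb{S}}$ with running time $T$, I would first run the list $k$-supplier algorithm $B$ of Theorem~\ref{theorem:list_k_supplier} on the underlying (unconstrained) instance $(L,C,k,d,z)$. By that theorem, this produces a list $\mathcal{L}$ of $k$-center-sets with $|\mathcal{L}| \le k^{O(k)} \cdot n$, in time $k^{O(k)} \cdot n + O(n^2 \log n)$, with the guarantee that for \emph{every} partitioning $\mathbb{O}$ of $C$ there is some $F \in \mathcal{L}$ with $\Psi(F,\mathbb{O}) \le \rp{3}{z} \cdot \Psi^{*}(\mathbb{O})$.

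Next I would invoke Theorem~\ref{theorem:list_alpha_approx} with this list: feeding $\mathcal{L}$ and the partition algorithm $A_{\mathbb{S}}$ into the procedure described there yields a clustering $\mathbb{O} \in \mathbb{S}$ that is an $\rp{3}{z}$-approximate solution for $\mathcal{I}$, with running time $O(T_B + |\mathcal{L}| \cdot T_A)$. Substituting $T_B = k^{O(k)} \cdot n + O(n^2 \log n)$, $|\mathcal{L}| = k^{O(k)} \cdot n$, and $T_A = T$, the running time becomes $O\big(k^{O(k)} \cdot n + n^2 \log n + k^{O(k)} \cdot n \cdot T\big)$, which simplifies to $T \cdot k^{O(k)} \cdot n + O(n^2 \log n)$ as claimed. (Here $T$ is at least linear, so the additive $k^{O(k)} \cdot n$ term from $T_B$ is absorbed into $T \cdot k^{O(k)} \cdot n$; if one wants to be careful one just keeps both terms, and they coincide up to the stated $O(\cdot)$.)

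There is essentially no obstacle here: the corollary is a direct composition of two black boxes, and the only thing to check is that the running-time arithmetic collapses to the stated bound and that the approximation factor $\rp{3}{z}$ is preserved verbatim (the list guarantee and the partition-algorithm guarantee both match the $\rp{3}{z}$ factor for the $k$-supplier objective, so nothing is lost). The one modelling point worth stating explicitly is that Theorem~\ref{theorem:list_k_supplier} requires no knowledge of $\mathbb{S}$ — the list is computed from $(L,C,k,d,z)$ alone — so the same list works simultaneously for every constrained variant, and only the partition algorithm $A_{\mathbb{S}}$ (and hence $T$) changes from problem to problem. I would therefore present the proof as a two-line application of the two theorems followed by the running-time simplification, and defer it to the appendix if space is tight, exactly as was done for Theorem~\ref{theorem:list_alpha_approx}.
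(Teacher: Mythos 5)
Your proposal is correct and is exactly the paper's route: the corollary is obtained by composing Theorem~\ref{theorem:list_k_supplier} (to build the list $\mathcal{L}$ with $|\mathcal{L}| \le k^{O(k)} \cdot n$ in time $k^{O(k)} \cdot n + O(n^2 \log n)$) with Theorem~\ref{theorem:list_alpha_approx} (running $A_{\mathbb{S}}$ on each $F \in \mathcal{L}$ and returning the cheapest clustering), and the running-time arithmetic collapses as you describe. Your observation that the list is computed from $(L,C,k,d,z)$ alone and is therefore shared across all constrained variants is also precisely the point the paper relies on.
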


\begin{corollary} [Main Result]\label{corollary:2}
For any constrained version of the $k$-center problem that has a partition algorithm with running time $T$, there exists a $\rp{2}{z}$ approximation algorithm with running time $T \cdot k^{O(k)} \cdot n + O(n^2 \log n)$.
\end{corollary}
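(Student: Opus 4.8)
The plan is to derive Corollary~\ref{corollary:2} as a direct composition of the two ingredients already established: the list $k$-supplier construction of Theorem~\ref{theorem:list_k_supplier} and the reduction of Theorem~\ref{theorem:list_alpha_approx}. The one observation that must be made carefully is that the $k$-center problem is exactly the special case of the $k$-supplier problem with $L = C$, and that the List $k$-Supplier definition explicitly grants the sharper guarantee $\alpha = \rp{2}{z}$ in this regime. This is why the corollary produces a $\rp{2}{z}$ factor rather than the $\rp{3}{z}$ factor that governs the general supplier setting.

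First I would instantiate Theorem~\ref{theorem:list_k_supplier} on the underlying $k$-center instance $\mathcal{I} = (L, C, k, d, z)$, where $L = C$. This yields, in time $T_B = k^{O(k)} \cdot n + O(n^2 \log n)$, a list $\mathcal{L}$ of $k$-center-sets with $|\mathcal{L}| \le k^{O(k)} \cdot n$ such that for every partitioning $\OC$ of $C$ the list contains some $F$ with $\Psi(F, \OC) \le \rp{2}{z} \cdot \Psi^{*}(\OC)$. Call this procedure algorithm $B$; it is a valid List $k$-Supplier algorithm achieving the $k$-center guarantee.

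Next I would invoke Theorem~\ref{theorem:list_alpha_approx} with $B$ as the list algorithm and the assumed partition algorithm $A_{\mathbb{S}}$ (of running time $T = T_A$) as the partition subroutine. Since we are in the $k$-center regime, Theorem~\ref{theorem:list_alpha_approx} outputs a clustering $\OC \in \mathbb{S}$ that is $\rp{2}{z}$-approximate for the constrained instance, establishing the approximation claim. For the running time, Theorem~\ref{theorem:list_alpha_approx} guarantees $O(T_B + |\mathcal{L}| \cdot T_A)$; substituting the bounds from Theorem~\ref{theorem:list_k_supplier} gives
$$O\big( (k^{O(k)} \cdot n + n^2 \log n) + (k^{O(k)} \cdot n) \cdot T \big).$$
Since any partition algorithm must at least examine its input we have $T \ge 1$, so the stand-alone $k^{O(k)} \cdot n$ term is absorbed into $T \cdot k^{O(k)} \cdot n$, yielding the stated bound $T \cdot k^{O(k)} \cdot n + O(n^2 \log n)$.

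The argument is essentially a bookkeeping composition, so there is no genuinely hard step; the two places demanding attention are (i) correctly invoking the $\alpha = \rp{2}{z}$ branch of the List $k$-Supplier guarantee, which is legitimate precisely because $L = C$ for $k$-center, and (ii) the running-time arithmetic, specifically verifying that the additive $n^2 \log n$ term produced during list construction is \emph{not} multiplied by $T$ and therefore survives as the separate additive summand in the final expression.
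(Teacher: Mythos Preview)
Your proposal is correct and follows exactly the approach the paper intends: the corollary is stated immediately after Theorem~\ref{theorem:list_k_supplier} as a direct consequence of Theorems~\ref{theorem:list_alpha_approx} and~\ref{theorem:list_k_supplier}, and you have simply spelled out that composition (including the $L=C$ specialization for the $\rp{2}{z}$ factor and the running-time bookkeeping) in more detail than the paper does.
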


The only remaining task is to design partition algorithms for different constrained clustering problems. Note that all the problems described in Table~\ref{table:definitions} satisfy the definition of the constrained $k$-supplier problem. Furthermore, in Appendix~\ref{appendix:partition_algorithm}, we design $\mathsf{FPT}$ time partition algorithms for these problems. Therefore, the above two corollaries imply $\mathsf{FPT}$ time $\rp{3}{z}$ and $\rp{2}{z}$
approximation algorithms for these problems. For the first six problems in Table~\ref{table:definitions}, the running time of the partition algorithms is $k^{k} \cdot n^{O(1)}$. 
For the seventh problem in the table, i.e., the $\ell$-diversity $k$-supplier problem, the running time of the partition algorithm is $(\omega k)^{\omega k} \cdot n^{O(1)}$.  For the last problem in the table, i.e., the fair $k$-supplier problem, the running time of the partition algorithm is $(k \Gamma)^{O(k  \Gamma)} \cdot n^{O(1)}$, where $\Gamma$ is the number of distinct collection of color classes induced by the colors of clients. We define the notation $\Gamma$ more clearly in Appendix~\ref{appendix:fair_partition}.
Thus, we get $\mathsf{FPT}$ time $\rp{3}{z}$ and $\rp{2}{z}$ approximation guarantees for these problems with respect to $k$-supplier and $k$-center objectives, respectively. Formally, we state these results as follows:
\begin{theorem}\label{theorem:1}
There is an $\mathsf{FPT}$ time $\rp{3}{z}$ approximation algorithm for the following constrained versions of the $k$-supplier problem:
\begin{enumerate}
    \item $r$-gather $k$-supplier problem
    \item $r$-capacity $k$-supplier problem
    \item Balanced $k$-supplier problem
    \item Chromatic $k$-supplier problem
    \item Fault-tolerant $k$-supplier problem
    \item Strongly private $k$-supplier problem
\end{enumerate}
The running time of the algorithm is $k^{O(k)} \cdot n^{O(1)}$. Furthermore, for the $k$-center version, the approximation guarantee is $\rp{2}{z}$.
\end{theorem}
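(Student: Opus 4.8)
The plan is to reduce everything to Corollaries~\ref{corollary:1} and~\ref{corollary:2}: once we exhibit, for a given constrained $k$-supplier (resp.\ $k$-center) problem, a partition algorithm running in time $T = k^{O(k)} \cdot n^{O(1)}$, these corollaries immediately produce a $\rp{3}{z}$ (resp.\ $\rp{2}{z}$) approximation with running time $T \cdot k^{O(k)} \cdot n + O(n^2 \log n) = k^{O(k)} \cdot n^{O(1)}$, which is $\mathsf{FPT}$ in $k$. So the whole task is to build such partition algorithms for the six problems. Five of them ($r$-gather, $r$-capacity, balanced, chromatic, strongly private) directly satisfy Definition~\ref{definition:constrained_k_supplier}, so only the partition routine is needed. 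The fault-tolerant $k$-supplier problem does not fit the framework, since its objective is not of the form $\Psi(F,\OC)$; for it I would first give an approximation-preserving reduction to the chromatic $k$-supplier problem (deferred to Appendix~\ref{appendix:fault_tolerant}) and then reuse the chromatic partition algorithm on the reduced instance.

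For the partition algorithm, fix the center set $F = \{f_1,\dots,f_k\}$ handed to us. The key observation is that $\min_{\OC \in \mathbb{S}} \Psi(F,\OC)$ is attained at one of the at most $nk \le n^2$ values $d(x,f_j)^{\zl}$, so I would enumerate candidate thresholds $\rho$ and, for each, decide whether some $\OC \in \mathbb{S}$ has $\Psi(F,\OC) \le \rho$; the least feasible $\rho$ is the answer. For a fixed $\rho$, note that $\Psi(F,\OC) \le \rho$ holds iff every part $O_i$ is entirely contained in the ball $B_j(\rho) := \{x \in C : d(x,f_j)^{\zl} \le \rho\}$ for at least one center $f_j$ (namely its best center $f_i^{*}$). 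Hence feasibility becomes: assign each client to a center whose $\rho$-ball contains it, so that the resulting $k$ groups form a clustering in $\mathbb{S}$. I would model this as an integral feasible-flow / bipartite $b$-matching problem between clients and centers, with the edge $(x,f_j)$ present iff $x \in B_j(\rho)$, and the side constraints of each problem encoded by capacities: per-center lower and/or upper bounds for $r$-gather, $r$-capacity and balanced; a layered network with one layer per class $C_j$ enforcing the bounds $|C_j \cap O_i| \ge \ell_j$ for strongly private; and unit capacities on each (color, center) pair for chromatic (forbidding two equally colored clients in one part). Standard max-flow then decides each such instance in $n^{O(1)}$ time.

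The one genuine subtlety — and the step I expect to be the main obstacle — is that in the definitions of $\mathbb{S}$ the bounds ($r_i$, $\ell_i$, etc.) are indexed by the abstract parts $O_1,\dots,O_k$, whereas in the flow network capacities must be attached to concrete centers $f_1,\dots,f_k$. For the symmetric cases (e.g.\ chromatic, or uniform $r$-gather, or whenever the bound multiset is the same for every part) this is a non-issue and the flow alone suffices. In general I would resolve it by guessing, at a multiplicative cost of at most $k^{k} = k^{O(k)}$, the correspondence between the $k$ index slots and the $k$ centers, running the flow for each guess and keeping the best; this is precisely where the $k^{O(k)}$ factor in the partition algorithm's running time originates. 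One must then verify soundness and completeness of this enumeration — that every $\OC \in \mathbb{S}$ with $\Psi(F,\OC)\le\rho$ arises from some guessed correspondence together with a valid flow, and conversely — which is a routine but careful case analysis per constraint. Collecting running times, each of the six partition algorithms runs in $k^{O(k)} \cdot n^{O(1)}$ time (the $\ell$-diversity and fair variants are \emph{not} claimed in this theorem), and substituting $T = k^{O(k)} \cdot n^{O(1)}$ into Corollaries~\ref{corollary:1} and~\ref{corollary:2} yields the stated $\mathsf{FPT}$ running time $k^{O(k)} \cdot n^{O(1)}$ together with the $\rp{3}{z}$ and $\rp{2}{z}$ guarantees for the $k$-supplier and $k$-center versions, respectively.
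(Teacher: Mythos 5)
Your proposal matches the paper's proof in all essentials: the reduction to Corollaries~\ref{corollary:1} and~\ref{corollary:2}, the approximation-preserving reduction of fault-tolerant to chromatic, and a partition algorithm that enumerates the $O(n^2)$ candidate thresholds, guesses the part-to-center correspondence at cost $k^k$, and decides feasibility via an integral circulation/flow network with lower and upper bound capacities encoding the cluster-size and per-color constraints. The only cosmetic difference is that the paper packages the first six problems into a single ``hybrid $k$-supplier'' problem handled by one unified flow network rather than six separate constructions.
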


\noindent All the above mentioned problems generalize the unconstrained $k$-supplier and $k$-center problems. Therefore, from Theorems~\ref{theorem:hardness_ksupplier} and~\ref{theorem:hardness_kcenter}, it follows that better approximation guarantees are not possible for the above problems in $\mathsf{FPT}$ time assuming Gap-ETH. Thus, it settles the complexity of the above problems, parameterized by $k$. For the $\ell$-diversity and fair $k$-supplier problems, we obtain the following results:

\begin{theorem}\label{theorem:2}
There is $\mathsf{FPT}$ time $\rp{3}{z}$ and $\rp{2}{z}$ approximation algorithms for the $\ell$-diversity $k$-supplier and $k$-center problems, respectively, with running time $(k \omega)^{O(k  \omega)} \cdot n^{O(1)}$.
\end{theorem}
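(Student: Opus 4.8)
The plan is to reduce Theorem~\ref{theorem:2} to the construction of an $\mathsf{FPT}$ \emph{partition algorithm} for the $\ell$-diversity constraint and then invoke Corollaries~\ref{corollary:1} and~\ref{corollary:2}. First I would verify that the $\ell$-diversity $k$-supplier problem is an instance of Definition~\ref{definition:constrained_k_supplier}: taking $\mathbb{S} = \{\,\OC = \{O_1,\dots,O_k\} : |O_i \cap C_j| \le |O_i|/\ell \text{ for all } i \in [k],\, j \in [\omega]\,\}$, the constrained $k$-supplier problem over this $\mathbb{S}$ is exactly the $\ell$-diversity problem, and every empty part trivially lies in $\mathbb{S}$. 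By Corollaries~\ref{corollary:1} and~\ref{corollary:2} it then suffices to produce, for a given center set $F \subseteq L$, a clustering in $\mathbb{S}$ minimizing $\Psi(F,\OC)$ in time $(k\omega)^{O(k\omega)} \cdot n^{O(1)}$; feeding $T = (k\omega)^{O(k\omega)}\cdot n^{O(1)}$ into those corollaries yields the claimed running time $(k\omega)^{O(k\omega)} \cdot n^{O(1)}$ together with the $\rp{3}{z}$ and $\rp{2}{z}$ guarantees for the $k$-supplier and $k$-center objectives.

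For the partition algorithm, fix $F = \{f_1,\dots,f_k\}$. The value $\min_{\OC \in \mathbb{S}} \Psi(F,\OC)$ is one of the $O(n\cdot|L|)$ numbers $\{\,d(x,f)^z : x \in C,\ f \in F\,\}$, so I would sweep a threshold $\tau$ over these values in increasing order and, for each, test whether there exists an $\ell$-diverse clustering all of whose assignment costs are at most $\tau$, outputting the clustering for the smallest feasible $\tau$. For a fixed $\tau$, a client $x$ may be placed in the cluster served by $f$ only if $d(x,f)^z \le \tau$, and since the union of two $\ell$-diverse sets is again $\ell$-diverse, it is enough to find an assignment $\phi \colon C \to F$ respecting these distance constraints such that each bucket $\phi^{-1}(f)$ satisfies the $\ell$-diversity inequalities; the number of nonempty buckets is then automatically at most $k$, and short buckets are padded with empty clusters.

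The technical heart is this feasibility test. Once the \emph{sizes} $s_f = |\phi^{-1}(f)|$ are fixed, the $\ell$-diversity constraints become the fixed integer caps $|\phi^{-1}(f) \cap C_j| \le \lfloor s_f/\ell\rfloor$, and feasibility of the remaining assignment reduces to a standard integral transportation/max-flow computation (clients grouped by color on one side, facilities with per-color capacities and total demand $s_f$ on the other). The obstacle is that naively guessing the tuple $(s_f)_{f \in F}$ costs $n^{\Theta(k)}$, which is not $\mathsf{FPT}$; the real work is to cut the search down to $(k\omega)^{O(k\omega)}$ guesses. I would exploit that the $C_j$'s partition $C$, so each client carries a single color and the clients induce only $\omega$ distinct ``color types''; guessing how these types interact with the $k$ clusters at the granularity the ratio constraint actually needs involves only $O(k\omega)$ quantities with $O(k\omega)$ relevant values each. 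Equivalently, one can observe that $\ell$-diversity is the special case of the fair $k$-supplier problem with $\alpha_j = 1/\ell$ and $\beta_j = 0$, where the parameter $\Gamma$ (number of distinct color-class collections induced by clients) equals $\omega$, and directly reuse the fair $k$-supplier partition algorithm.

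I expect the size-enumeration step to be the main difficulty. Unlike $r$-gather, $r$-capacity, or chromatic clustering --- where each cluster's admissible composition is pinned down by fixed, clustering-independent bounds --- here the per-color cap $|O_i|/\ell$ depends on the as-yet-unknown cluster size, so the ratio constraint must be ``linearized'' through a bounded guess before the flow machinery applies; doing this with only $(k\omega)^{O(k\omega)}$ guesses rather than $n^{\Theta(k)}$ is the crux. The remaining pieces --- the threshold sweep, the flow feasibility computation, and the correctness of combining $T$ with Corollaries~\ref{corollary:1}--\ref{corollary:2} --- are routine.
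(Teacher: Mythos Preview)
Your high-level plan is exactly the paper's: verify that $\ell$-diversity fits Definition~\ref{definition:constrained_k_supplier}, build a partition algorithm running in $(k\omega)^{O(k\omega)}\cdot n^{O(1)}$ time, and invoke Corollaries~\ref{corollary:1}--\ref{corollary:2}. Your observation that $\ell$-diversity is the disjoint-color special case of the fair problem with $\alpha_j=1/\ell$, $\beta_j=0$ and $\Gamma=\omega$, so that one may ``directly reuse the fair $k$-supplier partition algorithm,'' is precisely what the paper does.

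Where you diverge is in the mechanism behind that partition algorithm. Your sketch proposes to \emph{enumerate} the size profile and then solve a flow; you assert the enumeration can be cut to $(k\omega)^{O(k\omega)}$ because ``only $O(k\omega)$ quantities with $O(k\omega)$ relevant values each'' are involved. That second claim is not justified and is the weak point: the natural quantities are $h_{f,j}=|O_f\cap C_j|$, there are $k\omega$ of them, but each ranges over $\{0,\dots,n\}$, not over $O(k\omega)$ values, so naive enumeration is $n^{O(k\omega)}$. The paper avoids enumeration altogether: it writes the feasibility test (for a guessed threshold $\lambda$) as a \emph{mixed integer linear program} whose only integer variables are the $k\Gamma=k\omega$ counts $h_{f,j}$, solves it via Lenstra-type MILP in $p^{O(p)}\cdot n^{O(1)}$ time with $p=k\omega$, and then rounds the fractional client-to-facility variables $g_{x,f}$ to integers by a separate circulation argument. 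So your fallback sentence is the correct route, but the concrete tool that delivers $(k\omega)^{O(k\omega)}$ is Lenstra's algorithm for MILP with few integer variables, not a bounded enumeration of sizes.
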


\begin{theorem}\label{theorem:3}
There is $\mathsf{FPT}$ time $\rp{3}{z}$ and $\rp{2}{z}$ approximation algorithms for the fair $k$-supplier and $k$-center problems, respectively, with running time $(k \Gamma)^{O(k  \Gamma)} \cdot n^{O(1)}$, where $\Gamma$ denote the number of distinct collection of color classes induced by the colors of clients. Moreover, if the color classes are pair-wise disjoint, then $\Gamma = \omega$, and the running time of the algorithm is $(k \omega)^{O(k  \omega)} \cdot n^{O(1)}$.
\end{theorem}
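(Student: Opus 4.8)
By Corollary~\ref{corollary:1} (for the supplier objective) and Corollary~\ref{corollary:2} (for the center objective), it suffices to design, for the fair $k$-supplier / $k$-center problem, a partition algorithm with running time $T = (k\Gamma)^{O(k\Gamma)}\cdot n^{O(1)}$. Substituting this $T$ into the two corollaries yields total running time $T\cdot k^{O(k)}\cdot n + O(n^2\log n) = (k\Gamma)^{O(k\Gamma)}\cdot n^{O(1)}$, and when the color classes $C_1,\dots,C_\omega$ are pairwise disjoint each client belongs to exactly one class, so there are $\Gamma=\omega$ distinct induced collections of color classes and the bound becomes $(k\omega)^{O(k\omega)}\cdot n^{O(1)}$. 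So the entire content of the proof is the partition algorithm.

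For the partition algorithm we are given a center set $F=\{f_1,\dots,f_k\}\subseteq L$ and must return the fair clustering $\mathbb{O}$ minimising $\Psi(F,\mathbb{O})$. First, group the clients by their induced collection of color classes into $\Gamma$ groups $G_1,\dots,G_\Gamma$; clients in the same group are interchangeable as far as the fairness constraints are concerned. Second, note that the optimal value $\Psi(F,\mathbb{O})$ is always one of the at most $|C|\cdot|F|\le n^2$ numbers $d(x,f_i)^{\zl}$, so we iterate over these candidate thresholds $\tau$, and for each one test whether there is a fair clustering of cost at most $\tau$ with respect to $F$; the smallest feasible $\tau$ is the answer. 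Such a clustering is precisely an assignment of every client $x$ to a facility $f_i$ with $d(x,f_i)^{\zl}\le\tau$ such that, writing $O_i$ for the set of clients sent to $f_i$, every $O_i$ satisfies $\beta_j|O_i|\le|O_i\cap C_j|\le\alpha_j|O_i|$ for all $j$. (We may assume distinct facilities serve distinct non-empty parts: if two parts are sent to the same facility, merging them preserves every fairness ratio, and empty parts, which are trivially fair, are added back to keep $k$ parts.)

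The remaining \emph{fair assignment feasibility} problem is the technical heart. Had we known the sizes $|O_i|$, the fairness constraints would become box constraints on the per-color counts and the instance would reduce to a polynomial-time bipartite flow feasibility instance (clients with out-degree exactly $1$, an arc $(x,f_i)$ present iff $d(x,f_i)^{\zl}\le\tau$, per-(facility, color) lower and upper capacities). Since the sizes are unknown, we instead enumerate a bounded \emph{profile} of the target clustering --- for each facility $f_i$ and each group $G_\ell$ a guess of the number $n_{i\ell}$ of group-$\ell$ clients sent to $f_i$, truncated at $O(k\Gamma)$ --- which gives $(k\Gamma)^{O(k\Gamma)}$ profiles. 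For a fixed profile the fairness ratios become a numerical check (the counts $|O_i|$ and $|O_i\cap C_j|$ are determined by the $n_{i\ell}$), and realisability is tested by the flow instance above with the added requirement that $f_i$ receive exactly $n_{i\ell}$ group-$\ell$ clients, which decomposes into $\Gamma$ independent transportation problems. Returning the least threshold admitting a realisable profile gives the optimal fair clustering with respect to $F$ within the claimed running time.

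The main obstacle is exactly the ratio form of the fairness constraints. For the capacity-type problems of Theorem~\ref{theorem:1} (e.g.\ $r$-gather, $r$-capacity) the cluster constraints are directly flow constraints and need no extra enumeration; by contrast $\beta_j|O_i|\le|O_i\cap C_j|\le\alpha_j|O_i|$ couples the color count with the total cluster size, is not a flow constraint, and its LP relaxation has an integrality gap that cannot be rounded without violating fairness. The profile enumeration is what decouples these constraints into a flow-solvable form; the delicate point --- and the step I expect to require the most care --- is choosing the truncation so that the enumeration is simultaneously fine enough to certify the ratio constraints exactly and coarse enough to keep the number of profiles at $(k\Gamma)^{O(k\Gamma)}$. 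The stated bound for the $k$-center objective, and the $\Gamma=\omega$ specialisation, then follow immediately from the discussion above.
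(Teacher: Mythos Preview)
Your high-level framework matches the paper exactly: invoke Corollaries~\ref{corollary:1} and~\ref{corollary:2}, and reduce everything to a partition algorithm that, for each of the $O(n^2)$ candidate thresholds $\tau$, decides whether a fair assignment of cost at most $\tau$ exists with respect to the given $F$. Grouping clients by their induced color-class pattern into $\Gamma$ groups, and observing that the whole difficulty is the integer profile $(n_{i\ell})_{i\in[k],\ell\in[\Gamma]}$, is also exactly right.

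The gap is the ``truncation at $O(k\Gamma)$'' step, which you yourself flag as the delicate point. As stated it does not work: each $n_{i\ell}$ can be as large as $|G_\ell|\le n$, and the fairness constraints $\beta_j|O_i|\le|O_i\cap C_j|\le\alpha_j|O_i|$ depend on the exact values $|O_i|=\sum_\ell n_{i\ell}$ and $|O_i\cap C_j|=\sum_{\ell:G_\ell\subseteq C_j}n_{i\ell}$, so a coarse bucketing cannot certify them exactly. Naive enumeration is $n^{k\Gamma}$, and there is no evident structural reason why the optimal profile should live in a set of size $(k\Gamma)^{O(k\Gamma)}$ that one could enumerate directly.

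The paper (following Bandyapadhyay~\emph{et al.}~\cite{fairness:2021_constrained_Bandyapadhyay}) sidesteps this by writing the feasibility problem as a \emph{mixed} integer linear program: the $k\Gamma$ profile entries $h_{f,\ell}=n_{i\ell}$ are integer variables, the per-client assignments $g_{x,f}\in[0,1]$ are fractional, and the fairness inequalities are linear in these. Lenstra-type MILP solvers run in time $p^{O(p)}\cdot\mathrm{poly}$ in the number $p$ of integer variables, giving $(k\Gamma)^{O(k\Gamma)}\cdot n^{O(1)}$. The resulting fractional $g_{x,f}$ are then rounded to integral by a circulation argument: because the integer profile $(h_{f,\ell})$ is already fixed, the sums $\sum_{x\in C}g_{x,f}$ and $\sum_{x\in C_j}g_{x,f}$ are determined by the profile and hence preserved by any integral flow matching those demands, so fairness survives rounding. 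In short, the missing ingredient in your sketch is not a clever truncation but the use of Lenstra's algorithm on the $k\Gamma$ integer variables, followed by the flow-based rounding you already described for the fixed-profile case.
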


\noindent The $\ell$-diversity $k$-supplier problem for $\omega = 1$ and $\ell = 1$, is equivalent to the unconstrained $k$-supplier problem. Also, the fair $k$-supplier problem for $\omega = 1$, $\beta_j = 0$, and $\alpha_j = 1$, is equivalent to the unconstrained $k$-supplier problem. Therefore, better approximation guarantees are not possible for these problems in $\mathsf{FPT}$ time parameterized by $k$ and $\Gamma$ (or $\omega$). The statement simply follows from Theorems~\ref{theorem:hardness_ksupplier} and~\ref{theorem:hardness_kcenter}. In the next subsection, we further extend the constrained clustering framework to the outlier setting. The discussion will be analogous to the above discussion.

\subsection{Constrained Clustering With Outliers}
In practical scenarios, it often happens that a few clients are located at a far away locations from the majority of the clients. These clients are called \emph{outliers}. The presence of outliers forces the algorithm to open the facilities close to the outliers. Due to this, the majority of the clients have to pay high assignment cost. This leads to poor clustering of the dataset. To overcome this issue, we cluster the dataset without the outliers. This gives rise to the \emph{outlier $k$-supplier} problem. 
A mathematical formulation of the problem was given by Charikar~\emph{et al.}~\cite{outlier:kcenter_2001_Charikar} for which they gave a polynomial time $3$-approximation algorithm. The following is the definition of the outlier $k$-supplier problem:

\begin{definition}[Outlier $k$-Supplier]\label{definition:outlier_k_supplier}
Let $(\X, d)$ be a metric space. Let $k$ and $m$ be any positive integers, and $z$ be any positive real number. Given a set $L \subseteq \X$ of feasible facility locations, and a set $C \subseteq \X$ of clients, find a subset $Z \subseteq C$ of size at most $m$ clients and a set $F \subseteq L$ of $k$ facilities such that the $k$-supplier cost of $C' \coloneqq C \setminus Z$ is minimized: $\costt{F,C'} \equiv \max_{j \in C'} \Big\{ \min_{i \in F} \big\{ d(i,j)^{\zl} \big\} \Big\}$.
\end{definition}

\noindent Similarly, we generalize the definition of constrained $k$-supplier problem to its outlier version, as follows:

\begin{definition}[Constrained Outlier $k$-Supplier Problem]\label{definition:constrained_outlier_k_supplier}
Let $(L,C,k,d,z,m)$ be any instance of the outlier $k$-supplier problem and $\mathbb{S}$ be any collection of partitionings such that any partitioning $\mathbb{O} \in \mathbb{S}$ is a partitioning of some subset $C' \subseteq C$ of size at least $|C|-m$. Find a clustering $\mathbb{O} = \{O_{1}, O_{2}, \dotsc, O_{k}\}$ in $\mathbb{S}$, that minimizes the objective function:
$\Psi^{*}(\OC) ~ \equiv ~ \min\limits_{\textrm{$k$-center-set $F$}} \Psi(F, \OC)$.
\end{definition}

\noindent Furthermore, we define the \emph{list outlier $k$-supplier problem} and \emph{outlier partition algorithm}, as follows:
\begin{definition}[List Outlier $k$-Supplier]
Let $\mathcal{I} = (L,C,k,d,z,m)$ be any instance of the outlier $k$-supplier problem and $\mathbb{S}$ be the collection of all possible $k$-partitionings of every subset $C' \subseteq C$ of size at least $|C|-m$.
Find a list $\mathcal{L}$ of $k$-center-sets (i.e., each element of the list is a set of $k$ elements from $L$) such that for any partitioning $\mathbb{O} \in \mathbb{S}$, the list $\mathcal{L}$ contains a $k$-center-set $F$ such that $\Psi(F,\OC) \leq \alpha \cdot OPT(\OC)$ for $\alpha = \rp{3}{z}$. For the $k$-center version $\alpha = \rp{2}{z}$.
\end{definition}

\begin{definition}[Outlier Partition Algorithm]
Let $\mathcal{I} = (L,C,k,d,z,m,\mathbb{S})$ be any instance of the constrained outlier $k$-supplier problem. Given a center set $F$, an outlier partition algorithm outputs a clustering in $\mathbb{S}$ that has the least clustering cost with respect to $F$.
\end{definition}

Now, suppose that we have an algorithm for the list outlier $k$-supplier problem and a partition algorithm for a particular constrained outlier $k$-supplier problem. Then, we can obtain an approximation algorithm for that constrained outlier $k$-supplier problem. The following theorem state this result and is analogous to Theorem~\ref{theorem:list_alpha_approx} for the non-outlier version.
\begin{theorem}\label{theorem:outlier_list_alpha_approx}
Let $\mathcal{I} = (L,C,k,d,z,m,\mathbb{S})$ be any instance of the constrained outlier $k$-supplier problem, and let $A_{\mathbb{S}}$ be any outlier partition algorithm for $\mathbb{S}$ with running time $T_{A}$.
Let $B$ be any algorithm for the list outlier $k$-supplier problem with running time $T_{B}$.
%Suppose we are given a list $\mathcal{L}$, then we can design 
Then, there is an algorithm that outputs a clustering $\OC \in \mathbb{S}$ that is an $\alpha$-approximate solution for the constrained outlier $k$-supplier instance $\mathcal{I}$. For the $k$-supplier objective, $\alpha = \rp{3}{z}$, and for the $k$-center objective, $\alpha = \rp{2}{z}$.
Moreover, the running time of the algorithm is $O(T_B + |\mathcal{L}| \cdot T_A)$.
\end{theorem}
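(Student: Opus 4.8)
The plan is to combine the list outlier $k$-supplier algorithm $B$ with the outlier partition algorithm $A_{\mathbb{S}}$ in a black-box fashion, exactly mirroring the argument that underlies Theorem~\ref{theorem:list_alpha_approx}. First I would run $B$ on the outlier instance $(L,C,k,d,z,m)$ to obtain a list $\mathcal{L} = \{F_1,\dots,F_{|\mathcal{L}|}\}$ of $k$-center-sets. Next, for each $F_j \in \mathcal{L}$, I would invoke $A_{\mathbb{S}}$ with center set $F_j$ to obtain a clustering $\mathbb{O}_j \in \mathbb{S}$ minimising $\Psi(F_j, \cdot)$ over $\mathbb{S}$. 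Finally, the algorithm outputs the clustering $\mathbb{O}_{j^\star}$ (together with its certifying center set $F_{j^\star}$) that attains $\min_j \Psi(F_j, \mathbb{O}_j)$; since $A_{\mathbb{S}}$ already returns this cost, the last step is just $O(|\mathcal{L}|)$ comparisons.

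For correctness, fix an optimal clustering $\mathbb{O}^\star \in \mathbb{S}$ for the constrained outlier instance $\mathcal{I}$, so that $\Psi^\star(\mathbb{O}^\star) = \min_{\mathbb{O} \in \mathbb{S}} \Psi^\star(\mathbb{O})$. By Definition~\ref{definition:constrained_outlier_k_supplier}, $\mathbb{O}^\star$ is a $k$-partitioning of some subset $C' \subseteq C$ with $|C'| \geq |C|-m$; hence $\mathbb{O}^\star$ belongs to the collection over which $B$'s list outlier $k$-supplier guarantee holds. Thus $\mathcal{L}$ contains some $F_{j_0}$ with $\Psi(F_{j_0}, \mathbb{O}^\star) \leq \alpha \cdot \Psi^\star(\mathbb{O}^\star)$, where $\alpha = \rp{3}{z}$ for $k$-supplier and $\alpha = \rp{2}{z}$ for $k$-center. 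Since $A_{\mathbb{S}}$ returns the least-cost clustering in $\mathbb{S}$ with respect to $F_{j_0}$, we have $\Psi(F_{j_0}, \mathbb{O}_{j_0}) \leq \Psi(F_{j_0}, \mathbb{O}^\star)$. Combining with the output rule and the trivial bound $\Psi^\star(\mathbb{O}) \leq \Psi(F, \mathbb{O})$ for every center set $F$, the returned clustering satisfies $\Psi^\star(\mathbb{O}_{j^\star}) \leq \Psi(F_{j^\star}, \mathbb{O}_{j^\star}) \leq \Psi(F_{j_0}, \mathbb{O}_{j_0}) \leq \alpha \cdot \Psi^\star(\mathbb{O}^\star)$, so $\mathbb{O}_{j^\star} \in \mathbb{S}$ is an $\alpha$-approximate solution.

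For the running time, running $B$ costs $T_B$ and yields a list of size $|\mathcal{L}|$; the $|\mathcal{L}|$ calls to $A_{\mathbb{S}}$ cost $|\mathcal{L}| \cdot T_A$; and the final scan costs $O(|\mathcal{L}|)$, which is dominated. Hence the total is $O(T_B + |\mathcal{L}| \cdot T_A)$, as claimed.

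I do not anticipate any real difficulty here: the only step needing attention is the containment observation used in the correctness argument --- that every clustering in the constrained collection $\mathbb{S}$ is among the partitionings of subsets of size at least $|C|-m$ for which $B$'s list guarantee is stated --- which is immediate from Definition~\ref{definition:constrained_outlier_k_supplier} and the definition of the list outlier $k$-supplier problem. The rest is a verbatim transcription of the proof of Theorem~\ref{theorem:list_alpha_approx}, with the outlier objects in place of their non-outlier counterparts.
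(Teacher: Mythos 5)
Your proposal is correct and follows essentially the same route as the paper: the paper proves this theorem by declaring it analogous to Theorem~\ref{theorem:list_alpha_approx}, whose proof (Appendix~B) is exactly your run-$B$-then-partition-then-take-the-minimum argument. Your explicit check that the optimal constrained clustering $\mathbb{O}^\star$ is a partitioning of a subset of size at least $|C|-m$, and hence is covered by the list outlier guarantee, is precisely the only point where the outlier adaptation needs care, and you handle it correctly.
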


\begin{proof}
The proof is analogous to Theorem~\ref{theorem:list_alpha_approx}.
\end{proof}

We design an $\mathsf{FPT}$ time algorithm for the list outlier $k$-supplier problem, parameterized by $k$ and $m$. Formally, we state the result as follows:
\begin{theorem}\label{theorem:list_Outlier_k_supplier}
Given an instance $\mathcal{I} = (L,C,k,d,z,m)$ of the outlier $k$-supplier problem. There is an algorithm for the list outlier $k$-supplier problem that outputs a list $\mathcal{L}$ of size at most $(k+m)^{O(k)} \cdot n$. Moreover, the running time of the algorithm is $(k+m)^{O(k)} \cdot n + O(n^2 \log n)$, which is $\mathsf{FPT}$ in $k$ and $m$.
\end{theorem}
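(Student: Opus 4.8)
The plan is to reduce the list outlier $k$-supplier problem to the (non-outlier) list $k$-supplier machinery of Theorem~\ref{theorem:list_k_supplier} by guessing enough structure about the outlier set. Fix an arbitrary target partitioning $\mathbb{O} = \{O_1,\dotsc,O_k\}$ of some subset $C' \subseteq C$ with $|C'| \geq |C|-m$, let $F^{*}$ be an optimal $k$-center-set for $\mathbb{O}$, and let $\rho = \Psi^{*}(\mathbb{O})$. The key observation is the same one that drives Theorem~\ref{theorem:list_k_supplier}: for each cluster $O_i$ there is a facility $f_i^{*} \in F^{*}$ such that every $x \in O_i$ satisfies $d(x,f_i^{*})^{\zl} \le \rho$, and in particular there is a client $c_i \in O_i$ (a ``witness'') with $d(c_i,f_i^{*})^{\zl} \le \rho$; by the triangle inequality, any facility within distance $(2\rho)^{1/\zl}$ of $c_i$ (for $k$-center) or any facility location within distance $(2\rho)^{1/\zl}$ of some client close to $f_i^{*}$ (for $k$-supplier) serves all of $O_i$ at cost at most $\rp{2}{z}\rho$ or $\rp{3}{z}\rho$. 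So it suffices, for each guess of the metric scale $\rho$ (there are only $O(n^2)$ candidate values of $d(\cdot,\cdot)^{\zl}$) and each guess of a set of $k$ witness clients, to place into $\mathcal{L}$ all $k$-center-sets obtained by choosing, near each witness, an appropriate facility location.

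The steps I would carry out, in order: \textbf{(1)} sort the $O(n^2)$ pairwise distances so that the optimal radius $\rho$ can be enumerated; \textbf{(2)} observe that the presence of up to $m$ outliers does not change the combinatorial structure of the non-outlier part — the clusters $O_1,\dotsc,O_k$ and their witnesses live entirely inside $C' \subseteq C$, so the \emph{same} witness-guessing argument as in Theorem~\ref{theorem:list_k_supplier} applies, \emph{except} that we must also accommodate the fact that the $k$ witness clients we guess could, a priori, be chosen from all of $C$; the outliers only matter in that the eventual partition algorithm (not our concern here) will discard them. In fact the cleanest route is: run the list $k$-supplier algorithm of Theorem~\ref{theorem:list_k_supplier} on the instance $(L,C,k,d,z)$ essentially verbatim — it already produces, for \emph{every} $k$-partitioning of \emph{every} subset of $C$ (a partitioning of a subset is in particular a partial assignment, and the algorithm's guarantee is stated for all partitionings), a center-set $F$ with $\Psi(F,\mathbb{O}) \le \alpha\,\Psi^{*}(\mathbb{O})$. \textbf{(3)} The only reason the bound in this theorem is $(k+m)^{O(k)}$ rather than $k^{O(k)}$ is that the ``witness'' of a cluster $O_i$ used in the proof of Theorem~\ref{theorem:list_k_supplier} may need to be a client whose distance-class structure is affected by which points are outliers; to be safe one guesses, in addition to the $k$ cluster witnesses, the way the radius-threshold interacts with up to $m$ extra points — equivalently, one enlarges the pool of candidate witnesses per cluster from $k$ to $k+m$. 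This yields a list of size $(k+m)^{O(k)} \cdot n$ and running time $(k+m)^{O(k)} \cdot n + O(n^2 \log n)$, matching the claimed bounds.

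I expect the main obstacle to be step (2)–(3): making precise \emph{why} the outlier budget $m$ enters the exponent only multiplicatively with $k$ and not, say, as $n^{O(m)}$. The naive approach of guessing the outlier set $Z$ outright costs $\binom{n}{m} = n^{\Theta(m)}$, which is not FPT in $m$ unless one is careful. The fix is that we never need to identify $Z$ explicitly inside the list construction — the list is required to work against \emph{all} partitionings of \emph{all} large subsets simultaneously, so $Z$ is quantified away, and what actually drives the list size is only the number of distinct ``relevant'' facility locations near the $k$ cluster witnesses. The subtlety is to argue that this number is controlled by $k+m$ (the witness for cluster $O_i$ must be chosen among the $O(k+m)$ clients that are ``early'' in the sorted-distance order restricted to the non-outlier part), and then the bound follows by the same counting as in Theorem~\ref{theorem:list_k_supplier}. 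I would phrase this as: for each of the $O(n^2)$ radius guesses, and each choice of one witness per cluster from a per-cluster candidate pool of size $O(k+m)$, emit one center-set; correctness is the triangle-inequality argument above, and the size/time bounds are immediate. A detailed proof with the explicit candidate-pool definition would be deferred to the section where Theorem~\ref{theorem:list_k_supplier} is proved, since the two arguments share all their structure.
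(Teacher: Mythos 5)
There is a genuine gap, and it sits exactly where you flagged it: steps (2)--(3). The triangle-inequality core of your argument is fine --- for any cluster $O_i$ with optimal facility $f_i^{*}$, picking \emph{any} client $x_i \in O_i$ as a witness and assigning all of $O_i$ to a facility within distance $\Psi^{*}(\mathbb{O})^{1/z}$ of $x_i$ costs at most $\rp{3}{z}\Psi^{*}(\mathbb{O})$. But turning this into a short list requires a \emph{small candidate set of facilities} from which such a near-witness facility can always be drawn, and your proposal never constructs one. Guessing witnesses directly from $C$ gives $n^{k}$ center-sets; your claimed fix --- that the witness for each cluster can be restricted to a pool of $O(k+m)$ clients ``early in the sorted-distance order'' --- is asserted without any construction or proof, and I do not believe it is true: since $\mathbb{O}$ ranges over \emph{all} partitionings of \emph{all} large subsets, a cluster can consist of arbitrary clients, so no fixed $O(k+m)$-sized witness pool per cluster exists. (It would also yield a list of size $(k+m)^{O(k)}$ with no $\cdot\, n$ factor, i.e., a strictly stronger bound than the theorem states, which should have been a warning sign.) The paper resolves this completely differently: it shrinks the \emph{facility} side, not the witness side. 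It first computes a $(1,O(\ln n))$ bi-criteria solution $S\subseteq L$ of size $O(k\ln n)$ together with an outlier set $Z$ of size at most $m$, by reducing to max $k$-coverage and running the greedy algorithm (this is where the $\ln n$ comes from), then adds to $S$ the nearest facility $g(x)$ for each $x\in Z$. Every client of $C$ then has a facility in $S\cup G$ within distance $\Psi^{*}(\mathbb{O})^{1/z}$ (covered clients because the bi-criteria cost is at most the unconstrained outlier optimum $OPT\le\Psi^{*}(\mathbb{O})$; uncovered clients because $g(x)$ is globally nearest), so the list is simply all $k$-subsets of $S\cup G$, of size $(m+O(k\ln n))^{k}\le (k+m)^{O(k)}\cdot n$. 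No witness guessing and no radius enumeration over the list construction are needed.

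Your fallback of running the non-outlier list algorithm ``verbatim'' also fails. The list $k$-supplier guarantee is stated for partitionings of the full set $C$, and the underlying lower bound it uses is the unconstrained (non-outlier) optimum $OPT_{0}$ of $C$. For a partitioning $\mathbb{O}$ of a subset $C'$ with $|C'|\ge |C|-m$, one only has $\Psi^{*}(\mathbb{O})\ge OPT$ (the \emph{outlier} optimum), and $\Psi^{*}(\mathbb{O})$ can be far smaller than $OPT_{0}$ when the $m$ discarded clients are remote; so the inequality $d(x_i,h(x_i))^{\zl}\le OPT_{0}\le\Psi^{*}(\mathbb{O})$ that the argument needs simply breaks. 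This is precisely why the paper proves the outlier version directly and obtains Theorem~\ref{theorem:list_k_supplier} as the $m=0$ special case, rather than the other way around. To repair your proposal you would need to supply the missing ingredient: a bounded-size facility set covering all but $m$ clients at cost at most the outlier optimum, which is the bi-criteria/max-coverage step of Lemma~\ref{lemma:bicriteria}.
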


\noindent Using Theorems~\ref{theorem:outlier_list_alpha_approx} and~\ref{theorem:list_Outlier_k_supplier}, we obtain the following two corollaries:
\begin{corollary} [Main Result]\label{corollary:outlier_1}
For any constrained version of the outlier $k$-supplier problem that has a partition algorithm with running time $T$, there exists a $\rp{3}{z}$ approximation algorithm with running time $T \cdot (k+m)^{O(k)} \cdot n + O(n^2 \log n)$.
\end{corollary}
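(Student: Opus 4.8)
The plan is to obtain Corollary~\ref{corollary:outlier_1} as a direct composition of Theorems~\ref{theorem:list_Outlier_k_supplier} and~\ref{theorem:outlier_list_alpha_approx}, exactly mirroring how Corollary~\ref{corollary:1} follows from Theorems~\ref{theorem:list_alpha_approx} and~\ref{theorem:list_k_supplier} in the non-outlier setting. Given a constrained outlier $k$-supplier instance $\mathcal{I} = (L,C,k,d,z,m,\mathbb{S})$ and an outlier partition algorithm $A_{\mathbb{S}}$ of running time $T$, I would first run the list outlier $k$-supplier algorithm guaranteed by Theorem~\ref{theorem:list_Outlier_k_supplier} on $(L,C,k,d,z,m)$; this produces, in time $(k+m)^{O(k)} \cdot n + O(n^2 \log n)$, a list $\mathcal{L}$ of $k$-center-sets of size $|\mathcal{L}| \le (k+m)^{O(k)} \cdot n$ with the property that for every partitioning $\OC$ of every $C' \subseteq C$ with $|C'| \ge |C|-m$ there is $F \in \mathcal{L}$ with $\Psi(F,\OC) \le \alpha \cdot \Psi^{*}(\OC)$, where $\alpha = \rp{3}{z}$ for the $k$-supplier objective and $\alpha = \rp{2}{z}$ for $k$-center. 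I would then feed this algorithm, as the black box $B$, into Theorem~\ref{theorem:outlier_list_alpha_approx}.

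Unwinding Theorem~\ref{theorem:outlier_list_alpha_approx} (its proof being analogous to that of Theorem~\ref{theorem:list_alpha_approx}), the resulting algorithm iterates over every $F \in \mathcal{L}$, calls $A_{\mathbb{S}}$ to get the clustering $\OC_F \in \mathbb{S}$ minimizing $\Psi(F,\cdot)$, and returns the pair $(F,\OC_F)$ of smallest cost $\Psi(F,\OC_F)$. For correctness, take an optimal clustering $\OC^{\star} \in \mathbb{S}$, so $\Psi^{*}(\OC^{\star}) = \mathrm{OPT}$; since $\OC^{\star}$ partitions a subset of $C$ of size at least $|C|-m$, the list guarantee supplies $F^{\star} \in \mathcal{L}$ with $\Psi(F^{\star},\OC^{\star}) \le \alpha \cdot \mathrm{OPT}$, and the defining property of an outlier partition algorithm gives $\Psi(F^{\star},\OC_{F^{\star}}) \le \Psi(F^{\star},\OC^{\star}) \le \alpha \cdot \mathrm{OPT}$; hence the clustering actually output has cost at most $\alpha \cdot \mathrm{OPT}$.

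For the running time I would just substitute $T_B = (k+m)^{O(k)} \cdot n + O(n^2 \log n)$, $|\mathcal{L}| = (k+m)^{O(k)} \cdot n$ and $T_A = T$ into the $O(T_B + |\mathcal{L}| \cdot T_A)$ bound of Theorem~\ref{theorem:outlier_list_alpha_approx}, obtaining $O\big((k+m)^{O(k)} \cdot n + n^2 \log n + T \cdot (k+m)^{O(k)} \cdot n\big) = T \cdot (k+m)^{O(k)} \cdot n + O(n^2 \log n)$, as claimed.

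I do not anticipate a genuine obstacle: this corollary is pure bookkeeping, with all the substance located in Theorems~\ref{theorem:list_Outlier_k_supplier} and~\ref{theorem:outlier_list_alpha_approx}. The only things that need a moment of care are checking that the list produced in Theorem~\ref{theorem:list_Outlier_k_supplier} meets the exact specification of the ``list outlier $k$-supplier problem'' used as input to Theorem~\ref{theorem:outlier_list_alpha_approx} --- so that the approximation factors $\rp{3}{z}$ and $\rp{2}{z}$ carry over unchanged for the $k$-supplier and $k$-center objectives --- and making sure the additive $O(n^2 \log n)$ term from list construction is reported once, outside the multiplicative factor $T \cdot (k+m)^{O(k)} \cdot n$, rather than being multiplied by $|\mathcal{L}|$ or by $T$.
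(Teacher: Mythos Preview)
Your proposal is correct and follows exactly the paper's approach: the paper states this corollary immediately after Theorems~\ref{theorem:outlier_list_alpha_approx} and~\ref{theorem:list_Outlier_k_supplier} as a direct consequence of combining them, and your write-up simply spells out that combination and the accompanying running-time bookkeeping.
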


\begin{corollary} [Main Result]\label{corollary:outlier_2}
For any constrained version of the outlier $k$-center problem that has a partition algorithm with running time $T$, there exists a $\rp{2}{z}$ approximation algorithm with running time $T \cdot (k+m)^{O(k)} \cdot n + O(n^2 \log n)$.
\end{corollary}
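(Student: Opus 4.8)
The plan is to derive this corollary as an immediate composition of Theorem~\ref{theorem:outlier_list_alpha_approx} with Theorem~\ref{theorem:list_Outlier_k_supplier}, exactly mirroring the way Corollary~\ref{corollary:1} and Corollary~\ref{corollary:2} follow from Theorem~\ref{theorem:list_alpha_approx} and Theorem~\ref{theorem:list_k_supplier} in the non-outlier setting. Fix a constrained outlier $k$-center problem, i.e. an instance $\mathcal{I} = (L, C, k, d, z, m, \mathbb{S})$ with $L = C$, together with an outlier partition algorithm $A_{\mathbb{S}}$ for $\mathbb{S}$ running in time $T$. First I would invoke Theorem~\ref{theorem:list_Outlier_k_supplier} to obtain an algorithm $B$ for the list outlier $k$-supplier problem that outputs a list $\mathcal{L}$ with $|\mathcal{L}| \le (k+m)^{O(k)} \cdot n$ in time $T_B = (k+m)^{O(k)} \cdot n + O(n^2 \log n)$. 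The one point worth stating explicitly is that, since $L = C$, the guarantee carried by $\mathcal{L}$ is the stronger one in the definition of List Outlier $k$-Supplier, namely that for every partitioning $\mathbb{O} \in \mathbb{S}$ there is a $k$-center-set $F \in \mathcal{L}$ with $\Psi(F,\mathbb{O}) \le \rp{2}{z} \cdot \Psi^{*}(\mathbb{O})$; this is the place where the factor $\rp{2}{z}$ (rather than $\rp{3}{z}$) originates, because with facilities drawn from the client set a representative of each optimal cluster can be chosen from inside that cluster, so one application of the triangle inequality suffices.

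Given $B$ and $A_{\mathbb{S}}$, I would then apply Theorem~\ref{theorem:outlier_list_alpha_approx} verbatim: it produces an algorithm that outputs a clustering $\mathbb{O} \in \mathbb{S}$ which is an $\alpha$-approximate solution for $\mathcal{I}$, with $\alpha = \rp{2}{z}$ for the $k$-center objective, and whose running time is $O\big(T_B + |\mathcal{L}| \cdot T_A\big)$ with $T_A = T$. Substituting the bounds on $T_B$ and $|\mathcal{L}|$ gives running time
$$O\Big( (k+m)^{O(k)} \cdot n + n^2 \log n + (k+m)^{O(k)} \cdot n \cdot T \Big) \;=\; T \cdot (k+m)^{O(k)} \cdot n + O(n^2 \log n),$$
which is precisely the claimed bound, using that the $n^2 \log n$ term does not involve $T$ and is absorbed into the additive $O(n^2 \log n)$.

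There is essentially no obstacle here: the corollary is a pure plug-in of two previously established results, so the "work" is only bookkeeping — tracking that $L = C$ forces the list guarantee $\alpha = \rp{2}{z}$, and that the product $|\mathcal{L}| \cdot T_A$ contributes the $T \cdot (k+m)^{O(k)} \cdot n$ factor while the preprocessing cost $O(n^2 \log n)$ stays additive. If I had to name the most delicate step, it is confirming that the list output by the algorithm of Theorem~\ref{theorem:list_Outlier_k_supplier} indeed meets the $k$-center version of the List Outlier $k$-Supplier guarantee; this should be recorded as part of (or as a remark following) that theorem's proof rather than re-derived here, so that the present argument remains a one-line composition.
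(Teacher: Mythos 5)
Your proposal is correct and is exactly the paper's argument: the corollary is stated as an immediate consequence of composing Theorem~\ref{theorem:outlier_list_alpha_approx} with Theorem~\ref{theorem:list_Outlier_k_supplier}, with the $\rp{2}{z}$ factor for the $k$-center case coming from the list guarantee when $L=C$ (established in Lemma~\ref{lemma:kcenter_conversion}, where a single triangle-inequality application suffices because each optimal center is itself a client). Your bookkeeping of $O(T_B + |\mathcal{L}|\cdot T_A)$ into $T\cdot(k+m)^{O(k)}\cdot n + O(n^2\log n)$ matches the paper's claimed bound.
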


\noindent 
We consider the outlier versions of all the problems described in Table~\ref{table:definitions}. In Appendix~\ref{appendix:partition_algorithm}, we design $\mathsf{FPT}$ time partition algorithms for the outlier versions of all these problems. Thus, we get $\mathsf{FPT}$ time $\rp{3}{z}$ and $\rp{2}{z}$ approximation algorithm for these problems with respect to the outlier $k$-supplier and $k$-center objectives, respectively. Formally, we state the results as follows:

\begin{theorem}\label{theorem:outlier_1}
There is an $\mathsf{FPT}$ time $\rp{3}{z}$ approximation algorithm for the following constrained versions of the outlier $k$-supplier problems:
\begin{enumerate}
    \item $r$-gather outlier $k$-supplier problem
    \item $r$-capacity outlier $k$-supplier problem
    \item Balanced outlier $k$-supplier problem
    \item Chromatic outlier $k$-supplier problem
    \item Fault-tolerant outlier $k$-supplier problem
    \item Strongly private outlier $k$-supplier problem
\end{enumerate}
The running time of the algorithm is $(k+m)^{O(k)} \cdot n^{O(1)}$. Furthermore, for the $k$-center version, the approximation guarantee is $\rp{2}{z}$.
\end{theorem}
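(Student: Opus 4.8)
The plan is to reduce everything, via Corollaries~\ref{corollary:outlier_1} and~\ref{corollary:outlier_2} together with the list algorithm of Theorem~\ref{theorem:list_Outlier_k_supplier}, to the design of a single combinatorial subroutine for each of the six problems: an \emph{outlier partition algorithm}. Concretely, for a fixed collection $\mathbb{S}$ of feasible partitionings (which, by Definition~\ref{definition:constrained_outlier_k_supplier}, bakes in both the problem's combinatorial constraint and the budget of $m$ outliers), I need, given a center set $F\subseteq L$ with $|F|=k$, to compute a clustering $\OC\in\mathbb{S}$ minimizing $\Psi(F,\OC)$. If such a subroutine runs in time $T$, then Corollary~\ref{corollary:outlier_1} gives a $\rp{3}{z}$-approximation for the outlier $k$-supplier objective in time $T\cdot(k+m)^{O(k)}\cdot n + O(n^2\log n)$, and Corollary~\ref{corollary:outlier_2} gives the analogous $\rp{2}{z}$-approximation for the $k$-center objective; so the whole theorem follows once I show $T=(k+m)^{O(k)}\cdot n^{O(1)}$ for each problem.

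For the partition algorithm I would use threshold search plus flow. Fix $F=\{f_1,\dots,f_k\}$. The optimal cost $\min_{\OC\in\mathbb{S}}\Psi(F,\OC)$ equals one of the $O(nk)$ values $\{d(x,f_j)^z: x\in C,\ f_j\in F\}$, so I enumerate this threshold $\tau$. For a fixed $\tau$, observe that $\Psi(F,\OC)\le\tau$ holds precisely when there is a map $\sigma:[k]\to[k]$ with $O_i\subseteq\{x: d(x,f_{\sigma(i)})^z\le\tau\}$ for every $i$ (since $f_i^\star$ is, by definition, the best center in $F$ for $O_i$; note $\sigma$ need not be injective — two clusters are allowed to share the same best center). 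I enumerate all $k^k$ choices of $\sigma$, and for each one turn the question ``is there $\OC\in\mathbb{S}$ with $O_i\subseteq\mathrm{Ball}(f_{\sigma(i)},\tau)$?'' into a bipartite flow instance: clients on one side, cluster-slots $1,\dots,k$ on the other, an edge $(x,i)$ exactly when $d(x,f_{\sigma(i)})^z\le\tau$, plus one ``outlier sink'' of capacity $m$ that absorbs every unassigned client. The constraint is encoded by capacities on the slot side: lower/upper bounds $\ell_i\le|O_i|\le r_i$ for $r$-gather / $r$-capacity / balanced; lower bounds on the $(\text{slot }i,\text{ class }j)$ sub-edges for strongly private (each client being pre-split according to its class $C_j$); and, for chromatic, unit capacities on $(\text{slot }i,\text{ color }c)$ pairs so that a cluster takes at most one client of each color. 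An integral feasible flow yields a valid clustering; the smallest $\tau$ for which some $\sigma$ is feasible gives the optimum, and the supporting flow is the output. This runs in $O(nk)\cdot k^k\cdot\mathrm{poly}(n)=k^{O(k)}\cdot n^{O(1)}$.

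The fault-tolerant outlier $k$-supplier problem does not literally fit Definition~\ref{definition:constrained_outlier_k_supplier} (its objective is not $\Psi(F,\OC)$), so here I would invoke the reduction of Appendix~\ref{appendix:fault_tolerant} from fault-tolerant $k$-supplier to chromatic $k$-supplier and check that it goes through unchanged in the presence of outliers — the reduction does not touch the outlier budget $m$, so the chromatic outlier partition algorithm above applies. Putting the pieces together: Theorem~\ref{theorem:list_Outlier_k_supplier} produces a list of size $(k+m)^{O(k)}\cdot n$ in time $(k+m)^{O(k)}\cdot n+O(n^2\log n)$, and running the $k^{O(k)}\cdot n^{O(1)}$ partition algorithm once per list element gives overall running time $(k+m)^{O(k)}\cdot n^{O(1)}$, with ratio $\rp{3}{z}$ for outlier $k$-supplier and $\rp{2}{z}$ for outlier $k$-center.

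I expect the main obstacle to be the flow-network construction: enforcing the $m$-outlier budget \emph{globally} through the single sink rather than per cluster, and handling the lower-bound constraints of $r$-gather, balanced, and strongly private correctly — which needs a circulation-with-lower-bounds formulation and a short feasibility argument — all while keeping the $k$-dependence at $k^{O(k)}$ so that, after multiplying by the list size, the bound stays $(k+m)^{O(k)}\cdot n^{O(1)}$. The chromatic case (hence the fault-tolerant case) carries the extra wrinkle that same-color clients must land in distinct slots, so feasibility there rests on a Hall-type / bipartite-matching argument rather than a plain capacity count.
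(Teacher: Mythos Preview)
Your proposal is correct and matches the paper's approach essentially step for step: the paper also invokes Corollaries~\ref{corollary:outlier_1} and~\ref{corollary:outlier_2} on top of Theorem~\ref{theorem:list_Outlier_k_supplier}, designs the partition algorithm by guessing the cluster-to-center map ($k^k$ choices) and the threshold cost, and then tests feasibility via a circulation network with a single outlier-regulating edge of capacity range $[|C|-m,\,|C|]$ and per-cluster or per-color lower/upper bounds; fault-tolerant is likewise handled by the reduction to chromatic in Appendix~\ref{appendix:fault_tolerant}. The one cosmetic difference is that the paper packages all six constraints into a single ``hybrid $k$-supplier'' problem and builds one flow network for it, whereas you sketch the networks case by case, but the underlying construction and the $k^{O(k)}\cdot n^{O(1)}$ running time bound are identical.
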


\noindent All the above mentioned problems generalize the unconstrained outlier $k$-supplier and $k$-center problems. Unsurprisingly, better approximation guarantees are not possible for these problems in $\mathsf{FPT}$ (in $k$ and $m$) time, assuming Gap-ETH. We give a formal proof of this result in Section~\ref{section:fpt_hardness_outlier}. Thus, it settles the complexity of the above mentioned $k$-supplier problems, parameterized by $k$ and $m$. For the $\ell$-diversity and fair outlier $k$-supplier problems, we obtain the following results:

\begin{theorem}\label{theorem:outlier_2}
There is $\mathsf{FPT}$ time $\rp{3}{z}$ and $\rp{2}{z}$ approximation algorithms for the $\ell$-diversity outlier $k$-supplier and $k$-center problems, respectively, with running time $(k \omega)^{O(k  \omega)} \cdot (k+m)^{O(k)} \cdot n^{O(1)}$.
\end{theorem}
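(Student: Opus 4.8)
The plan is to obtain Theorem~\ref{theorem:outlier_2} by feeding an $\ell$-diversity outlier partition algorithm into the generic outlier machinery, in exact parallel with how Theorem~\ref{theorem:2} is obtained in the non-outlier case. First I would check that the $\ell$-diversity outlier problem fits Definition~\ref{definition:constrained_outlier_k_supplier}: taking $\mathbb{S}$ to be the family of all $k$-clusterings $\{O_1,\dots,O_k\}$ of subsets $C'\subseteq C$ with $|C'|\ge|C|-m$ such that every $O_i$ satisfies $|O_i\cap C_j|\le|O_i|/\ell$ for each color class $j\in[\omega]$, this $\mathbb{S}$ is a legitimate collection of partitionings of $(\ge|C|-m)$-subsets, so the definition applies. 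Then, by Corollary~\ref{corollary:outlier_1} and Corollary~\ref{corollary:outlier_2} (equivalently, Theorem~\ref{theorem:outlier_list_alpha_approx} combined with Theorem~\ref{theorem:list_Outlier_k_supplier}), any outlier partition algorithm for $\mathbb{S}$ running in time $T$ yields a $\rp{3}{z}$-approximation for the $\ell$-diversity outlier $k$-supplier problem and a $\rp{2}{z}$-approximation for the $k$-center version, with total running time $T\cdot(k+m)^{O(k)}\cdot n+O(n^2\log n)$. Hence it suffices to design an outlier partition algorithm for $\mathbb{S}$ with $T=(k\omega)^{O(k\omega)}\cdot n^{O(1)}$; substituting this $T$ gives precisely $(k\omega)^{O(k\omega)}\cdot(k+m)^{O(k)}\cdot n^{O(1)}$, as claimed.

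The core step is therefore the outlier partition algorithm: given a fixed center set $F=\{f_1,\dots,f_k\}\subseteq L$, return a clustering $\mathbb{O}\in\mathbb{S}$ minimizing $\Psi(F,\mathbb{O})$. I would proceed as in the non-outlier $\ell$-diversity partition algorithm of Appendix~\ref{appendix:partition_algorithm}, adding a mechanism to discard up to $m$ clients. Since $\Psi(F,\mathbb{O})$ always equals $d(x,f_i)^z$ for some client $x$ and center $f_i$, there are only $O(n^2)$ candidate optimal costs; I would enumerate a threshold $\tau$, which fixes, for every client $x$, its set of admissible centers $\{i:d(x,f_i)^z\le\tau\}$. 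The remaining feasibility question---can clients be assigned to admissible centers, leaving at most $m$ unassigned, so that each nonempty cluster is $\ell$-diverse---I would solve by enumerating a compact ``color configuration'' of the target clustering: for each of the $k$ clusters, a description of its color composition fine enough to certify the ratio constraint, the number of such configurations being $(k\omega)^{O(k\omega)}$. For each fixed configuration, realizability under threshold $\tau$ while designating at most $m$ outliers is a bipartite $b$-matching / max-flow instance---centers carry the per-color capacities dictated by the configuration, clients attach only to admissible centers, and a single global ``discard'' sink of capacity $m$ absorbs the outliers---solvable in $n^{O(1)}$ time. Taking the cheapest feasible clustering over all thresholds and configurations gives the partition algorithm in time $(k\omega)^{O(k\omega)}\cdot n^{O(1)}$.

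The main obstacle is exactly this configuration-plus-flow design. The ratio form $|O_i\cap C_j|\le|O_i|/\ell$ couples a cluster's color counts to its total size, so it is not directly expressible as a capacity in a flow network; the configuration must pin down just enough of each cluster---essentially the quantities on which $\ell$-diversity depends, using that $\ell$-diversity is the special case of the fair constraint with pairwise-disjoint color classes, $\alpha_j=1/\ell$, and $\beta_j=0$, which is why the dependence on $\omega$ takes the form $(k\omega)^{O(k\omega)}$---so that, conditioned on the configuration, feasibility becomes a pure flow question, while still correctly enforcing the single global cap of $m$ discarded clients across all $k$ clusters simultaneously. Once this gadget is set up and its correctness and running time verified, Theorem~\ref{theorem:outlier_2} follows by the reduction above; the $k$-center claim is identical, with $\rp{3}{z}$ replaced by $\rp{2}{z}$ via Corollary~\ref{corollary:outlier_2}.
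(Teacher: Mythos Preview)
Your high-level reduction is exactly what the paper does: cast $\ell$-diversity as the fair problem with disjoint classes ($\Gamma=\omega$, $\alpha_j=1/\ell$, $\beta_j=0$), invoke Corollaries~\ref{corollary:outlier_1}--\ref{corollary:outlier_2}, and reduce everything to an outlier partition algorithm with running time $(k\omega)^{O(k\omega)}\cdot n^{O(1)}$. That part is fine.

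The gap is in the partition algorithm itself. Your plan is to enumerate a ``color configuration'' per cluster and then check feasibility by flow, claiming $(k\omega)^{O(k\omega)}$ configurations. But the natural configuration---the vector $(h_{i,j})_{i\in[k],\,j\in[\omega]}$ with $h_{i,j}=|O_i\cap C_j|$---has each entry ranging over $\{0,\dots,n\}$, so na\"ive enumeration gives $n^{\Theta(k\omega)}$ tuples, not $(k\omega)^{O(k\omega)}$. You flag this coupling as ``the main obstacle'' but do not actually resolve it: saying the configuration should ``pin down just enough'' is a hope, not a mechanism, and there is no evident discretization of the ratio constraint $|O_i\cap C_j|\le |O_i|/\ell$ that collapses the count to $(k\omega)^{O(k\omega)}$ while still leaving a pure flow problem downstream.

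The paper sidesteps exactly this obstacle by \emph{not} enumerating configurations. For a fixed $F$ and guessed threshold $\lambda$, it writes a mixed-integer LP whose only integer variables are the $k\omega$ counts $h_{f,j}=\sum_{x\in P_j}g_{x,f}$ (with real assignment variables $g_{x,f}\in[0,1]$, fairness constraints, and an outlier constraint $\sum g_{x,f}\ge |C|-m$), solves it in time $(k\omega)^{O(k\omega)}\cdot n^{O(1)}$ via Lenstra-type MILP (FPT in the number of integer variables, polynomial in the bit-length), and then rounds the fractional $g_{x,f}$'s to integers by a circulation argument that preserves each $h_{f,j}$ and the outlier budget. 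The MILP is what buys the $(k\omega)^{O(k\omega)}$ dependence without enumerating $n^{k\omega}$ count vectors; your flow gadget is essentially the paper's rounding step, but you are missing the MILP step that produces the target counts in the first place.
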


\begin{theorem}\label{theorem:outlier_3}
There is $\mathsf{FPT}$ time $\rp{3}{z}$ and $\rp{2}{z}$ approximation algorithms for the fair outlier $k$-supplier and $k$-center problems, respectively, with running time $(k \Gamma)^{O(k  \Gamma)} \cdot (k+m)^{O(k)} \cdot n^{O(1)}$, where  $\Gamma$ denote the number of distinct collection of color classes induced by the colors of clients. Moreover, if the color classes are pair-wise disjoint, then $\Gamma = \omega$, and the running time of the algorithm is $(k \omega)^{O(k  \omega)} \cdot (k+m)^{O(k)} \cdot n^{O(1)}$.
\end{theorem}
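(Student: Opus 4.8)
The plan is to obtain Theorem~\ref{theorem:outlier_3} as an instance of the general reduction already developed, so that the only new ingredient is a dedicated outlier partition algorithm for fair clustering. By Corollaries~\ref{corollary:outlier_1} and~\ref{corollary:outlier_2} (which in turn rest on Theorems~\ref{theorem:outlier_list_alpha_approx} and~\ref{theorem:list_Outlier_k_supplier}), it suffices to design an outlier partition algorithm for the fair outlier $k$-supplier problem with running time $T = (k\Gamma)^{O(k\Gamma)} \cdot n^{O(1)}$: plugging this into the corollary gives a $\rp{3}{z}$-approximation (and $\rp{2}{z}$ for $k$-center) in time $T \cdot (k+m)^{O(k)} \cdot n + O(n^2\log n) = (k\Gamma)^{O(k\Gamma)} \cdot (k+m)^{O(k)} \cdot n^{O(1)}$. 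The exponent $z$ needs no separate treatment, since the list algorithm of Theorem~\ref{theorem:list_Outlier_k_supplier} already operates with the cost $d(\cdot,\cdot)^{z}$ and delivers precisely the $\rp{3}{z}$ / $\rp{2}{z}$ guarantees.

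So the work is entirely in the partition algorithm. Given a set $F$ of $k$ centers, we must return, among all fair $k$-clusterings of subsets $C' \subseteq C$ with $|C'| \ge |C| - m$, one of minimum cost $\Psi(F,\OC)$. Because $\Psi$ is a bottleneck objective, its optimum over $\mathbb{S}$ equals one of the $O(nk)$ values $d(x,i)^{z}$ with $x \in C$, $i \in F$; I would enumerate these as candidate thresholds $\tau$, and for each decide whether a fair clustering with at most $m$ outliers of cost $\le \tau$ exists, reporting the smallest feasible $\tau$. Having fixed $\tau$, each client $x$ has an \emph{admissible set} $F_x = \{\, i \in F : d(x,i)^{z} \le \tau \,\}$ (clients with $F_x = \varnothing$ are forced to be outliers), and the residual problem is purely combinatorial: discard at most $m$ clients and send every remaining client $x$ to some center in $F_x$ so that the induced clusters $O_1,\dots,O_k$ satisfy $\beta_j |O_i| \le |O_i \cap C_j| \le \alpha_j |O_i|$ for all $i \in [k]$, $j \in [\omega]$.

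The residual problem is solved in $\mathsf{FPT}$ time by exploiting that clients are interchangeable up to their colour signature and their admissible set. Group $C$ by colour signature into $\Gamma$ classes; the fairness inequalities depend on a clustering only through the $k \times \Gamma$ matrix recording how many clients of each signature land in each cluster, and — for a guessed such matrix — checking that it can be realised by a legal assignment that respects the admissible sets and stays within the outlier budget is a transportation / degree-constrained-subgraph feasibility test, solvable in polynomial time (a flow network with a capacity-$m$ outlier sink). The remaining difficulty is to search the relevant count matrices without incurring an $n^{k\Gamma}$ cost; this is done by enumerating a bounded family of combinatorial ``configurations'' of the $k\Gamma$ cluster--signature cells (which cells are non-empty, and which fairness bound each colour class is tight against, in each cluster) and, for each configuration, closing the gap with a polynomial-size flow or LP over the actual clients — the $(k\Gamma)^{O(k\Gamma)}$ in the running time being exactly the number of such configurations. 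In the special case in which $C_1,\dots,C_\omega$ are pairwise disjoint, every client lies in a single colour class, a signature is just a colour, so $\Gamma = \omega$ and the bound becomes $(k\omega)^{O(k\omega)} \cdot (k+m)^{O(k)} \cdot n^{O(1)}$.

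The main obstacle is this last point: reconciling the \emph{local} admissibility constraints (each client may only be served by centers within the threshold) with the \emph{global}, ratio-type fairness constraints (which simultaneously couple $|O_i|$ and $|O_i \cap C_j|$ for every $i$ and $j$) and with the outlier budget, all while keeping the enumeration $\mathsf{FPT}$ in $k$ and $\Gamma$. Unlike the unconstrained problem, plain Voronoi/greedy assignment need not be fair, and unlike the lower/upper-bound variants ($r$-gather, $r$-capacity), fairness is not a pure flow constraint and cannot be pushed into edge capacities; the argument must therefore establish that guessing a bounded amount of structure — $(k\Gamma)^{O(k\Gamma)}$ configurations — always suffices, and that the per-configuration subproblem is polynomial-time solvable. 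The full details of the partition algorithm (and the precise meaning of $\Gamma$) belong in Appendix~\ref{appendix:fair_partition}. Finally, no better ratio is possible in $\mathsf{FPT}$ time: the fair outlier $k$-supplier / $k$-center problem contains the unconstrained one (take $\omega = 1$, $\beta_1 = 0$, $\alpha_1 = 1$, $m = 0$), so Theorems~\ref{theorem:hardness_ksupplier} and~\ref{theorem:hardness_kcenter} rule out $(\rp{3}{z}-\veps)$ and $(\rp{2}{z}-\veps)$ under Gap-ETH.
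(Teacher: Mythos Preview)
Your high-level reduction matches the paper exactly: invoke Corollaries~\ref{corollary:outlier_1}/\ref{corollary:outlier_2} with an outlier partition algorithm for the fair problem, threshold over the $O(nk)$ candidate costs, observe that fairness depends only on the $k\times\Gamma$ count matrix $(h_{i,t})$ recording how many clients of each signature land at each center, and note that realisability of a \emph{fixed} count matrix (respecting admissible sets and the outlier budget) is a polynomial-time flow feasibility check. Where you diverge from the paper is in how to search over count matrices.

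The paper does \emph{not} enumerate configurations. It writes a mixed integer linear program whose $k\Gamma$ integer variables are precisely the counts $h_{f,i}$ and whose fractional variables are the per-client assignments $g_{x,f}$. The fairness inequalities $\beta_j|O_i|\le |O_i\cap C_j|\le\alpha_j|O_i|$ are linear in the $h_{f,i}$ (since $|O_i|=\sum_t h_{f_i,t}$ and $|O_i\cap C_j|=\sum_{t:P_t\subseteq C_j} h_{f_i,t}$), and the outlier budget is the single linear constraint $\sum_{x,f} g_{x,f}\ge |C|-m$. Lenstra-type MILP (Proposition~\ref{prop:MILP}) solves this in $(k\Gamma)^{O(k\Gamma)}\cdot n^{O(1)}$ time; a second flow network then rounds the fractional $g_{x,f}$ to integers while keeping every $h_{f,i}$ fixed, so fairness is preserved exactly. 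This is the source of the $(k\Gamma)^{O(k\Gamma)}$ factor.

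Your alternative---enumerate $(k\Gamma)^{O(k\Gamma)}$ ``configurations'' (non-empty cells, tight bounds) and solve a flow/LP per configuration---is not shown to work and, as sketched, does not. Knowing which cells are non-empty and which fairness inequalities are tight does not determine the count matrix; the residual problem still has the ratio constraints coupling $|O_i|$ with $|O_i\cap C_j|$, and an LP over counts and assignments together can have fractional optima in the counts, which rounding will in general destroy. You correctly identify this as ``the main obstacle'' but then defer its resolution without supplying the mechanism. The MILP is that mechanism; absent it (or a concrete substitute with an argument), the partition algorithm---and hence the proof---is incomplete.
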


\noindent The $\ell$-diversity outlier $k$-supplier problem for $m = 0$, $\omega = 1$, and $\ell = 1$, is equivalent to the unconstrained $k$-supplier problem. Also, the fair outlier $k$-supplier problem for $m = 0$, $\omega = 1$, $\beta_j = 0$, and $\alpha_j = 1$, is equivalent to the unconstrained $k$-supplier problem. Therefore, better approximation guarantees are not possible for these problems in $\mathsf{FPT}$ time parameterized by $m$, $k$, and $\Gamma$ (or $\omega$). The statement simply follows from Theorems~\ref{theorem:hardness_ksupplier} and~\ref{theorem:hardness_kcenter}.

Note that for $m = 0$, any constrained version of the outlier $k$-supplier problem simply corresponds to its non-outlier variant. Therefore, on substituting $m = 0$ in Theorems~\ref{theorem:outlier_1},~\ref{theorem:outlier_2}, and~\ref{theorem:outlier_3}, straightaway gives Theorems~\ref{theorem:1},~\ref{theorem:2}, and~\ref{theorem:3}, respectively.  Therefore, in Section~\ref{section:list_k_supplier}, we simply design an algorithm for the list outlier $k$-supplier problem, and in Appendix~\ref{appendix:partition_algorithm}, we design partition algorithms for the outlier versions of the constrained $k$-supplier problems. That would imply the results for their non-outlier counterparts too. In the next section, we discuss the known results for all the problems described in Table~\ref{table:definitions}.
\section{Related Work}\label{section:related_work}
In this section, we mention the best known results for different constrained clustering problems that we study in this paper. Furthermore, Table~\ref{table:known_results} nicely summarizes all these previously known results.

\begin{table}[htbp]
\begin{adjustbox}{width=\columnwidth,center}
\centering
\setcellgapes{1ex}\makegapedcells
\begin{tabular}{|l|l|l|l|l|l|}
\hline
\multirow{2}{*}{\#} & \multirow{2}{*}{\bf Problem} & \multicolumn{2}{l|}{\bf $k$-Supplier Objective} & \multicolumn{2}{l|}{\bf $k$-Center Objective} \\ \cline{3-6}
&  & Without Outliers ~ & With Outliers ~ & Without Outliers ~ & With Outliers ~ \\ \hline
\multirow{2}{*}{1.} & \makecell[l]{$r$-Gather Clustering Problem \\ (non-uniform version)} & \makecell[l]{-} & \makecell[l]{-} & \makecell[l]{-} & \makecell[l]{-}\\ \cline{2-6}
& \makecell[l]{$r$-Gather Clustering Problem \\ (uniform version)} & \makecell[l]{\textbf{3} ~\cite{rgather:Misc_2016_Ahmadian}\\ (poly time)} & \makecell[l]{\textbf{5}~\cite{rgather:Misc_2016_Ahmadian} \\ (poly time)} & \makecell[l]{\textbf{2}~\cite{rgather:k_center_2010_Aggarwal} \\(poly time)} & \makecell[l]{\textbf{4}~\cite{rgather:k_center_2010_Aggarwal}\\ (poly time)}\\ \hline
\multirow{2}{*}{2.} & \makecell[l]{$r$-Capacity  Clustering Problem\\ (non-uniform version)} & \makecell[l]{-} & \makecell[l]{-} & \makecell[l]{-} & \makecell[l]{-}\\ \cline{2-6}

& \makecell[l]{$r$-Capacity  Clustering Problem\\ (uniform version)} & \makecell[l]{\textbf{11}~\cite{capacitated:kcenter_2015_An} \\ (poly time)} & \makecell[l]{\textbf{13}~\cite{capacitated:kcenter_outliers_2014_Cygan_Tomasz} \\ (poly time)} & \makecell[l]{\textbf{5}~\cite{capacitated:kcenter_2000_khuller} \\ (poly time)} & \makecell[l]{\textbf{13}~\cite{capacitated:kcenter_outliers_2014_Cygan_Tomasz} \\ (poly time)} \\ \hline
\multirow{3}{*}{3.} & \makecell[l]{Balanced  Clustering Problem\\ (non-uniform version)} & \makecell[l]{-} & \makecell[l]{-} & \makecell[l]{-} & \makecell[l]{-}\\ \cline{2-6}
& \makecell[l]{Balanced  Clustering Problem\\ (uniform version)} &
\makecell[l]{\textbf{9}~\cite{capacitated:kcenter_rgather_outliers_2017_Hu_Ding_WADS} \\ (poly time)}& \makecell[l]{\textbf{13}~\cite{capacitated:kcenter_rgather_outliers_2017_Hu_Ding_WADS} \\ (poly time)}& \makecell[l]{\textbf{6}~\cite{capacitated:kcenter_rgather_outliers_2017_Hu_Ding_WADS} \\ (poly time) \vspace{1mm} \\  \textbf{4}~\cite{capacitated:kcenter_balanced_2017_CCCG_Hu_Ding} \\ ($\mathsf{FPT}$ time)} & \makecell[l]{\textbf{13}~\cite{capacitated:kcenter_rgather_outliers_2017_Hu_Ding_WADS} \\ (poly time)} \\ \hline
4. & \makecell[l]{Chromatic  Clustering Problem} & \makecell[l]{-} & \makecell[l]{-} & \makecell[l]{-} & \makecell[l]{-}\\ \hline
\multirow{2}{*}{5.} & \makecell[l]{Fault Tolerant  Clustering Problem\\ (non-uniform version)} & \makecell[l]{-}& \makecell[l]{-} & \makecell[l]{-} & \makecell[l]{-} \\ \cline{2-6}
& \makecell[l]{Fault Tolerant  Clustering Problem\\ (uniform version)} & \makecell[l]{\textbf{3}~\cite{fault:kcenter_2000_khuller} \\ (poly time)} & \makecell[l]{-} & \makecell[l]{\textbf{2}~\cite{fault:kcenter_2000_khuller} \\ (poly time)} & \makecell[l]{\textbf{6}~\cite{fault:outlier_kmedian_2020_Varadarajan} \\ (poly time)}  \\ \hline
%
% 8. & \makecell[l]{Balanced Fair $k$-supplier Problem\\ $k$-{\tt BFsupplier}} & \makecell[l]{-} & \makecell[l]{-} \\ \hline
% %
6. & \makecell[l]{Strongly Private  Clustering Problem} & \makecell[l]{\textbf{5}~\cite{rgather:2018_Rosner} \\ (poly time)} & \makecell[l]{-} & \makecell[l]{\textbf{4}~\cite{rgather:2018_Rosner} \\ (poly time)} & \makecell[l]{-} \\ \hline
7. & \makecell[l]{$\ell$-Diversity  Clustering Problem~\tablefootnote{Bercea~\emph{et al.}~\cite{fairness:2019_Bercea_Schmidt_APPROX} did not explicitly mention the results for the $\ell$-diversity clustering problem. However, the results follow from~\cite{fairness:2019_Bercea_Schmidt_APPROX} since the $\ell$-diversity problem is a special case of fair clustering problem with disjoint color classes as noted by Bandyapadhyay~\emph{et al.}~\cite{fairness:2021_constrained_Bandyapadhyay}.}} & \makecell[l]{\textbf{7}~\cite{fairness:2019_Bercea_Schmidt_APPROX} \\ (poly time)} & \makecell[l]{-} & \makecell[l]{\textbf{5}~\cite{fairness:2019_Bercea_Schmidt_APPROX} \\ (poly time)} & \makecell[l]{-} \\ \hline
\multirow{2}{*}{8.} & \makecell[l]{Fair Clustering Problem\\ (disjoint color classes)} & \makecell[l]{\textbf{7}~\cite{fairness:2019_Bercea_Schmidt_APPROX} \\ (poly time)} & \makecell[l]{-} & \makecell[l]{\textbf{5}~\cite{fairness:2019_Bercea_Schmidt_APPROX} \\ (poly time)} & \makecell[l]{-} \\ \cline{2-6}
& \makecell[l]{Fair Clustering Problem\\ (overlapping color classes)} & \makecell[l]{\boldmath{$(5,4\Delta+3)$} \\\cite{fairness:2019_Bera_NIPS} \\ (poly time)} & \makecell[l]{-} & \makecell[l]{\boldmath{$(3,4\Delta+3)$} \\\cite{fairness:2020_Harb_NIPS} \\ (poly time)} & \makecell[l]{-} \\ \hline
% 8. & \makecell[l]{Non-Uniform $k$-supplier problem \\ $k$-{\tt NUsupplier}} & \makecell[l]{-}
% & \makecell[l]{-} & \makecell[l]{-} & \makecell[l]{-}
% \\ \hline
%
% 9. & \makecell[l]{Outlier $k$-supplier problem \\ $k$-{\tt Osupplier}} & \makecell[l]{-}& \makecell[l]{-} & \makecell[l]{-} & \makecell[l]{-} \\ \hline
% % 
% 10. & \makecell[l]{Balanced Outlier $k$-supplier problem \\ $k$-{\tt BOsupplier}} & \makecell[l]{-} & \makecell[l]{-} & \makecell[l]{-} & \makecell[l]{-}
% \\ \hline
%
\end{tabular}
\end{adjustbox}
\caption{The known results for the constrained $k$-supplier/center problems with and without outliers for $z = 1$. We only mention the results for the \emph{soft assignment} version of the problems; however some of these results also hold for the \emph{hard assignment} version that we did not mention explicitly for the sake of simplicity. For the fair clustering problem with overlapping color classes, no polynomial time constant factor approximation algorithm is known yet. The algorithms of~\cite{fairness:2019_Bera_NIPS} and~\cite{fairness:2020_Harb_NIPS} give constant approximation guarantees; however, they violate the fair constraints by an additive factor of $4\Delta+3$, where $\Delta$ denotes the maximum number of color classes a client can be part of.}\label{table:known_results}
\end{table}

We would like to point out that the $r$-gather, $r$-capacity, and balanced $k$-supplier problems that we study in this paper impose cluster-wise constraints. However, the alternate definitions of the $r$-gather, $r$-capacity, and balanced $k$-supplier problems impose constraints on individual facility locations. 
Formally, the balanced $k$-supplier problem with location-wise constraints is defined as follows:
\begin{definition} [Balanced $k$-Supplier Problem with Location-Wise Constraints]\label{def:location_wise}
Given an instance $\mathcal{I} = (L,C,k,d,z)$ of the $k$-supplier problem, a lower bound function $g \colon L \to \mathbb{Z}_{+}$, and an upper bound function $h \colon L \to \mathbb{Z}_{+}$, find a set $F \subseteq L$ of $k$ facility and assignment $\phi \colon C \to L$ that minimizes the assignment cost $\max_{x \in C}  \big\{ d(x,\phi(x))^{\zl} \big\}$ and satisfies that $g(f) \leq |\phi^{-1}(f)| \leq h(f)$ for every facility location $f \in F$.  
\end{definition}

\noindent The above definition also encapsulates the $r$-gather and $r$-capacity $k$-supplier problems with location-wise constraints. For the $r$-gather problem, $h(f) = |C|$ for every facility location $f \in L$, and for the $r$-capacity problem, $g(f) = 0$ for every facility location $f \in L$. Moreover, when every facility location has the same values of $g(f)$ and $h(f)$, then the problems are known as the \emph{uniform $r$-gather, $r$-capacity, and balanced $k$-supplier problems}. It is easy to see that for the uniform version, the problem with location-wise constraints is equivalent to the problem with cluster-wise constraints. In other words, Definition~\ref{def:location_wise} is the same as the definition given in Table~\ref{table:definitions} for the uniform case. To the best of our knowledge, the non-uniform variant of the problem with cluster-wise constraints has not been studied before. Furthermore, we believe that it is non-trivial to obtain any polynomial time reduction between the problems with cluster-wise constraints and location-wise constraints.

% We show that the problem with cluster-wise constraints is easier than the problem with location-wise constraints. We describe the reduction in Appendix~\ref{appendix:reduction_balanced}.
% Therefore, the classical algorithms for the $r$-gather, $r$-capacity, and balanced clustering problems with location-wise constraints also holds for the cluster-wise constraints. 
Although Table~\ref{table:known_results} nicely summarizes all the previously known results, we defer the complete details to Appendix~\ref{appendix:related_work}. 
From Table~\ref{table:known_results}, we note that most of the known approximation algorithms have polynomial running time; however, they give much worse approximation guarantees than $3$ and $2$ for the $k$-supplier and $k$-center versions, respectively. 
Furthermore, the approximation guarantees worsen for the outlier version of these problems. Among the $\mathsf{FPT}$ time algorithms, the only known result is due to Hu Ding~\cite{rgather:balanced_CCCG_HU_Ding} for the uniform balanced $k$-center problem without outliers. The author used the classical $2$-approximation algorithm of Teofilo F. Gonzalez~\cite{kcenter:1985_Gonzalez} as a subroutine to obtain $\mathsf{FPT}$ time $4$-approximation algorithm for the problem. 
In this work, we improve this approximation guarantee to $2$ by using a bi-criteria approximation algorithm for the unconstrained $k$-center problem. 
In Section~\ref{section:list_k_supplier}, we describe the algorithm in detail. 
Also, note that our algorithm holds for both the $k$-supplier and $k$-center objectives with and without outliers. 
Moreover, our algorithm works for several other constrained clustering problems described in Table~\ref{table:definitions} and we expect this list to grow further when new interesting problems will be discovered. Moreover, these are the best approximation guarantees possible for these problems, parameterized by $k$ (and $m$ for the outlier version). 
In the next section, we define the notations that we use frequently in our proofs.

\section{Notations} \label{section:notations}
Let $\mathcal{I} = (L,C,k,d,z,m)$ denote any instance of the unconstrained outlier $k$-supplier problem. For the non-outlier version, $m = 0$. Let $F \subseteq L$ be any given center set. Let $C'$ be any subset of $C$. Then, we define the unconstrained $k$-supplier cost of $C'$ as:
\[
\costt{F,C'} \equiv \max_{x \in C'} \Big\{ d(F,x)^{\zl} \Big\}, \quad \textrm{where $d(F,x) = \min_{f \in F} \big\{ d(f,x)\big\}$ }
\]
If $F$ is a singleton set $\{f\}$, then we simply use the notation: $\costt{f,C'}$ instead of $\costt{\{f\},C'}$. Let $Z^{*}$ denote an optimal set of outliers and $F^{*}$ denote an optimal $k$-center set for the unconstrained outlier $k$-supplier instance $\mathcal{I}$. Then, we denote the optimal unconstrained outlier $k$-supplier cost of the instance by $OPT$. That is, $OPT \equiv \costt{F^{*},C \setminus Z^{*}}$. 

Let $\mathbb{O} = \{O_{1},\dotsc,O_k\}$ be any partitioning of a subset $C'$ of $C$. Let $F$ be a set of facility locations. Let $f_{i}^{*}$ be a facility in $F$ that minimizes the $1$-supplier cost of partition $O_i$. Then, the $k$-supplier cost of $\mathbb{O}$ with respect to the facility set $F$ is given as follows:

$$\Psi(F, \mathbb{O}) \equiv \max_{i = 1}^{k} \Big\{ \max_{x \in O_{i}} \big\{ d(x, f_{i}^{*})^{
\zl} \big\} \Big\}$$

\noindent In other words, a partition $O_i$ is completely assigned to the facility location $f_{i}^{*}$ in $F$, and the cost of every client in $O_i$ is measured with respect to $f_{i}^{*}$. Then, $\Psi(F,\mathbb{O})$ is simply the maximum assignment cost over all the clients. Furthermore, the optimal $k$-supplier cost of $\mathbb{O}$ is given as follows: $\Psi^{*}(\mathbb{O}) \equiv \min\limits_{\textrm{$k$-center-set $F$}} \Psi(F,\mathbb{O})$.

\section{Algorithm for List Outlier k-Supplier}\label{section:list_k_supplier}

In this section, we design an $\mathsf{FPT}$ (in $m$ and $k$) time algorithm for the list outlier $k$-supplier problem with running time $(k+m)^{O(k)} \cdot n^{O(1)}$. It implies a $k^{O(k)} \cdot n^{O(1)}$ time algorithm for the list $k$-supplier problem without outliers.
The algorithm is surprisingly simple. 
Let $\mathcal{I} = (L,C,k,d,z,m)$ be any instance of the outlier $k$-supplier problem. The algorithm consists of the following two parts:
\begin{enumerate}
    \item A $(1,O(\ln n))$ bi-criteria approximation algorithm for the outlier $k$-supplier problem. The algorithm outputs a set $S \subseteq L$ of $O(k \ln n)$ facilities and a set of $Z \subseteq C$ of $m$ outliers such that every client in $C \setminus Z$ has assignment cost at most the optimal cost $OPT$ of the outlier $k$-supplier instance $\mathcal{I}$. 
    \item For every outlier $x$ in $Z$, let $g(x)$ denote a facility in $L$ that is closest from $x$. Let $G = \{g(x) ~\mid ~ x \in Z \}$ be the set of such facilities. Let $C'$ be any arbitrary subset of $C$ of size at least $|C| - m$. We then show that for any arbitrary partitioning $\mathbb{O}$ of $C'$, there exists a $k$-sized subset $S' \subseteq S \cup G$ that gives $\rp{3}{z}$ approximation for $\mathbb{O}$. Therefore, we create a list $\mathcal{L}$ of all possible $k$-sized subsets of $S \cup G$. The list $\mathcal{L}$ is the required solution to the list outlier $k$-supplier problem. Moreover, the size of the list is $|\mathcal{L}| \leq O(m + k \ln n)^k = (k + m)^{O(k)} n$.
\end{enumerate}

\noindent We discuss the above two parts in more detail in Sections~\ref{subsection:1} and~\ref{subsection:2}, respectively.

\subsection{Bi-Criteria Approximation}\label{subsection:1}

In this subsection, we design a $(1,O(\ln n))$ bi-criteria approximation algorithm for the outlier $k$-supplier problem. An $(\alpha,\beta)$ bi-criteria approximation algorithm is defined as follows:

\begin{definition}[Bi-criteria Approximation]
Let $\mathcal{I} = (L,C,k,d,z,m)$ be any instance of the outlier $k$-supplier problem.
An $(\alpha,\beta)$ bi-criteria approximation algorithm is an algorithm that outputs a set $F' \subseteq L$ of $\beta k$ facilities and a set $Z' \subseteq C$ of at most $m$ ouliers such that the cost of the client set $C \setminus Z'$ with respect to $F'$ is at most $\alpha$ times the optimal cost of the instance. That is,
\[
\costt{F',C \setminus Z'} \leq \alpha \cdot \min_{\textrm{$k$-center-set $F$ and $|Z| \leq m$}} \Big\{ \costt{F,C \setminus Z} \Big\} = \alpha \cdot OPT
\]
\end{definition}

\noindent In the following lemma, we design an $(\alpha,\beta)$ bi-criteria approximation algorithm for the problem with $\alpha = 1$ and $\beta = O(\ln n)$.
\begin{lemma}\label{lemma:bicriteria}
Let $\mathcal{I} = (L,C,k,d,z,m)$ be any instance of the outlier $k$-supplier problem. Then, there exists a $(1,O(\ln n))$ bi-criteria approximation algorithm for the problem. Moreover, the running time of the algorithm is $O(n^2 \log n)$. 
\end{lemma}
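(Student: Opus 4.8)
The plan is to reduce the outlier $k$-supplier problem to a covering problem that can be attacked with the greedy set-cover algorithm. First I would guess the optimal cost $OPT$: since $OPT$ equals $d(f,x)^z$ for some facility $f \in L$ and client $x \in C$, there are at most $|L|\cdot|C| \le n^2$ candidate values, and I can sort them in $O(n^2 \log n)$ time and iterate over the sorted list (or binary-search in it). So fix the correct threshold $r$ with $r^z = OPT$; I will describe the procedure for a single guessed value and note that the smallest guess that ``succeeds'' is the right one.

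Given the guess $r$, form for each facility $\ell \in L$ its ball $B(\ell) = \{x \in C : d(\ell,x) \le r\}$. In the optimal solution the $k$ facilities $F^\star$ cover all but $|Z^\star| \le m$ clients with radius $r$, i.e.\ $k$ of these balls cover $\ge |C| - m$ clients. Now run the standard greedy max-coverage/set-cover algorithm: repeatedly pick the facility whose ball covers the most yet-uncovered clients, stopping once at most $m$ clients remain uncovered. The key step is the classical greedy analysis: because $k$ balls suffice to cover all but $m$ clients, after each greedy pick the number of still-``coverable'' uncovered clients (uncovered clients that lie in some optimal ball) shrinks by a factor of $(1 - 1/k)$; hence after $t$ rounds at most $(|C|-m)\cdot e^{-t/k} + m$ clients are uncovered, and choosing $t = \lceil k \ln n \rceil = O(k \ln n)$ drives the first term below $1$, so at most $m$ clients remain. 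Declare those $\le m$ leftover clients to be the outlier set $Z'$ and the $O(k \ln n)$ chosen facilities to be $F'$. By construction every client in $C \setminus Z'$ is within distance $r$ of some chosen facility, so $\costt{F',C\setminus Z'} \le r^z = OPT$, giving $\alpha = 1$ and $\beta = O(\ln n)$.

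For the running time: building all balls $B(\ell)$ for a fixed $r$ is $O(n^2)$, and a careful greedy implementation (bucketing facilities by current coverage count, or recomputing coverage counts each of the $O(k\ln n)$ rounds) runs in $O(n^2)$ per guessed value; but we do not want to pay $n^2$ per guess over all $n^2$ guesses. Instead, sort the $n^2$ distances once ($O(n^2 \log n)$) and binary-search over the guessed threshold: the predicate ``greedy leaves $\le m$ uncovered with $O(k\ln n)$ facilities'' is monotone in $r$, so $O(\log n)$ guesses suffice, for a total of $O(n^2 \log n)$. (Alternatively, one observes the whole computation can be organized so that the dominant term is the initial $O(n^2\log n)$ sort.) The main obstacle — and the only real content — is the greedy coverage analysis in the presence of outliers: one must be careful that the $(1-1/k)$ shrinkage is applied only to the population of uncovered clients that are coverable by optimal balls, since the $\le m$ eventual outliers need never be covered; modulo that bookkeeping the argument is the textbook greedy bound. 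This establishes the lemma.
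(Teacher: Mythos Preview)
Your proposal is correct and follows essentially the same approach as the paper: guess the optimal radius among the at most $|L|\cdot|C|$ client--facility distances, reduce to a max-coverage instance via balls, run greedy to get $O(k\ln n)$ facilities leaving at most $m$ clients uncovered, and binary-search over the sorted distances for the $O(n^2\log n)$ bound. The paper simply cites CLRS for the greedy coverage bound where you spell it out; one small quibble is that your monotonicity claim for the binary-search predicate is not literally needed (and not obviously true)---what suffices is the one-sided guarantee that greedy succeeds for every $r\ge OPT^{1/z}$, which forces the final binary-search output to be at most $OPT$---but the paper makes the same hand-wave, so this is not a gap relative to the paper.
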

\begin{proof}
We reduce the outlier $k$-supplier instance to a \emph{max $k$-coverage instance}. Some similar reductions have been previously carried out in the following works:~\cite{uncertain:kcenter_2008_Cormode,misc:1998_Asymmetric_Rina_Panigrahy}. A max $k$-coverage instance is denoted by $(U,\mathscr{C},k)$, where $U$ denotes the universal set and the set $\mathscr{C}$ is a collection of subsets of $U$. The task of the max $k$-coverage problem is to select at most $k$ sets from $\mathscr{C}$ whose union covers the maximum number of elements from $U$. For now, assume that we know the optimal cost $OPT$ of the outlier $k$-supplier instance. We then create a max $k$-coverage instance $(U,\mathscr{C},k)$, where $U$ corresponds to the client set $C$. In other words, for every client $j \in C$, there is an element $e_j$ in $U$. The set $\mathscr{C}$ corresponds to the facility location set $L$. That is, for every facility location $f \in L$, there is a set $S_{f} \in \mathscr{C}$. Furthermore, an element $e_j$ belongs to a set $S_{f}$ if and only if $d(j,f)^{\zl} \leq OPT$. 

Since there are $k$ facilities in $L$ that gives the optimal cost $OPT$ for at least $|C|-m$ clients in $C$, there exists $k$ sets in $\mathscr{C}$ that cover all but $m$ elements of $U$. For the max $k$-coverage problem, there exists a standard polynomial time greedy algorithm that selects $O(k \ln n)$ sets from $\mathscr{C}$ that cover at least as many elements of $U$ as covered by the optimal $k$ sets (see Section 35.3 of~\cite{book:CLRS_Third_Edition}). We use this greedy algorithm on $(U,\mathscr{C})$ to obtain a collection $\mathscr{C}' \subset \mathscr{C}$ of $O(k \ln n)$ sets that covers at least $|U|-m$ elements of $U$. Let $F'$ be the set of facility locations corresponding to $\mathscr{C}'$. Let $C'$ be the clients covered by $F'$. Then, every client in $C'$ has an  assignment cost of at most $OPT$ to one of the facilities in $F'$. Therefore, $F'$ is a $(1,O(\ln n))$ bi-criteria approximate solution to the problem and $Z' = C \setminus C'$ is the set of at most $m$ outlier points.

The only remaining task is to guess the value $OPT$ of the optimal solution. Since, there are at most $|L| \cdot |C|$ possible distances between a client and facility, we execute the greedy algorithm for every possibility. We choose the smallest distance for which the greedy algorithm outputs at most $O(k \ln n)$ sets and covers at least $|U|-m$ elements of $U$. The overall running time of the algorithm is polynomial in $n$. It can further be improved by performing a binary search on the $|L| \cdot |C|$ possible distances. More precisely, the reduction to the max $k$-coverage instance and the greedy algorithm both takes $O(n^2)$ time. After that $O(\log n)$ multiplicative factor due to binary search gives the overall time of $O(n^2 \log n)$ for the algorithm.  This completes the proof of the lemma.
\end{proof}

\noindent The above algorithm also applies to the outlier $k$-center problem since the outlier $k$-center problem is a special case of the outlier $k$-supplier problem. Next, we convert the bi-criteria approximation algorithm to a list outlier $k$-supplier/center algorithm. 
\subsection{Conversion: Bi-Criteria Approximation to List Outlier k-Supplier Algorithm}\label{subsection:2}

We prove the following lemma. 
\begin{lemma} \label{lemma:ksupplier_conversion}
Let $\mathcal{I} = (L,C,k,d,z,m)$ be any instance of the outlier $k$-supplier problem. Let $S$ be any $(1,O(\ln n))$ bi-criteria approximate solution of $\mathcal{I}$ and let $Z$ be the corresponding set of at most $m$ outliers. Let $g(x)$ denote a facility location in $L$ that is closest to an outlier $x \in Z$. 
Let $G = \{g(x) ~\mid ~ x \in Z \}$ be the set of such facilities. Let $C'$ be any subset of $C$ of size at least $|C| - m$. Then, for any arbitrary partitioning $\mathbb{O} = \{O_1,\dotsc,O_k\}$ of $C'$, there exists a $k$ sized subset $S'$ of $S \cup G$ that gives $\rp{3}{z}$ approximation for the cost of $\mathbb{O}$. That is,
\[
\Psi(S',\OC) \leq \rp{3}{z} \cdot \Psi^{*}(\OC).
\]
\end{lemma}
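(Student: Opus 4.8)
The plan is to exhibit, for the given partitioning $\mathbb{O}=\{O_1,\dots,O_k\}$ of $C'$, an explicit witness set $S'\subseteq S\cup G$ of size at most $k$, and then bound $\Psi(S',\mathbb{O})$ cluster by cluster using the triangle inequality. Fix an optimal center set $F^{\star}=\{f_1^{\star},\dots,f_k^{\star}\}$ for $\mathbb{O}$, where $f_i^{\star}$ is the facility serving $O_i$, and set $\rho\coloneqq\Psi^{*}(\mathbb{O})$, so that $d(x,f_i^{\star})^{z}\le\rho$, i.e.\ $d(x,f_i^{\star})\le\rho^{1/z}$, for every $i$ and every $x\in O_i$. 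The first step is the observation that $OPT\le\rho$: since $|C'|\ge|C|-m$, the pair $(F^{\star},\,C\setminus C')$ is a feasible solution to the outlier $k$-supplier instance $\mathcal{I}$ (it discards at most $m$ clients), and its cost $\costt{F^{\star},C'}$ is at most $\rho$ by construction of $F^{\star}$; hence $OPT^{1/z}\le\rho^{1/z}$. Combined with the guarantee of the bi-criteria solution, $\costt{S,C\setminus Z}\le OPT$, this yields $d(x,S)\le\rho^{1/z}$ for every $x\in C\setminus Z$.

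Next, for each non-empty cluster $O_i$ I would pick a surrogate center $s_i\in S\cup G$ by a case split. If $O_i\not\subseteq Z$, choose a witness $x\in O_i\setminus Z$ and let $s_i\in S$ be a facility with $d(x,s_i)\le\rho^{1/z}$, which exists by the previous paragraph. If instead $O_i\subseteq Z$, choose any $x\in O_i$ and set $s_i\coloneqq g(x)\in G$; since $f_i^{\star}\in L$ and $g(x)$ is a closest facility of $L$ to $x$, we get $d(x,s_i)\le d(x,f_i^{\star})\le\rho^{1/z}$. In both cases the witness $x$ lies in $O_i$, so $d(x,f_i^{\star})\le\rho^{1/z}$, and for an arbitrary $y\in O_i$ two applications of the triangle inequality give $d(y,s_i)\le d(y,f_i^{\star})+d(f_i^{\star},x)+d(x,s_i)\le 3\rho^{1/z}$, i.e.\ $d(y,s_i)^{z}\le\rp{3}{z}\,\rho$. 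Let $S'$ be the collection of these $s_i$; it is a subset of $S\cup G$ of size at most $k$, which (assuming, as we may, $|S\cup G|\ge k$) we pad with arbitrary elements of $S\cup G$ to a $k$-center-set, noting that extra centers only decrease $\Psi$. Since $s_i\in S'$ witnesses cost at most $\rp{3}{z}\rho$ for $O_i$, we conclude $\Psi(S',\mathbb{O})\le\max_i\max_{y\in O_i}d(y,s_i)^{z}\le\rp{3}{z}\cdot\Psi^{*}(\mathbb{O})$.

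The triangle-inequality bookkeeping is routine; the crux is the case analysis together with the inequality $OPT\le\Psi^{*}(\mathbb{O})$. That inequality is exactly what makes the bi-criteria set $S$ useful here: clusters that retain at least one point outside the discarded set $Z$ are covered by $S$ at radius $\rho^{1/z}$, whereas clusters that happen to lie entirely inside $Z$ — which $S$ gives no control over — are instead handled by the nearest-facility map $g$, and there the bound on $d(x,g(x))$ is available precisely because each such cluster is a ball of radius $\rho^{1/z}$ around its own optimal center $f_i^{\star}\in L$. I expect the only remaining point requiring care is confirming that the constructed $S'$ indeed lies in $S\cup G$ and has at most $k$ elements, which is immediate from the one-center-per-cluster construction, and (a minor technicality) the padding step when fewer than $k$ clusters are non-empty.
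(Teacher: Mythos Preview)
Your proposal is correct and follows essentially the same route as the paper: pick one representative $x_i$ from each cluster $O_i$, attach to it a surrogate center in $S\cup G$ that is within $\rho^{1/z}$ (using $S$ when $x_i\notin Z$ and $g(x_i)$ when $x_i\in Z$), and finish with the triangle inequality $d(y,s_i)\le d(y,f_i^{\star})+d(f_i^{\star},x_i)+d(x_i,s_i)$. The only cosmetic difference is that the paper picks an \emph{arbitrary} $x_i\in O_i$ and then case-splits on whether this particular $x_i$ lies in $Z$, whereas you case-split on whether the whole cluster $O_i$ lies in $Z$ and choose the witness accordingly; both lead to the identical bound $d(x_i,s_i)\le\rho^{1/z}$ and the same $\rp{3}{z}$ factor. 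Your explicit justification of $OPT\le\Psi^{*}(\mathbb{O})$ and the padding remark are details the paper leaves implicit.
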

\begin{proof}
Suppose that for every client in $C$, the closest facility location in $S \cup G$ is given by a function $h \colon C \to S \cup G$. Let $F_{\mathbb{O}} = \{ f_{1},\dotsc,f_{k}\}$ denote the optimal facility set of $\mathbb{O}$ such that partition $O_{i}$ is assigned to center $f_i$. Let $x_{i}$ be any client in $O_{i}$. 
First, we observe that the cost of assigning $x_i$ to facility $h(x_{i})$ is at most $\Psi^{*}(\mathbb{O})$. That is, $d(x_i,h(x_i))^{\zl} \leq \Psi^{*}(\mathbb{O})$. To prove this statement, consider the case when $x_i \in Z$. Then $h(x_i) = g(x_i)$. In other words, $h(x_i)$ is the facility location in $L$ that is closest to $x_i$. Therefore, $d(x_i,h(x_i))^{\zl} \leq d(x_i,f_i)^{\zl} \leq \Psi^{*}(\mathbb{O})$. On the other hand, if $x_i \in C \setminus Z$, then the assignment cost $d(x_i,h(x_i))^{\zl}$ is at most $OPT$ since $S$ is a $(1,O(\ln n))$ bi-criteria approximate solution of $\mathcal{I}$. And, therefore, $d(x_i,h(x_i))^{\zl} \leq OPT \leq \Psi^{*}(\mathbb{O})$.

Now, we show that the set $S' \coloneqq \{h(x_1),\dotsc,h(x_k)\}$ is a $\rp{3}{z}$ approximation to the clustering cost of $\mathbb{O}$. That is, $\Psi(S',\mathbb{O}) \leq \rp{3}{z} \cdot \Psi^{*}(\mathbb{O})$. The proof follows from the following sequence of inequalities:
\begin{align*}
    \Psi(S',\mathbb{O}) &\leq \max_{i = 1}^{k} \Big\{ \costt{h(x_i),O_i} \Big\} \\
    &\leq  \max_{i = 1}^{k} \Big\{ \max_{x \in O_{i}} \big\{ d(h(x_i),x)^{\zl} \big\} \Big\} \\
    &\leq  \max_{i = 1}^{k} \Big\{ \max_{x \in O_{i}} \Big\{ \Big( d(x,f_i) + d(f_i,x_i) + d(x_i,h(x_i)) \Big)^{\zl} \Big\} \Big\} \\
    &\quad \quad \quad \quad \textrm{(using triangle inequality)}\\
    &\leq  \max_{i = 1}^{k} \Big\{ \max_{x \in O_{i}} \Big\{ \Big( d(x,f_i) + d(f_i,x_i) + \Psi^{*}(\mathbb{O})^{1/z} \Big)^{\zl} \Big\} \Big\} \\
    &\leq  \max_{i = 1}^{k} \Big\{ \max_{x \in O_{i}} \Big\{ \Big( 2 \cdot \Psi^{*}(\mathbb{O})^{1/z} + \Psi^{*}(\mathbb{O})^{1/z} \Big)^{\zl} \Big\} \Big\} \\
    &\leq  \max_{i = 1}^{k} \Big\{ \max_{x \in O_{i}} \Big\{ \Big( 3 \cdot \Psi^{*}(\mathbb{O})^{1/z} \Big)^{\zl} \Big\} \Big\} \\
    &= \rp{3}{z} \cdot \Psi^{*}(\mathbb{O})
\end{align*}
This completes the proof of the lemma.
\end{proof}

\noindent We show a similar result for the outlier $k$-center problem with improved approximation guarantee, as follows:

\begin{lemma} \label{lemma:kcenter_conversion}
Let $\mathcal{I} = (C,C,k,d,z,m)$ be any instance of the outlier $k$-center problem. Let $S$ be any $(1,O(\ln n))$ bi-criteria approximate solution of $\mathcal{I}$ and let $Z$ be the set of at most $m$ outliers. Let $C'$ be any subset of $C$ of size at least $|C| - m$. Then, for any arbitrary partitioning $\mathbb{O} = \{O_1,\dotsc,O_k\}$ of $C'$, there exists a $k$ sized subset $S'$ of $S \cup Z$ that gives $\rp{2}{z}$ approximation for the cost of $\mathbb{O}$. That is,
\[
\Psi(S',\OC) \leq \rp{2}{z} \cdot \Psi^{*}(\OC)
\]
\end{lemma}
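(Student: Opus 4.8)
The plan is to follow the proof of Lemma~\ref{lemma:ksupplier_conversion} almost verbatim, but to exploit the defining feature of the $k$-center problem, namely $L = C$, which lets us save one triangle-inequality hop and improve $\rp{3}{z}$ to $\rp{2}{z}$. First I would observe that when $L = C$ the facility location in $L$ closest to an outlier $x \in Z$ is $x$ itself, so the set $G$ of Lemma~\ref{lemma:ksupplier_conversion} coincides with $Z$; this is exactly why the candidate pool in the statement is $S \cup Z$. Now fix an arbitrary partitioning $\mathbb{O} = \{O_1,\dots,O_k\}$ of $C'$ and let $F_{\mathbb{O}} = \{f_1,\dots,f_k\}$ be an optimal center set for $\mathbb{O}$ with $O_i$ assigned to $f_i$, so that $d(x,f_i)^{\zl} \le \Psi^{*}(\mathbb{O})$ for every $i$ and every $x \in O_i$. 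As in the previous proof I would also record the inequality $OPT \le \Psi^{*}(\mathbb{O})$: since $|C \setminus C'| \le m$, the set $C'$ is a legitimate choice of $C \setminus Z$ with $|Z| \le m$, and $\Psi(F_{\mathbb{O}},\mathbb{O}) \ge \costt{F_{\mathbb{O}},C'} \ge OPT$.

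The key step is to pick, for each $i$, a good surrogate $s_i \in S \cup Z$ for the optimal center $f_i$. Because $L = C$ we have $f_i \in C$, so exactly one of two cases occurs. If $f_i \in Z$, take $s_i = f_i$, which already lies in $S \cup Z$, and note $d(f_i,s_i) = 0$. If $f_i \in C \setminus Z$, then since $S$ is a $(1,O(\ln n))$ bi-criteria solution of $\mathcal{I}$, the client $f_i$ has assignment cost at most $OPT$ to some facility of $S$; choose such a facility to be $s_i$, so that $d(f_i,s_i)^{\zl} \le OPT \le \Psi^{*}(\mathbb{O})$. In either case, for every $x \in O_i$ the triangle inequality gives
\[
d(x,s_i) \;\le\; d(x,f_i) + d(f_i,s_i) \;\le\; \Psi^{*}(\mathbb{O})^{1/z} + \Psi^{*}(\mathbb{O})^{1/z} \;=\; 2 \cdot \Psi^{*}(\mathbb{O})^{1/z},
\]
and hence $d(x,s_i)^{\zl} \le \rp{2}{z} \cdot \Psi^{*}(\mathbb{O})$.

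Finally I would set $S' = \{s_1,\dots,s_k\}$, padding it with arbitrary elements of $S \cup Z$ if the $s_i$ are not all distinct; this is harmless since $\Psi(\cdot,\mathbb{O})$ is monotone non-increasing in the center set (each partition is assigned to its closest center). Assigning $O_i$ to $s_i$ then yields $\Psi(S',\mathbb{O}) \le \max_{i} \max_{x \in O_i} d(x,s_i)^{\zl} \le \rp{2}{z}\cdot \Psi^{*}(\mathbb{O})$, which is the claim. I do not expect a genuine obstacle: $OPT \le \Psi^{*}(\mathbb{O})$, the bi-criteria covering guarantee, and a single triangle-inequality hop are all already available from Lemma~\ref{lemma:ksupplier_conversion}, and the only new ingredient is the case split on whether $f_i$ is an outlier, which is precisely what removes the third hop once we may use $S \cup Z$ rather than $S \cup G$ with a distinct $G$. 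The one point needing a line of care is the padding: one should note $|S \cup Z| \ge k$ so that $S'$ of size exactly $k$ exists, which holds since $|S| = O(k \ln n) \ge k$.
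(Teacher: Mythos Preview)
Your proof is correct and follows essentially the same approach as the paper: define the surrogate centers by a case split on whether the optimal center $f_i$ lies in $Z$ (in which case $f_i$ itself serves) or in $C\setminus Z$ (in which case the bi-criteria guarantee supplies some $s_i\in S$ with $d(f_i,s_i)^{\zl}\le OPT$), and then use a single triangle-inequality hop through $f_i$. The paper packages this via the nearest-neighbor map $h\colon C\to S\cup Z$ and sets $S'=\{h(f_1),\dots,h(f_k)\}$, but the content is identical; your explicit justification of $OPT\le\Psi^{*}(\mathbb{O})$ and of the padding step are points the paper leaves implicit.
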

\begin{proof}
Suppose that for every client in $C$, the closest facility location in $S \cup Z$ is given by a function $h \colon C \to S \cup Z$. Let $F_{\mathbb{O}} = \{ f_{1},\dotsc,f_{k}\} \subseteq C$ denote the optimal facility set of $\mathbb{O}$ such that partition $O_{i}$ is assigned to facility $f_i$. Note that $f_i$ is also a client location. Now, observe that the cost of assigning a client $f_i$ to facility location $h(f_{i})$ is at most $\Psi^{*}(\mathbb{O})$. That is, $d(f_i,h(f_i))^{\zl} \leq \Psi^{*}(\mathbb{O})$. To prove this statement, consider the case when $f_i \in Z$, then $h(f_i) = f_i$. Therefore, $d(f_i,h(f_i))^{\zl} = 0 \leq \Psi^{*}(\mathbb{O})$. On the other hand, if $f_i \in C \setminus Z$, then the assignment cost $d(f_i,h(f_i))^{\zl}$ is at most the optimal cost $OPT$ since $S$ is a $(1,O(\ln n))$ bi-criteria approximate solution of $\mathcal{I}$. And, therefore, $d(f_i,h(f_i))^{\zl} \leq OPT \leq \Psi^{*}(\mathbb{O})$.

Now, we show that the set $S' \coloneqq \{h(f_1),\dotsc,h(f_k)\}$ is a $\rp{2}{z}$ approximation to the clustering cost of $\mathbb{O}$. That is, $\Psi(S',\mathbb{O}) \leq \rp{2}{z} \cdot \Psi^{*}(\mathbb{O})$. The proof follows from the following sequence of inequalities:
\begin{align*}
    \Psi(S',\mathbb{O}) &\leq \max_{i = 1}^{k} \Big\{ \costt{h(f_i),O_i} \Big\} \\
    &\leq  \max_{i = 1}^{k} \Big\{ \max_{x \in O_{i}} \big\{ d(h(f_i),x)^{\zl} \big\} \Big\} \\
    &\leq  \max_{i = 1}^{k} \Big\{ \max_{x \in O_{i}} \Big\{ \Big( d(x,f_i) + d(f_i,h(f_i)) \Big)^{\zl} \Big\} \Big\},\quad  \textrm{(using triangle inequality)}\\
    &\leq  \max_{i = 1}^{k} \Big\{ \max_{x \in O_{i}} \Big\{ \Big( d(x,f_i) + \Psi^{*}(\mathbb{O})^{1/z} \Big)^{\zl} \Big\} \Big\} \\
    &\leq  \max_{i = 1}^{k} \Big\{ \max_{x \in O_{i}} \Big\{ \Big( \Psi^{*}(\mathbb{O})^{1/z} + \Psi^{*}(\mathbb{O})^{1/z} \Big)^{\zl} \Big\} \Big\} \\
    &= \rp{2}{z} \cdot \Psi^{*}(\mathbb{O})
\end{align*}
This completes the proof of the lemma.
\end{proof}

\noindent The above two lemmas gives the following corollary for the  list outlier $k$-supplier/center problem.

\begin{corollary}[Main Result]
Given any instance $\mathcal{I} = (L,C,k,d,z,m)$ of the outlier $k$-supplier problem. There is an algorithm that outputs a list $\mathcal{L}$ of $k$-center-sets such that for any arbitrary partitioning $\mathbb{O}$ of any subset $C'$ of $C$ of size at least $|C|-m$, there is a center set $F$ in the list that gives $\rp{3}{z}$ approximation for the clustering cost of $\mathbb{O}$. That is,
\[
\Psi(F,\mathbb{O}) \leq \alpha \cdot \Psi^{*}(\mathbb{O}), \quad \textrm{for $\alpha = \rp{3}{z}$}
\]
For the outlier $k$-center instance $\mathcal{I} = (C,C,k,d,z,m)$, the approximation factor is $\alpha = \rp{2}{z}$. Moreover, the size of the list is at most $(k+m)^{O(k)} \cdot n$ and the running time of the algorithm is $(k+m)^{O(k)} \cdot n + O(n^2 \log n)$, which is $\mathsf{FPT}$ in $k$ and $m$.
\end{corollary}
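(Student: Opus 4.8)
The plan is to simply assemble the three lemmas of this section into a single constructive algorithm and track the parameters. First I would run the $(1,O(\ln n))$ bi-criteria approximation algorithm of Lemma~\ref{lemma:bicriteria} on the instance $\mathcal{I}$ to obtain, in time $O(n^2\log n)$, a set $S \subseteq L$ of $O(k\ln n)$ facilities together with a set $Z \subseteq C$ of at most $m$ outliers such that $\costt{S,\,C\setminus Z} \le OPT$. Next, for each outlier $x \in Z$ I would compute a closest facility $g(x) \in L$ by an $O(|Z|\cdot|L|) = O(n^2)$ scan and set $G = \{g(x) \mid x \in Z\}$, so $|G| \le m$. The output list $\mathcal{L}$ is then defined to be the family of all subsets of $S \cup G$ of size at most $k$.

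For correctness in the $k$-supplier case I would invoke Lemma~\ref{lemma:ksupplier_conversion} with precisely this pair $(S,Z)$ returned by the bi-criteria algorithm: it guarantees that for every subset $C' \subseteq C$ with $|C'| \ge |C|-m$ and every partitioning $\OC = \{O_1,\dots,O_k\}$ of $C'$ there is a subset $S' \subseteq S\cup G$ of size at most $k$ with $\Psi(S',\OC) \le \rp{3}{z}\cdot\Psi^{*}(\OC)$, and by construction $S' \in \mathcal{L}$. For the $k$-center instance $(C,C,k,d,z,m)$ the argument is identical, except that I would take $\mathcal{L}$ to consist of all subsets of $S \cup Z$ of size at most $k$ (note $Z \subseteq C = L$ here) and apply Lemma~\ref{lemma:kcenter_conversion} in place of Lemma~\ref{lemma:ksupplier_conversion}, which yields the sharper factor $\alpha = \rp{2}{z}$.

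It remains to bound the size of $\mathcal{L}$ and the running time. We have $|S \cup G| \le |S \cup Z| = O(k\ln n + m)$, hence $|\mathcal{L}| \le (k+1)\binom{|S\cup G|}{k} \le (k\ln n + m)^{O(k)}$; a short case analysis — splitting on whether $k \ge \sqrt{\ln n}$ (so $\ln n \le k^2$ and $(\ln n)^k \le k^{O(k)}$) or $k < \sqrt{\ln n}$ (so $k\ln\ln n \le \ln n$ and $(\ln n)^k \le n$) — converts this to $(k+m)^{O(k)}\cdot n$. The running time is the $O(n^2\log n)$ of Lemma~\ref{lemma:bicriteria}, plus $O(n^2)$ to build $G$, plus $O(k\cdot|\mathcal{L}|) = (k+m)^{O(k)}\cdot n$ to enumerate the subsets, i.e. $(k+m)^{O(k)}\cdot n + O(n^2\log n)$ overall, which is $\mathsf{FPT}$ in $k$ and $m$. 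There is no genuinely hard step in this proof: the only points requiring care are the elementary estimate that turns $(k\ln n + m)^k$ into $(k+m)^{O(k)}\cdot n$, and the bookkeeping observation that Lemmas~\ref{lemma:ksupplier_conversion} and~\ref{lemma:kcenter_conversion} must be applied to the very same $(S,Z)$ produced by the bi-criteria step (rather than to two independently guessed solutions), since the guarantee for outliers relies on $g(\cdot)$ being defined on exactly the outlier set $Z$ that the bi-criteria algorithm discarded.
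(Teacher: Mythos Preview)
Your proposal is correct and follows essentially the same route as the paper: run the bi-criteria lemma to get $(S,Z)$, form $G$ (or use $Z$ itself in the $k$-center case), take all $k$-subsets of $S\cup G$ as the list, and appeal to Lemmas~\ref{lemma:ksupplier_conversion}/\ref{lemma:kcenter_conversion} for the approximation guarantee. The only cosmetic differences are that the paper enumerates exactly $k$-sized subsets rather than ``at most $k$'', and that it states the estimate $(\ln n)^k \le k^{O(k)}\cdot n$ directly while you spell out the $k \gtrless \sqrt{\ln n}$ case split; neither changes the argument.
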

\begin{proof}
By Lemma~\ref{lemma:ksupplier_conversion}, we have that there is a $k$ sized subset $S' \subseteq S \cup G$ that gives $\rp{3}{z}$ approximation for the clustering $\mathbb{O}$. The size of the set $S \cup G$ is at most $m + O(k \ln n)$. We create the list $\mathcal{L}$ by adding to it all possible $k$ sized subsets of $S \cup G$. Therefore, $|\mathcal{L}| = (m+ O(k \ln n))^k$. If $m \leq k \ln n$, then $|\mathcal{L}| = k^{O(k)} \cdot (\ln n)^k$. Further, using the inequality that $(\ln n)^k \leq k^{O(k)} \cdot n$, we get $|\mathcal{L}| \leq k^{O(k)} \cdot n$. On the other hand, if $m > k \ln n$, then $|\mathcal{L}| \leq m^{O(k)}$. This proves that $|\mathcal{L}| \leq (m+k)^{O(k)} \cdot n$ for the list outlier $k$-supplier problem. 

Moreover, the sets $S$ and $Z$ can be computed in time $O(n^2 \log n)$ using Lemma~\ref{lemma:bicriteria}. And, the set $G$ can be computed from the set $Z$ in time $O(m \cdot n) = O(n^2)$. Thus, the overall running time of the algorithm is $(k+m)^{O(k)} \cdot n + O(n^2 \log n)$. 
A similar proof for the $k$-center version follows from Lemma~\ref{lemma:kcenter_conversion}. This proves the corollary.
\end{proof}
\section{FPT Hardness: Outlier k-Supplier Problem}\label{section:fpt_hardness_outlier}
In this section, we establish $\mathsf{FPT}$ hardness of approximation results for the outlier $k$-supplier and $k$-center problems. The proofs are rather trivial. For the outlier $k$-supplier problem, we obtain the following result:
\begin{theorem}\label{theorem:hardness_outlier_ksupplier}
For any constant $\veps>0$, $z > 0$, and any function $g \colon \mathbb{Z}^{+} \times \mathbb{Z}^{\geq 0} \to \mathbb{R}^{+}$, the outlier $k$-supplier problem can not be approximated to factor $(\rp{3}{z} - \veps)$ in time $g(k,m) \cdot n^{m+o(k)} $, assuming Gap-ETH.
\end{theorem}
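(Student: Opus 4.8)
The plan is to obtain this bound for free from the non-outlier hardness result, Theorem~\ref{theorem:hardness_ksupplier}, by exploiting the fact that the $k$-supplier problem is literally the special case $m = 0$ of the outlier $k$-supplier problem. So the argument will be a one-line reduction followed by a routine check that the claimed running-time bound for an outlier algorithm collapses, at $m = 0$, to exactly the running-time class forbidden by Theorem~\ref{theorem:hardness_ksupplier}.

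Concretely, I would argue by contradiction. Assume that for some $\veps > 0$, $z > 0$ and some $g \colon \mathbb{Z}^{+} \times \mathbb{Z}^{\geq 0} \to \mathbb{R}^{+}$ there is an algorithm $A$ that, given any outlier $k$-supplier instance $(L, C, k, d, z, m)$, outputs a $(\rp{3}{z} - \veps)$-approximate solution in time $g(k,m) \cdot n^{m + o(k)}$, where $n = |C \cup L|$. Given an arbitrary $k$-supplier instance $\mathcal{I} = (L, C, k, d, z)$, form the outlier instance $\mathcal{I}' = (L, C, k, d, z, 0)$. By Definition~\ref{definition:outlier_k_supplier}, every feasible solution of $\mathcal{I}'$ must discard a set $Z \subseteq C$ with $|Z| \le 0$, i.e.\ $Z = \emptyset$ and $C' = C$; hence $\mathcal{I}'$ and $\mathcal{I}$ have the same collection of feasible $k$-center-sets, the same cost for every such set, and in particular the same optimum. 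Running $A$ on $\mathcal{I}'$ therefore yields a $(\rp{3}{z} - \veps)$-approximate solution for $\mathcal{I}$ in time $g(k,0) \cdot n^{0 + o(k)}$. Setting $g'(k) \coloneqq g(k,0)$, which is a function $\mathbb{Z}^{+} \to \mathbb{R}^{+}$, this is a $g'(k) \cdot n^{o(k)}$-time $(\rp{3}{z} - \veps)$-approximation algorithm for the $k$-supplier problem, contradicting Theorem~\ref{theorem:hardness_ksupplier} (which, like the present statement, assumes Gap-ETH).

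The reduction is the identity map on instances, it preserves both the size parameter $n$ and every solution cost, and it incurs only $O(1)$ extra time, so there is no real obstacle here; the only point that needs care is the running-time bookkeeping, namely that $n^{m + o(k)}$ becomes $n^{o(k)}$ at $m = 0$ and that $g(k, m)$ specializes to a function of $k$ alone, so that the resulting algorithm falls squarely within the class excluded by Theorem~\ref{theorem:hardness_ksupplier}. The matching $(\rp{2}{z} - \veps)$ lower bound for the outlier $k$-center problem follows by the identical argument, invoking Theorem~\ref{theorem:hardness_kcenter} and the outlier $k$-center problem in place of Theorem~\ref{theorem:hardness_ksupplier} and the outlier $k$-supplier problem.
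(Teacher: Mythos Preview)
Your proposal is correct and matches the paper's own proof essentially line for line: the paper also argues by contradiction, specializes to $m = 0$ so that the outlier $k$-supplier problem becomes the ordinary $k$-supplier problem, sets $h(k) \coloneqq g(k,0)$, and invokes Theorem~\ref{theorem:hardness_ksupplier} to derive a contradiction. Your write-up is slightly more explicit about why the $m=0$ instances coincide with the non-outlier instances, but the argument is identical.
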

\begin{proof}
For the sake of contradiction, assume that for some constant $\veps>0$, $z > 0$, and function $g \colon \mathbb{Z}^{+} \times \mathbb{Z}^{\geq 0} \to \mathbb{R}^{+}$ the outlier $k$-supplier problem can be approximated to factor $(\rp{3}{z} - \veps)$ in time $g(k,m) \cdot n^{m+o(k)}$. For $m = 0$, the problem is the classical (non-outlier) $k$-supplier problem. Therefore, the $k$-supplier problem can be approximated to factor $(\rp{3}{z} - \veps)$ in time $g(k,0) \cdot n^{o(k)} = h(k)\cdot n^{o(k)}$ for some function $h \colon \mathbb{Z}^{+} \to \mathbb{R}^{+}$. It contradicts Theorem~\ref{theorem:hardness_ksupplier}. Thus, it proves the theorem.
\end{proof}

\noindent Similarly, we obtain the following $\mathsf{FPT}$ hardness of approximation result for the outlier $k$-center problem:
\begin{theorem}\label{theorem:hardness_outlier_kcenter}
For any constant $\veps>0$, $z > 0$, and any function $g \colon \mathbb{Z}^{+} \times \mathbb{Z}^{\geq 0} \to \mathbb{R}^{+}$, the outlier $k$-center problem can not be approximated to factor $(\rp{2}{z} - \veps)$ in time $g(k,m) \cdot n^{m+o(k)}$, assuming Gap-ETH.
\end{theorem}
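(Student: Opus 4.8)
The plan is to mirror exactly the argument used for Theorem~\ref{theorem:hardness_outlier_ksupplier}, simply substituting the $k$-center hardness result in place of the $k$-supplier one. Recall that the outlier $k$-center problem (Definition~\ref{definition:outlier_k_supplier} with $L=C$) specializes to the classical $k$-center problem when no outliers are removed, i.e.\ when $m=0$. So I would argue by contradiction: suppose that for some constant $\veps>0$, some $z>0$, and some function $g\colon \mathbb{Z}^{+}\times\mathbb{Z}^{\geq 0}\to\mathbb{R}^{+}$, there is an algorithm approximating the outlier $k$-center problem to factor $(\rp{2}{z}-\veps)$ in time $g(k,m)\cdot n^{m+o(k)}$.

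Then I would instantiate this hypothetical algorithm at $m=0$. The running time becomes $g(k,0)\cdot n^{0+o(k)}=h(k)\cdot n^{o(k)}$, where $h(k)\coloneqq g(k,0)$ is a legitimate function $\mathbb{Z}^{+}\to\mathbb{R}^{+}$, and the problem being solved is precisely the (non-outlier) $k$-center problem, with the same approximation factor $(\rp{2}{z}-\veps)$. This directly contradicts Theorem~\ref{theorem:hardness_kcenter}, which asserts that no such $g(k)\cdot n^{o(k)}$-time $(\rp{2}{z}-\veps)$-approximation exists under Gap-ETH. Hence the assumed algorithm cannot exist, proving the theorem.

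There is essentially no obstacle here — the proof is a two-line reduction by restriction, identical in structure to the preceding theorem's proof, with Theorem~\ref{theorem:hardness_ksupplier} replaced by Theorem~\ref{theorem:hardness_kcenter} and the factor $\rp{3}{z}$ replaced by $\rp{2}{z}$. The only point worth stating explicitly is that setting $m=0$ genuinely recovers the classical $k$-center problem (the outlier set $Z$ must have size at most $0$, so $C'=C$), so that the lower bound of Theorem~\ref{theorem:hardness_kcenter} applies verbatim. Concretely:

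\begin{proof}
For the sake of contradiction, assume that for some constant $\veps>0$, $z > 0$, and function $g \colon \mathbb{Z}^{+} \times \mathbb{Z}^{\geq 0} \to \mathbb{R}^{+}$ the outlier $k$-center problem can be approximated to factor $(\rp{2}{z} - \veps)$ in time $g(k,m) \cdot n^{m+o(k)}$. For $m = 0$, the problem is the classical (non-outlier) $k$-center problem. Therefore, the $k$-center problem can be approximated to factor $(\rp{2}{z} - \veps)$ in time $g(k,0) \cdot n^{o(k)} = h(k)\cdot n^{o(k)}$ for some function $h \colon \mathbb{Z}^{+} \to \mathbb{R}^{+}$. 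It contradicts Theorem~\ref{theorem:hardness_kcenter}. Thus, it proves the theorem.
\end{proof}
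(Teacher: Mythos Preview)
Your proof is correct and is essentially identical to the paper's own proof: both argue by contradiction, specialize to $m=0$ to recover the non-outlier $k$-center problem, and invoke Theorem~\ref{theorem:hardness_kcenter}. The formal proof you wrote matches the paper's verbatim.
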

\begin{proof}
For the sake of contradiction, assume that for some constant $\veps > 0$, $z > 0$, and function $g \colon \mathbb{Z}^{+} \times \mathbb{Z}^{\geq 0} \to \mathbb{R}^{+}$ the outlier $k$-center problem can be approximated to factor $(\rp{2}{z} - \veps)$ in time $g(k,m) \cdot n^{m+o(k)}$. For $m = 0$, the problem is the classical (non-outlier) $k$-center problem. Therefore, the $k$-center problem can be approximated to factor $(\rp{2}{z} - \veps)$ in time $g(k,0) \cdot n^{o(k)} = h(k)\cdot n^{o(k)}$ for some function $h \colon \mathbb{Z}^{+} \to \mathbb{R}^{+}$. It contradicts Theorem~\ref{theorem:hardness_kcenter}. Thus, it proves the theorem.
\end{proof}

\noindent Note that the above two results do not eliminate the possibility of having the polynomial time $\rp{3}{z}$ and $\rp{2}{z}$ approximation algorithms for the outlier $k$-supplier and $k$-center problems, respectively. In fact, there exists polynomial time $\rp{3}{z}$ and $\rp{2}{z}$ approximation algorithms for the outlier $k$-supplier~\cite{outlier:kcenter_2001_Charikar} and $k$-center problem~\cite{outlier:kcenter_2020_Deeparnab}, respectively. Moreover, we can even obtain the optimal solutions to the outlier $k$-supplier and $k$-center problems using a trivial $O(n^{k+1}k)$ running time algorithm.
\section{Conclusion}\label{section:conclusion}
In this paper, we worked within the unified framework of Ding and Xu~\cite{constrained:2015_Ding_and_Xu} to obtain $\mathsf{FPT}$ time approximation algorithms for a range of constrained $k$-supplier/center problems in general metric spaces. Surprisingly, the algorithm turned out to be simple via a natural reduction to the maximum $k$-coverage problem.
Moreover, we noted that better approximation guarantees are not possible for the constrained $k$-supplier/center problems in $\mathsf{FPT}$ time parameterized by the number of clusters $k$, assuming Gap-ETH. This settles the complexity of the problem as far as parmeterization by $k$ is concerned. However, our algorithm only works for the soft assignment version of the constrained clustering problems. It will be interesting to extend the algorithm for the hard assignment version.
Furthermore, it will be interesting to explore other constrained problems that can fit into the unified framework and can benefit from the results in this work.

\addcontentsline{toc}{section}{References}
\bibliographystyle{alpha}
\bibliography{references}

\appendix

\section{FPT Hardness of Approximation}\label{appendix:reduction}

In this section, we prove the $\mathsf{FPT}$ hardness of approximation result for the $k$-center problem using a reduction from the set coverage problem.
\begin{definition}[Set Coverage]
Given an integer $k>0$, a set $U$, and a collection $\mathscr{C} = \{S_1,\dotsc,S_m\}$ of subsets of $U$, i.e., $S_j \subseteq U$ for every $j \in [m]$, determine if there exist $k$ sets in $\mathscr{C}$ that cover all elements in $U$.
\end{definition}
%\ragesh{A definition ending in a question did not seem right. So, I changed it to "determine if"...}

\noindent
\textbf{Reduction to $k$-Center Problem:}  Given a set coverage instance $(U,\mathscr{C},k)$, we construct a $k$-center instance $(C,C,k,d,z)$ as follows. Let $C$ be partitioned into two sets $C_{\ell}$ and $C_{r}$. For every set $S_{i} \in \mathscr{C}$, we define a location $f_{i} \in C_{\ell}$. For every element $e \in U$, we define a location $x_{e} \in C_{r}$. Now, let us define the distance function $d(.,.)$ as follows. For any two clients $x_{e},x_{e'} \in C_{r}$ the distance $d(x_{e},x_{e'}) = 2$. For any two clients $f_{i},f_{j} \in C_{\ell}$ the distance $d(f_{i},f_{j}) = 1$. For any client $f_{i} \in C_{\ell}$ and $x_{e} \in C_{r}$, if $e \notin S_{i}$, the distance $d(x_e,f_{i}) = 2$; otherwise $d(x_e,f_{i}) = 1$. Furthermore, assume that $d(.,.)$ is a symmetric function, i.e., $d(x,y) = d(y,x)$ for every $x,y \in C$. Also assume that $d(x,x) \geq 0$ for every $x \in C$. It is easy to see that $d(.,.)$ satisfies all the properties of a metric space. This completes the construction.

In a YES instance of the set coverage problem, there exists $k$ sets: $S_{i_{1}},\dotsc,S_{i_{k}}$ in $\mathscr{C}$ that covers all elements of $U$, i.e., $S_{i_{1}} \cup \dotsc \cup S_{i_{k}} = U$. Then, the center set $F = \{f_{i_{1}},\dotsc,f_{i_{k}}\} \subseteq C_{\ell}$ gives the $k$-center cost $1$. In a NO instance of the set coverage problem, there does not exist any $k$ sets in $\mathscr{C}$ that cover all elements of $U$. Then, we show that for any center set $F \subseteq C$ of size $k$, the $k$-center cost is $\rp{2}{z}$. We prove this statement using contraposition. Suppose there is a center set $F \subseteq C$ that gives $k$-center cost $1$. Then, any facility $f$ in $F \cap C_{r}$ can not serve a client in $C_{r}$ other than $f$ itself; otherwise the $k$-center cost would be $\rp{2}{z}$. Therefore, we move all the facilities in $F \cap C_{r}$ to their closest locations in $C_{\ell}$; the $k$-center cost stays at most $1$. It means, there are $k$ facilities in $F \cap C_{\ell}$ that give $k$-center cost $1$. In other words, there are $k$ sets in $\mathscr{C}$ that cover all elements of $U$. This proves the contrapositive statement.  
Therefore, for a NO instance of the set coverage problem, the $k$-supplier cost of the reduced instance is $\rp{2}{z}$. Since the set coverage problem is $\mathsf{W[2]}$-hard~\cite{Vertex_Cover:2013_Set_Cover_W2_Hard}, it implies that the $k$-supplier problem can not be approximated to any factor better than $\rp{2}{z}$, in $\mathsf{FPT}$ time, assuming $\mathsf{FPT} \neq \mathsf{W[2]}$. Moreover, the following stronger $\mathsf{FPT}$ hardness result is known for the set coverage problem (see Corollary 2 and Theorem 25 of~\cite{fpt:2020_Pasin_SODA}):

\begin{theorem}
For any function $g \colon \mathbb{Z}^{+} \to \mathbb{R}^{+}$, there is no $g(k) \cdot n^{o(k)}$ time algorithm for the set coverage problem, assuming Gap-ETH.
\end{theorem}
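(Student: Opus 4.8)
The plan is to reduce $3$-SAT to set coverage. We only need the (weaker) consequence of Gap-ETH that $3$-SAT has no $2^{o(n)}$-time algorithm on $n$ variables: a $2^{o(n)}$-time satisfiability decider would in particular distinguish satisfiable $3$-CNF formulas from those of value at most $1-\veps$ (the latter being unsatisfiable), contradicting Gap-ETH. So fix a $3$-CNF $\varphi$ on a variable set $V$ with $|V|=n$ and with $m$ clauses; note $m=O(n^3)$, since that bounds the number of distinct $3$-clauses.

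Given a target parameter $k$ (chosen below), partition $V$ into $k$ blocks $V_1,\dots,V_k$ of size at most $\lceil n/k\rceil$ each. Build a set-coverage instance $(U,\mathscr{C},k)$: let $U$ contain an element $u_c$ for every clause $c$ together with $k$ fresh \emph{selector} elements $v_1,\dots,v_k$; for every block $i\in[k]$ and every assignment $\sigma$ to the variables of $V_i$ (there are at most $2^{\lceil n/k\rceil}$ of them), put into $\mathscr{C}$ the set
\[
S_{i,\sigma} \;=\; \{v_i\} \;\cup\; \{\, u_c \;:\; \sigma \text{ makes some literal of } c \text{ true} \,\}.
\]
Any $k$ sets covering $U$ must cover $v_1,\dots,v_k$, and since $v_i$ lies only in sets of the form $S_{i,\cdot}$ this forces exactly one chosen set $S_{i,\sigma_i}$ per block; those sets cover all the $u_c$ iff every clause is satisfied by $\sigma_1\cup\dots\cup\sigma_k$. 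Hence $(U,\mathscr{C},k)$ is a YES-instance of set coverage iff $\varphi$ is satisfiable.

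For the lower bound, suppose set coverage were solvable in time $g(k)\cdot N^{\iota(k)}$, with $g$ computable, $\iota(k)=o(k)$, and $N$ the instance size. Choose $k=\gamma(n)$ for a computable, unbounded, slowly growing function $\gamma$ with $\gamma(n)\le\sqrt n$ and $g(\gamma(n))\le 2^{\sqrt n}$ (such a $\gamma$ exists since each $g(k)$ is finite, so $2^{\sqrt n}$ eventually dominates any fixed $g(k)$). The instance built above has size $N=2^{O(n/\gamma(n))}\cdot n^{O(1)}=2^{o(n)}$ and is produced in time $2^{o(n)}$, and running the assumed algorithm on it takes
\[
g(\gamma(n))\cdot N^{\iota(\gamma(n))} \;\le\; 2^{\sqrt n}\cdot 2^{O(n\,\iota(\gamma(n))/\gamma(n))}\cdot 2^{O(\iota(\gamma(n))\log n)} \;=\; 2^{o(n)},
\]
using $\iota(k)/k\to 0$ for the middle factor and $\iota(\gamma(n))\le\gamma(n)\le\sqrt n$ for the last. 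So $3$-SAT would be solvable in $2^{o(n)}$ time, contradicting Gap-ETH.

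The main obstacle is exactly this last piece of bookkeeping. The reduction is not polynomial — it inflates the instance by a $2^{\Theta(n/k)}$ factor — so a fixed constant $k$ is useless, while a too-fast-growing $k$ makes the uncontrolled prefactor $g(k)$ explode; the crux is to pick $k=\gamma(n)$ growing slowly enough that $g(\gamma(n))$ stays subexponential yet fast enough that both the $2^{\Theta(n/k)}$ blow-up and the $N^{\iota(k)}$ term collapse to $2^{o(n)}$, i.e.\ balancing three competing subexponential budgets. The set-coverage/satisfiability correspondence and the size estimates themselves are routine. Alternatively, one can bypass the reduction entirely by invoking the known ETH-based $f(k)\,n^{o(k)}$ lower bounds for $k$-\textsc{Dominating Set} together with the standard equivalence between \textsc{Dominating Set} and set coverage; under the stronger Gap-ETH hypothesis, the cited work of Manurangsi yields this statement along with its sharper gap and approximation variants.
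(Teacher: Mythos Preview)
Your argument is correct and self-contained; it is essentially the classical Chen--Huang--Kanj--Xia reduction that underlies the $f(k)\cdot n^{o(k)}$ lower bound for \textsc{Dominating Set}/\textsc{Set Cover} under ETH, and you carry out the balancing of $k=\gamma(n)$ against $g$, $\iota$, and the $2^{\Theta(n/k)}$ blow-up carefully. The paper, by contrast, does not prove this theorem at all: it simply quotes it as a known fact, citing Corollary~2 and Theorem~25 of Manurangsi's work. Your closing paragraph already anticipates exactly this alternative, so you are fully aware of the shortcut the paper takes. What your write-up buys is a from-first-principles derivation that needs only ETH (hence a fortiori Gap-ETH) rather than a black-box citation; what the paper's route buys is brevity and, via Manurangsi, the stronger gap/approximation versions of the statement. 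One small caveat worth flagging: your construction of $\gamma$ requires $g$ to be computable, which the theorem as stated does not assume; this is a standard technicality in these arguments and does not affect the intended conclusion.
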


\noindent This implies the following $\mathsf{FPT}$ hardness of approximation for the $k$-center problem.

\begin{theorem}\label{theorem:hardness_supplier2}
For any constant $\veps>0$, $z >0$, and any function $g \colon \mathbb{Z}^{+} \to \mathbb{R}^{+}$, the $k$-center problem can not be approximated to factor $(\rp{2}{z} - \veps)$ in time $g(k) \cdot n^{o(k)}$, assuming Gap-ETH.
\end{theorem}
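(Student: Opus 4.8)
The plan is to derive Theorem~\ref{theorem:hardness_supplier2} by combining the gap-creating reduction from set coverage to $k$-center that has just been constructed in this section with the Gap-ETH-based hardness of set coverage quoted immediately above. The reduction takes a set coverage instance $(U,\mathscr{C},k)$ with $|U| + |\mathscr{C}| = N$ and produces a $k$-center instance on $n = N$ points (one client location per set $S_i$, one per element $e$), keeping the parameter $k$ unchanged and running in polynomial time. The reduction's correctness, already argued above, gives a gap: YES instances map to $k$-center cost $1$, and NO instances map to $k$-center cost $\rp{2}{z}$ (since the cheapest non-trivial distance in the construction is $2$, and $2^{\zl}$ for the cost function $d(\cdot,\cdot)^{\zl}$).

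The main steps, in order, are as follows. First I would fix $\veps > 0$, $z > 0$ and suppose for contradiction that some algorithm $\mathcal{A}$ approximates $k$-center to factor $(\rp{2}{z} - \veps)$ in time $g(k)\cdot n^{o(k)}$. Second, given a set coverage instance, run the polynomial-time reduction to obtain a $k$-center instance of size $n = N$ with the same parameter $k$, then run $\mathcal{A}$ on it. Third, observe that since the gap between YES and NO cases is exactly the factor $\rp{2}{z}$, which strictly exceeds the algorithm's approximation ratio $\rp{2}{z} - \veps$, the value returned by $\mathcal{A}$ distinguishes the two cases: if $\mathcal{A}$ reports a cost below $\rp{2}{z}$ we are in a YES instance, otherwise a NO instance. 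Fourth, conclude that this solves set coverage in time $g(k)\cdot N^{o(k)} + \poly(N) = g'(k) \cdot N^{o(k)}$, contradicting the quoted Gap-ETH hardness of set coverage, which finishes the proof.

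A small point that needs care is the dependence on $z$: the factor $\rp{2}{z}$ is a fixed constant greater than $1$ for every fixed $z > 0$ (indeed $2^{\zl} \geq 2^{\min(1,z)} \cdot \ldots$ — in any case bounded away from $1$), so for any fixed $\veps$ the strict inequality $\rp{2}{z} - \veps < \rp{2}{z}$ holds and the gap argument goes through; one should note that $\veps$ may be taken small enough that $\rp{2}{z} - \veps > 1$ so the hypothesized algorithm is still meaningful, but this is automatic from the statement being about a \emph{constant} $\veps$. I do not anticipate a genuine obstacle here: the reduction and its correctness have already been established in the preceding text, and the only work is the bookkeeping that composing a polynomial-time parameter-preserving reduction with a $g(k)\cdot n^{o(k)}$ algorithm yields a $g'(k)\cdot n^{o(k)}$ algorithm for the source problem. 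The one place to be slightly careful is ensuring the reduction is truly \emph{parameter-preserving} (the target uses the same $k$) and \emph{size-polynomial} (so $o(k)$ in the target's size is still $o(k)$ in the source's size), both of which hold by inspection of the construction.
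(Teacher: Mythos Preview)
Your proposal is correct and follows essentially the same approach as the paper: combine the gap-creating, parameter-preserving polynomial-time reduction from set coverage to $k$-center (YES $\mapsto$ cost $1$, NO $\mapsto$ cost $\rp{2}{z}$) with the Gap-ETH lower bound for set coverage, and observe that a $(\rp{2}{z}-\veps)$-approximation in time $g(k)\cdot n^{o(k)}$ would decide set coverage in time $g'(k)\cdot N^{o(k)}$. The paper compresses all of this into the single line ``This implies the following $\mathsf{FPT}$ hardness of approximation for the $k$-center problem,'' whereas you have spelled out the routine contrapositive and bookkeeping details explicitly.
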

\section{Proof of Theorem~\ref{theorem:list_alpha_approx}}\label{appendix:proof_theorem_3}

\begin{theorem}\label{theorem:list_alpha_approx_2}Let $\mathcal{I} = (L,C,k,d,z,\mathbb{S})$ be any instance of the constrained $k$-supplier problem, and let $A_{\mathbb{S}}$ be a partition algorithm for $\mathbb{S}$ with running time $T_{A}$.
Let $B$ be any algorithm for the list $k$-supplier problem with running time $T_{B}$.
Then, there is an algorithm that outputs a clustering $\OC \in \mathbb{S}$ that is an $\alpha$-approximate solution for the constrained $k$-supplier instance $\mathcal{I}$. For the $k$-supplier objective, $\alpha = \rp{3}{z}$, and for the $k$-center objective, $\alpha = \rp{2}{z}$. Moreover, the running time of the algorithm is $O(T_B + |\mathcal{L}| \cdot T_A)$.
\end{theorem}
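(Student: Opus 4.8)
The plan is to combine the two given subroutines in the obvious brute-force way and argue correctness purely by unwinding the defining guarantees of the list $k$-supplier problem and of the partition algorithm. First I would run $B$ on the instance $\mathcal{I} = (L,C,k,d,z)$ to produce the list $\mathcal{L}$ of $k$-center-sets. Then, for each $F \in \mathcal{L}$, I would invoke $A_{\mathbb{S}}$ on $F$ to obtain a clustering $\OC_F \in \mathbb{S}$ that attains the minimum value of $\Psi(F,\cdot)$ over all clusterings in $\mathbb{S}$. Finally, the algorithm outputs the clustering $\OC_F$ (keeping track of its witnessing center set $F$) that minimizes $\Psi(F,\OC_F)$ over all $F \in \mathcal{L}$.

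For correctness, let $\OC^{\mathrm{opt}} \in \mathbb{S}$ be an optimal solution to the constrained $k$-supplier instance, so its cost is $\Psi^{*}(\OC^{\mathrm{opt}})$. Since $\OC^{\mathrm{opt}}$ is in particular a partitioning of $C$, the definition of the list $k$-supplier problem guarantees that $\mathcal{L}$ contains some $F^{*}$ with $\Psi(F^{*},\OC^{\mathrm{opt}}) \le \alpha \cdot \Psi^{*}(\OC^{\mathrm{opt}})$, where $\alpha = \rp{3}{z}$ for the $k$-supplier objective and $\alpha = \rp{2}{z}$ for the $k$-center objective. When the enumeration reaches $F^{*}$, the partition algorithm returns $\OC_{F^{*}}$ with $\Psi(F^{*},\OC_{F^{*}}) \le \Psi(F^{*},\OC^{\mathrm{opt}})$, because $\OC^{\mathrm{opt}} \in \mathbb{S}$ and $A_{\mathbb{S}}$ outputs a clustering in $\mathbb{S}$ of least cost with respect to $F^{*}$. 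Letting $(F,\OC)$ be the final output pair, the choice of the running minimum gives $\Psi(F,\OC) \le \Psi(F^{*},\OC_{F^{*}})$, and since $\Psi^{*}(\OC) = \min_{F'} \Psi(F',\OC) \le \Psi(F,\OC)$, chaining the inequalities yields $\Psi^{*}(\OC) \le \alpha \cdot \Psi^{*}(\OC^{\mathrm{opt}})$; that is, $\OC$ is an $\alpha$-approximate solution, as required.

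For the running time, computing $\mathcal{L}$ costs $T_{B}$, the loop performs $|\mathcal{L}|$ calls to $A_{\mathbb{S}}$ at cost $T_{A}$ each, and maintaining the running minimum of $\Psi(F,\OC_F)$ adds only lower-order overhead, so the total is $O(T_{B} + |\mathcal{L}| \cdot T_{A})$.

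I do not anticipate a genuine obstacle here; the argument is a direct composition. The only point that requires care is that the partition algorithm must return an \emph{exact} minimizer of $\Psi(F,\cdot)$ over $\mathbb{S}$ (rather than an approximate one), so that feeding it $F^{*}$ can do no worse than $\OC^{\mathrm{opt}}$ itself; given that, no loss beyond the factor $\alpha$ coming from the list is incurred.
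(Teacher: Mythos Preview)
Your proposal is correct and follows essentially the same approach as the paper: run $B$ to get the list, apply $A_{\mathbb{S}}$ to every center set in it, return the pair of minimum cost, and argue correctness by chaining the list-$k$-supplier guarantee with the exact-minimizer property of the partition algorithm. The paper's proof is identical in structure and detail, including the running-time accounting.
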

\begin{proof}
The algorithm is simple. We first run algorithm $B$ to obtain a list $\mathcal{L}$.
For every $k$-center-set in the list, the algorithm runs the partition algorithm $A_{\mathbb{S}}$ on it. 
Then the algorithm outputs a center set that gives the minimum clustering cost. 
Let $F'$ be this $k$-center-set and $\OC'$ be the corresponding clustering. 
We claim that $(F',\OC')$ is an $\alpha$-approximation for the constrained $k$-supplier problem.

Let $\OC^{*}$ be an optimal solution for the constrained $k$-supplier instance $(L,C,k,d,z,\mathbb{S})$ and $F^{*}$ denote the corresponding $k$-center-set. 
By the definition of the list $k$-supplier problem there is a $k$-center-set $F$ in the list $\mathcal{L}$, such that $\Psi(F,\OC^{*}) \leq \alpha \cdot \Psi(F^{*},\OC^{*})$. 
Let $\OC = A_{\mathbb{S}}(F) \in \mathbb{S}$ be the optimal clustering corresponding to $F$. 
Thus, $\Psi(F,\OC) \leq \alpha \cdot \Psi(F^{*},\OC^{*})$. 
Since $(F',\mathbb{O}')$ gives the minimum cost clustering in the list, we have $\Psi(F',\OC') \leq \Psi(F,\OC)$. 
Therefore, $\Psi(F',\OC') \leq \alpha \cdot \Psi(F^{*},\OC^{*})$. 

The running time analysis is also simple. $T_{B}$ is the time to obtain the list $\mathcal{L}$. Then, the algorithm runs the partition procedure $A_{\mathbb{S}}$ for every center set in the list, the running time of this step is $|\mathcal{L}| \cdot T_A$. 
Picking a minimum cost clustering from the list takes $O(|\mathcal{L}|)$ time. Hence the overall running time is $O(T_B + |\mathcal{L}| \cdot T_A)$.
\end{proof}
\section{Fault-Tolerant k-Supplier Problem}\label{appendix:fault_tolerant}
In this section, we show a reduction from the fault-tolerant $k$-supplier problem to the chromatic $k$-supplier problem. In the fault-tolerant $k$-supplier problem, there is no explicit constraint imposed on any cluster. However, the cost function is different from the classical $k$-supplier problem. That is, for every client $x \in C$, we are given an integer $\ell_{x} \leq k$. Then, for a set $F = \{f_1,\dotsc,f_k\}$ of facility locations, the assignment cost of a client $x$ is proportional to the distance to its $\ell_{x}^{th}$ closest facility location in $F$. Thus, the overall cost of a fault-tolerant instance is given as: $\ftcostt{F,C} \equiv \max_{x \in C} \big\{ d'(F,x)^{z} \big\}$, where $d'(F,x)$ is the distance of $x$ to the $\ell_{x}^{th}$ closest facility location in $F$. Therefore, the problem does not satisfy the constrained clustering framework. More precisely, the problem does not satisfy Definition~\ref{definition:constrained_k_supplier} of the constrained $k$-supplier problem. 
We reduce any instance of the fault-tolerant $k$-supplier problem to an instance of the chromatic $k$-supplier problem so that it satisfies the constrained clustering framework. The reduction is the same as the one used by Ding and Xu~\cite{constrained:2015_Ding_and_Xu} for the $k$-median/means objective. Let $\mathcal{I} = (L,C,k,d,z,S_{x})$ be any instance of the fault-tolerant $k$-supplier problem, where $S_{x} = \{\ell_{x} ~ \mid ~ x \in C \}$ is a set of integers for the clients in $C$. Firstly, we color each client in $C$ with different color. This coloring is given by a bijective function $f \colon C \to R$, where $R = \{r_1,\dotsc,r_{|C|}\}$ is a set of $|C|$ different colors. Then, we create $\ell_{x}$ copies of each client $x$ and assign each copy the same color as $x$. Let $C'$ be this new client set and color on a client is denoted by function $g \colon C' \to R$. This completes the construction of the chromatic $k$-supplier instance. Let this new instance be $\mathcal{I}' = (L,C',k,d,z,g)$. Now, we show the following lemma:

\begin{lemma}
Let $\mathcal{I} = (L,C,k,d,z,S_{x})$ be any instance of the fault-tolerant $k$-supplier problem and $\mathcal{I}' = (L,C',k,d,z,g)$ be the corresponding instance of the chromatic $k$-supplier problems as defined above. For any set $F = \{f_1,\dotsc,f_k\}$ of facility locations, the fault-tolerant $k$-supplier cost of $C$ with respect to $F$ is the same as the chromatic $k$-supplier cost of $C'$ with respect to $F$.  
\end{lemma}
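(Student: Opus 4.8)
The plan is to show equality of the two costs by a direct two-sided argument, exploiting the structure of the chromatic constraint together with the multiplicity construction. Fix any facility set $F = \{f_1,\dotsc,f_k\}$. Recall that in $\I'$ every client $x \in C$ has been replaced by $\ell_x$ copies, all sharing a private color $f(x) = r$ not used by any other original client. In a valid chromatic clustering of $C'$, no two clients of the same color land in the same cluster; hence the $\ell_x$ copies of $x$ must be spread across $\ell_x$ \emph{distinct} clusters, and therefore (since each cluster $O_i$ is served entirely by some $f_i^\star \in F$) the $\ell_x$ copies of $x$ are served by $\ell_x$ distinct facilities of $F$. This is the key combinatorial bridge: a chromatic clustering of $C'$ forces each original client to be simultaneously ``charged'' to $\ell_x$ different facilities, which is exactly what the fault-tolerant objective demands.

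For the direction $\ftcostt{F,C} \ge \Psi^\star(\I'\text{-clustering})$ side — i.e. that the optimal chromatic cost is at most the fault-tolerant cost — I would start from $F$ and build an explicit chromatic clustering of $C'$: for each client $x$, let $f_{(1)},\dotsc,f_{(\ell_x)}$ denote the $\ell_x$ closest facilities of $F$ to $x$ (ties broken arbitrarily), and assign the $j$-th copy of $x$ to the cluster associated with $f_{(j)}$. Since each copy of $x$ goes to a different facility-cluster and the copies all have the same color, and distinct original clients have distinct colors, no color collision occurs within any cluster; so the clustering is feasible. The assignment cost of the $j$-th copy of $x$ is $d(x, f_{(j)})^z \le d(x, f_{(\ell_x)})^z = d'(F,x)^z$, so the maximum over all copies and all $x$ is at most $\max_{x} d'(F,x)^z = \ftcostt{F,C}$. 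Taking the min over feasible clusterings only decreases this, giving one inequality.

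For the reverse direction, take any feasible chromatic clustering $\OC$ of $C'$ with facilities $f_i^\star$ assigned to $O_i$. As observed above, the $\ell_x$ copies of $x$ sit in $\ell_x$ distinct clusters $O_{i_1},\dotsc,O_{i_{\ell_x}}$, served by distinct facilities $f_{i_1}^\star,\dotsc,f_{i_{\ell_x}}^\star \in F$. Hence $x$'s $\ell_x$-th closest facility in $F$ is at distance at most $\max_{t} d(x, f_{i_t}^\star)$, which is at most $\Psi(F,\OC)^{1/z}$ (every copy of $x$ pays at most $\Psi(F,\OC)$); so $d'(F,x)^z \le \Psi(F,\OC)$ for every $x$, whence $\ftcostt{F,C} \le \Psi(F,\OC)$. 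Minimizing over all feasible $\OC$ gives $\ftcostt{F,C} \le \Psi^\star$ restricted to this instance, completing the equality.

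I expect the only real subtlety to be the clean statement of the ``$\ell_x$ copies occupy $\ell_x$ distinct clusters served by $\ell_x$ distinct facilities'' fact and its use in the $\ell_x$-th-closest-facility bound; everything else is bookkeeping with the $(\cdot)^z$ monotonicity and the definitions of $\ftcostt{\cdot}{}$ and $\Psi(\cdot,\cdot)$. One should also note that the reduction is cost-preserving for \emph{every} $F$, not just optimal ones, which is what makes it usable inside the list-$k$-supplier / partition-algorithm pipeline — but that observation requires no extra work beyond what is written above.
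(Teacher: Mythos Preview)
Your proof is correct and follows essentially the same approach as the paper's: the key observation in both is that the $\ell_x$ co-located copies of a client must occupy distinct clusters (hence be served by distinct facilities of $F$), so the worst copy's cost is exactly the distance to the $\ell_x$-th closest facility. Your explicit two-sided decomposition is slightly more careful than the paper's one-paragraph computation of the optimal chromatic assignment, but the underlying argument is identical.
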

\begin{proof}
The fault-tolerant $k$-supplier cost of $C$ with respect to $F$ is $\max_{x \in C} \big\{ d'(F,x)^{z}\big\}$, where $d'(F,x)$ is the distance of $x$ to its $\ell_{x}^{th}$ closest facility location in $F$. Now, let us evaluate the chromatic $k$-supplier cost of $C'$. For a client $x \in C$, let $\{x_1,\dotsc,x_{\ell_{x}}\}$ denote the copies of $x$ in $C'$. These copies share the same color.  
By the definition of the chromatic $k$-supplier problem, a cluster can not contain two clients of the same color. In other words, $\{x_1,\dotsc,x_{\ell_{x}}\}$ must get assigned to different facility locations in $F$. Since all these clients are co-located, the assignment cost is minimized when each of them is assigned to one of the $\ell_{x}$ closest facility locations in $F$. Without loss of generality, let $x_{\ell_{x}}$ is the client that is assigned to $\ell_{x}^{th}$ closest facility location in $F$. Note that, $x_{\ell_{x}}$ has the maximum assignment cost among $\{x_1,\dotsc,x_{\ell_{x}}\}$. This cost is the same as the assignment cost of $x \in C$ in the fault-tolerant $k$-supplier instance.
Therefore, the overall chromatic $k$-supplier cost of the clients in $C'$ is: $\max_{x \in C} \big\{ d'(F,x)^{z} \big\}$. This cost is the same as the fault-tolerant $k$-supplier cost of $C$ with respect to $F$. This completes the proof of the lemma.
\end{proof}

\noindent From the above lemma, we can say that any $\alpha$-approximate solution to the reduced chromatic $k$-supplier instance is also an $\alpha$-approximate solution to the original fault-tolerant $k$-supplier instance, and vice-versa. 
\section{Partition Algorithm}\label{appendix:partition_algorithm}

In this section, we design $\mathsf{FPT}$ time partition algorithms for all the problems mentioned in Table~\ref{table:definitions} along with their outlier version. For the first six problems in the table, we define a new hybrid problem that encapsulates all the constraints of these problems. Therefore, instead of designing a partition algorithm for each problem separately, we design a partition algorithm for the hybrid problem. Formally, the problem is defined as follows:

\begin{definition}[Hybrid $k$-Supplier]
Given an instance $\mathcal{I} = (L,C,k,d,z,m)$ of the outlier $k$-supplier problem, a partitioning of the client set $C$ into $\omega$ color classes: $C_1,\dotsc,C_{\omega}$, vectors: $\ell,r \in \mathbb{Z}^{k}$ and $\alpha,\beta \in \mathbb{Z}^{\omega}$, find a subset $Z \subseteq C$ of at most $m$ outliers and a partitioning $\mathbb{O} = \{O_1,\dotsc,O_k\}$ of the set $C \setminus Z$ with minimum $\Psi^{*}(\mathbb{O})$ such that it satisfies that $\ell_{i} \leq |O_i| \leq r_i$ and $\alpha_j \leq |O_i \cap C_j| \leq \beta_j$ for every $i \in [k]$ and $j \in [\omega]$.
\end{definition}

\noindent The first six problems given in Table~\ref{table:definitions} along with their outlier versions are the special cases of the hybrid $k$-supplier problem. Furthermore, the hybrid $k$-supplier problem satisfies the definition of the constrained outlier $k$-supplier problem, i.e., Definition~\ref{definition:constrained_outlier_k_supplier}. For the sake of completeness, we describe how the hybrid $k$-supplier problem encapsulates the first six problems in Table~\ref{table:definitions}:

\begin{enumerate}
    \item For every $i \in [k]$ and $j \in [\omega]$, if $r_{i} = |C|$, $\alpha_j = 0$, and $\beta_j = |C|$, then the hybrid $k$-supplier problem is equivalent to the $r$-gather outlier $k$-supplier problem.
    \item For every $i \in [k]$ and $j \in [\omega]$, if $\ell_{i} = 0$, $\alpha_j = 0$, and $\beta_j = |C|$, then the hybrid $k$-supplier problem is equivalent to the $r$-capacity outlier $k$-supplier problem.
    \item For every $j \in [\omega]$, if $\alpha_j = 0$ and $\beta_j = |C|$, then the hybrid $k$-supplier problem is equivalent to the balanced outlier $k$-supplier problem.
    \item For every $i \in [k]$ and $j \in [\omega]$, if $\ell_i = 0$, $r_i = |C|$, $\alpha_j = 0$, and $\beta_j = 1$, then the hybrid $k$-supplier problem is equivalent to the chromatic outlier $k$-supplier problem.
    \item The fault-tolerant $k$-supplier problem can be reduced to the chromatic $k$-supplier problem as described in Appendix~\ref{appendix:fault_tolerant}. Therefore, the fault tolerant outlier $k$-supplier problem is also a special case of the hybrid $k$-supplier problem.
    \item For every $i \in [k]$ and $j \in [\omega]$, if $\ell_i = 0$, $r_i = |C|$, and $\beta_j = |C|$, then the hybrid $k$-supplier problem is equivalent to the strongly private outlier $k$-supplier problem.
\end{enumerate}

\noindent Now, we give a partition algorithm for the hybrid $k$-supplier problem.

% \begin{lemma}
% Given an instance $\mathcal{I} = (L,C,k,d,z,m,C_1,\dotsc,C_{\omega},\ell,r,\alpha,\beta)$ of the hybrid $k$-supplier problem and a facility set $F = \{f_1,\dotsc,f_k\} \subseteq L$. There is an algorithm that outputs a subset $Z \subseteq C$ of at most $m$ outliers and a partitioning $\mathbb{O} = \{O_1,\dotsc,O_{k}\}$ of $C \setminus Z$ that minimizes $\Psi(F,\mathbb{O})$ and satisfies the hybrid $k$-supplier constraints. Moreover, the running time of the algorithm is $k^{O(k)} \cdot n^{O(1)}$, which is $\mathsf{FPT}$ in $k$.
% \end{lemma}
\begin{lemma}
There is a $k^{O(k)} \cdot n^{O(1)}$ time partition algorithm for the hybrid $k$-supplier problem.
\end{lemma}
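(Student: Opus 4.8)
The plan is to turn the partition task, for one fixed guess of the optimal clustering value and one fixed guess of which center serves which part, into a feasibility test for a flow with lower and upper capacities, which is polynomial-time solvable; the two guesses are then enumerated. Throughout, fix the given center set $F=\{f_1,\dots,f_k\}$. First I would observe that $\Psi(F,\mathbb{O})$ always equals $d(x,f)^{\zl}$ for some $x\in C$ and $f\in F$, so there are at most $k|C|$ candidate optimal values; I enumerate them (or binary search), fixing one value $\tau$ and asking whether some clustering satisfying all the hybrid constraints attains $\Psi(F,\mathbb{O})\le\tau$. Second, I guess a map $\sigma\colon[k]\to F$ (not necessarily injective) that is intended to record $i\mapsto f_i^{*}$, the center serving part $O_i$; there are $k^{k}$ such maps. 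Relative to $\sigma$ and $\tau$, call a client $x$ \emph{admissible} for part $i$ if $d(x,\sigma(i))^{\zl}\le\tau$. It then suffices to decide whether every client can be placed either into a part for which it is admissible, or into an outlier set of size at most $m$, while respecting the cardinality and color constraints.

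To decide this I would build a single-commodity flow network with a source $s$ and a sink $t$: a node $u_x$ for each client, with an arc $s\to u_x$ of capacity $1$; a node $w_{ij}$ for each part $i\in[k]$ and color class $j\in[\omega]$, with an arc $u_x\to w_{ij}$ of capacity $1$ whenever $x\in C_j$ and $x$ is admissible for $i$; a node $p_i$ for each part, an arc $w_{ij}\to p_i$ with lower bound $\alpha_j$ and upper bound $\beta_j$, and an arc $p_i\to t$ with lower bound $\ell_i$ and upper bound $r_i$; and an outlier node $D$ with arcs $u_x\to D$ of capacity $1$ for every $x$ and an arc $D\to t$ of capacity $m$. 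An integral feasible $s$--$t$ flow of value $|C|$ — which saturates every arc $s\to u_x$ — corresponds exactly to a valid partition-with-outliers for $F$ of cost at most $\tau$: the clients routed through $p_i$ form $O_i$, each such client $x$ satisfies $d(x,\sigma(i))^{\zl}\le\tau$ so the best center for $O_i$ is also within $\tau$; the clients routed through $D$ form the outlier set $Z$; the bounds $\ell_i\le|O_i|\le r_i$ are the capacities of $p_i\to t$, the bounds $\alpha_j\le|O_i\cap C_j|\le\beta_j$ are the capacities of $w_{ij}\to p_i$, and $|Z|\le m$ is the capacity of $D\to t$. Conversely, the optimal hybrid clustering for $F$, together with any choice of a best center per part taken as $\sigma$, yields such a flow at $\tau=\Psi(F,\mathbb{O})$. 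Existence of a feasible flow with lower bounds, and its construction, is obtained in polynomial time by the textbook reduction to ordinary maximum flow, and integrality is automatic since all capacities and bounds are integral.

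Finally, I would loop over all $O(k|C|)$ thresholds $\tau$ and all $k^{k}$ maps $\sigma$, solve one polynomial-time flow instance for each pair, and return the clustering produced for the smallest $\tau$ for which some $\sigma$ succeeds; the correspondence above shows this clustering is a feasible hybrid clustering of $F$ of minimum cost $\Psi(F,\mathbb{O})$. The network has $\mathrm{poly}(n)$ nodes since $\omega\le|C|$, so the dependence on $\omega$ is absorbed into $n^{O(1)}$ and the total running time is $k^{O(k)}\cdot n^{O(1)}$, as claimed. The step I expect to be the main obstacle is verifying that the two coupled families of size constraints — cluster sizes and color-class sizes \emph{inside} each cluster — together with the outlier budget can all be enforced simultaneously by a single flow; this goes through because the color constraints sit one layer below the cluster-size constraints in the chain $w_{ij}\to p_i\to t$. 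The only ingredient that is not polynomial is the enumeration of $\sigma$, which is forced because whether a client is admissible for a part depends on which center serves that part, information unavailable before the guess.
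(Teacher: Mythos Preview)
Your proposal is correct and follows essentially the same approach as the paper: both guess the map from parts to centers ($k^{k}$ choices) and the optimal cost value, then reduce feasibility to a circulation/flow problem with the same layered structure (clients $\to$ (part,\,color) nodes $\to$ part nodes $\to$ sink), using lower/upper bounds on the inner edges for the color constraints $\alpha_j,\beta_j$ and on the outer edges for the cardinality constraints $\ell_i,r_i$. The only cosmetic difference is the handling of outliers: the paper enforces the outlier budget via a lower bound of $|C|-m$ on a single source edge, whereas you route outliers through a dedicated sink-side node $D$ with capacity $m$ and demand total flow $|C|$; these formulations are equivalent.
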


\begin{proof}
Let $F$ be a given set of $k$ facility locations. Let $Z^{*}$ be an optimal outlier set, and let $\mathbb{O}^{*} = \{O_1^{*},\dotsc,O_k^{*}\}$ be an optimal partitioning of $C \setminus Z^{*}$ that minimizes the objective function $\Psi(F,\mathbb{O})$. 
Every partition $O_i^{*}$ is assigned to one of the facility location in $F$. Since, the algorithm does not know $O_{i}^{*}$, it makes a guess that $O_{i}^{*}$ is assigned to a facility $f \in F$. Since there are $k$ facilities in $F$, there are total $k^{k}$ possibilities of assigning every $O_i$ to facilities in $F$. The algorithm also make a guess on the value of the optimal solution $\Psi(F,\mathbb{O}^{*})$ over $|L| \cdot |C|$ possible distances between clients in $C$ and facility locations in $L$. For each particular guess, the algorithm checks if there is some feasible assignment of clients to $F$ satisfying the hybrid constraints. Then, the algorithm outputs the feasible assignment that has the minimum assignment cost over all the possible guesses. Now, suppose that in a particular guess, $O_{i}^{*}$ is assigned to a facility $f_{i} \in F$ and the optimal cost is $\lambda$. The algorithm reduces the assignment problem to a circulation problem on a flow network $G = (V,E)$. The flow network is shown in Figure~\ref{figure:flow}. \\
\begin{figure}[h]
\begin{framed}
\includegraphics[width=12cm, height=6cm]{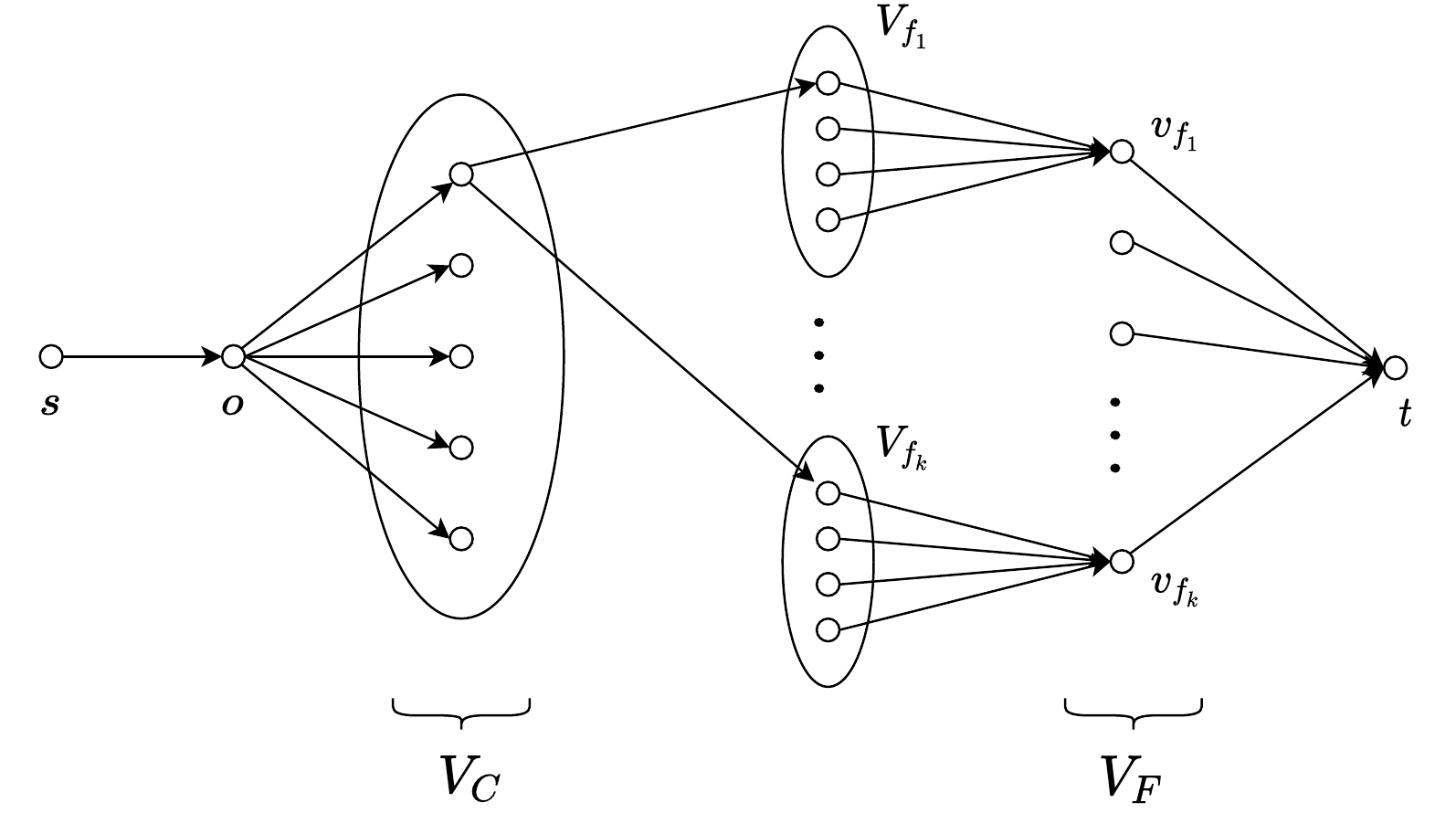}
\centering
\caption{The flow network $G = (V,E)$ that is used in the partition algorithm of the hybrid $k$-supplier problem.}
\label{figure:flow}
\end{framed}
\vspace*{-4mm}
\end{figure}

\noindent The vertex set $V$ is partitioned into the following sets: 
\begin{enumerate}
    \item A source vertex $s$, sink vertex $t$, and vertex $o$.
    \item A vertex set $V_{C}$ that corresponds to the client set $C$. In other words, for every client $x \in C$ there is a vertex $v_x$ in $V_{C}$.
    \item A vertex set $V_{F}$ that corresponds to the facility set $F$. In other words, for every facility $f_i \in F$ there is a vertex $v_{f_i} \in V_{F}$.
    \item The vertex sets: $V_{f_1},\dotsc,V_{f_k}$. A vertex set $V_{f_{i}}$ is defined as $\{ v_{f_i,1},\dotsc,v_{f_i,\omega}\}$ such that a vertex $v_{f_{i},j}$ corresponds to a facility $f_i \in F$ and a color class $j \in [\omega]$.
\end{enumerate}

\noindent Furthermore, the edge set $E$ is defined with the lower and upper bound flow constraints as follows:

\begin{enumerate}
    \item There is a directed edge $(s,o)$ with lower bound flow of $|C|-m$ and upper bound flow of $|C|$. This edge ensures that at least $|C|-m$ clients are assigned to $F$.
    \item For every vertex $v_x \in V_C$, there is a directed edge $(o,v_x)$ with upper bound flow of $1$ and lower bound flow of $0$. These edges ensure that a client is assigned at most one facility in $F$.
    \item For every vertex $v_x \in V_C$ and vertex $v_{f_{i},j} \in V_{f_{i}}$, there is a directed edge $(v_x,v_{f_{i},j})$ if and only if the client $x$ belongs to color class $C_j$ and $d(x,f_{i})^{\zl} \leq \lambda$. These edges have the upper bound flow of $1$ and lower bound flow of $0$. These edges ensures a feasible assignment of cost at most $\lambda$.
    \item For every $i \in [k]$ and $j \in [\omega]$, there is a directed edge $(v_{f_{i},j},v_{f_{i}})$ with lower bound flow of $\alpha_{j}$ and upper bound flow of $\beta_{j}$. These edges ensures that for every color class $j \in [\omega]$ and every facility $f_i \in F$, the constraint $\alpha_{j} \leq |O_{i}^{*} \cap C_{j}| \leq \beta_{j}$ is satisfied.
    \item For every $i \in [k]$, there is a directed edge $(v_{f_{i}},t)$ with lower bound flow $\ell_{i}$ and upper bound flow $r_{i}$. These edges ensures that for every partition ${O}_{i}^{*} \in \{O_{1}^{*},\dotsc,O_{k}^{*}\}$, the constraint $\ell_{i} \leq |O_{i}^{*}| \leq r_{i}$ is satisfied. 
\end{enumerate}

\noindent It is easy to see that a feasible integral flow through the network $G$ corresponds to an assignment of at least $|C|-m$ clients in $C$ to $F$ that satisfies the hybrid constraints. Moreover, the assignment cost is at most $\lambda$. The circulation problem on $G$ can be solved in polynomial time using the algorithms in~\cite{circulation:1972_Edmond_Karp,circulation:1985_Eva_Tardos,circulation:2012_Nikhil_Bansal}. Since we run this algorithm for $k^{k} \cdot |C| \cdot |L|$ possible guesses, the overall running time of the partition algorithm is $k^{k} \cdot n^{O(1)}$. This completes the proof of the lemma. 
\end{proof}

\noindent In the next subsection, we design the partition algorithms for the remaining two problems in Table~\ref{table:definitions}.

\subsection{Partition Algorithm: $\ell$-Diversity and Fair Outlier $k$-Supplier Problems}\label{appendix:fair_partition}
Bandyapadhyay~\emph{et al.}~\cite{fairness:2021_constrained_Bandyapadhyay} noted that the $\ell$-diversity clustering problem is a special case of the fair clustering problem. In other words, if the fair clustering problem has disjoint color classes, and $\alpha_{j} = 1/\ell$ and $\beta_j = 0$ for every color class $C_j \in \{C_1,\dotsc,C_{\omega}\}$, then the problem is equivalent to the $\ell$-diversity clustering problem. Therefore, it is sufficient to design a partition algorithm for the fair outlier $k$-supplier problem. 

For the fair \emph{$k$-median} problem (without outliers), Bandyapadhyay~\emph{et al.}~\cite{fairness:2021_constrained_Bandyapadhyay} designed an $\mathsf{FPT}$ time partition algorithm. We simply extend their algorithm to the fair \emph{$k$-supplier} problem with outliers. 
Let $\mathcal{I} = (L,C,k,d,z,m,C_1,\dotsc,C_{\omega},\alpha,\beta)$ be any instance of the fair outlier $k$-supplier problem. 
Recall that the sets $C_1,\dotsc,C_{\omega}$ are the color classes, each of which is a subset of the client set $C$. Moreover, any two color classes can overlap with each other. In other words, a client in $C$ might belong to different color classes. To simplify the problem, we partition the client set into $\Gamma$ disjoint groups: $P_1,\dotsc,P_{\Gamma}$ such that the points belonging to the same group $P_i$ belong to the same set of colored classes. In other words, if clients $x$ and $y$ belong to the same group $P_i$, then for every color class $C_t \in \{C_1,\dotsc,C_{\omega}\}$, $x \in C_t$ if and only if $y \in C_t$. Also, if $x$ and $y$ belong to different groups $P_i$ and $P_j$, respectively, then there exists a color class $C_{t} \in \{C_1,\dotsc,C_{\omega}\}$ such that $x \in C_t$ and $y \notin C_t$, or $x \notin C_t$ and $y \in C_t$. Note that if the color classes are pair-wise disjoint, then $\Gamma$ is equal to the number of color classes, i.e., $\Gamma = \omega$.
Now, we design $\mathsf{FPT}$ time partition algorithm for the fair outlier $k$-supplier problem with running time $(k \Gamma)^{k \Gamma} \cdot n^{O(1)}$, where $\Gamma$ is the number of distinct collection of color classes induced by the colors of clients. Formally, we state the result as follows:

\begin{lemma}
For the fair outlier $k$-supplier problem, there is a $(k \Gamma)^{k \Gamma} \cdot n^{O(1)}$ time partition algorithm, where $\Gamma$ is the number of distinct collection of color classes induced by the colors of clients.
\end{lemma}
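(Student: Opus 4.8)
The plan is to mimic the circulation-based partition algorithm from the hybrid $k$-supplier lemma, adding two ingredients to cope with the facts that (i) the color classes $C_1,\dots,C_\omega$ may overlap and (ii) the fairness bounds $\beta_j\,|O_i|\le |O_i\cap C_j|\le \alpha_j\,|O_i|$ are \emph{ratio} constraints rather than absolute ones. First I would preprocess the instance by partitioning the clients into the $\Gamma$ groups $P_1,\dots,P_\Gamma$ induced by their color signatures, so that all clients inside a group are indistinguishable for the purpose of fairness; writing $n_{i\ell}=|O_i\cap P_\ell|$ we get $|O_i|=\sum_\ell n_{i\ell}$ and $|O_i\cap C_j|=\sum_{\ell\,:\,P_\ell\subseteq C_j} n_{i\ell}$, so each fairness bound becomes \emph{linear} in the integer matrix $(n_{i\ell})$. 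Since the assignment cost takes at most $|L|\cdot|C|=O(n^2)$ distinct values $d(x,f)^{z}$, I would then guess (or binary-search) the optimal value $\lambda=\Psi^{*}(\OC)$ and guess which center of the given set $F$ serves each of the $k$ parts ($k^{k}$ possibilities); after fixing these, a client $x$ is eligible for $O_i$ only when the center serving $O_i$ lies within distance $\lambda^{1/z}$ of $x$.

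With $\lambda$ and the part-to-center map fixed, the residual task — choose disjoint subsets $O_i$ from the eligible clients, covering all but at most $m$ clients, with every $O_i$ meeting its cardinality bounds and its (now linear) fairness bounds — is almost the hybrid $k$-supplier problem, so I would encode it by the same style of circulation network: a source, an outlier gadget forcing at least $|C|-m$ clients to be routed, one vertex per client, one vertex per (cluster, group) pair, one vertex per cluster, and a sink, with an arc from a client to a (cluster, group) vertex exactly when that client lies in the group and within distance $\lambda^{1/z}$ of the cluster's center. The obstacle is that the fairness constraints couple the flow on several arcs entering a cluster vertex (they compare $\sum_{\ell:P_\ell\subseteq C_j}n_{i\ell}$ with $\alpha_j,\beta_j$ times $\sum_\ell n_{i\ell}$) and so cannot be expressed as arc capacities. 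To get around this I would, following Bandyapadhyay~\emph{et al.}~\cite{fairness:2021_constrained_Bandyapadhyay}, additionally guess the $k\times\Gamma$ matrix $(n_{i\ell})$ itself: once it is fixed, every fairness and cardinality constraint is a predicate that is simply true or false, and the network is left only with the \emph{box} demands ``route exactly $n_{i\ell}$ clients of group $\ell$ to cluster $i$'', so a feasible integral circulation — computable in $n^{O(1)}$ time by a standard min-cost-flow routine — yields a valid clustering of cost at most $\lambda$, while infeasibility across all guesses certifies that no clustering of that cost exists.

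The step I expect to be the crux is bounding the number of matrices $(n_{i\ell})$ one actually has to enumerate by $(k\Gamma)^{O(k\Gamma)}$ rather than the naive $n^{k\Gamma}$: feasibility of the residual circulation is monotone and highly structured in $(n_{i\ell})$, and the relevant system has only $O(k\omega+k+\Gamma)$ nontrivial constraints, so it should suffice to try only ``canonical'' matrices whose support pattern and entries are pinned down by a bounded-support solution of that system; verifying this reduction carefully — and checking it still goes through after the extra outlier and part-to-center guesses — is the technical heart, and it is what puts $\Gamma$ rather than $n$ in the exponent. Assembling the pieces, the cost is $O(n^2)$ guesses for $\lambda$, times $k^{k}$ for the part-to-center map, times $(k\Gamma)^{O(k\Gamma)}$ for the matrix, times an $n^{O(1)}$ circulation computation, i.e.\ $(k\Gamma)^{k\Gamma}\cdot n^{O(1)}$ overall; the clustering returned is optimal for the fair outlier $k$-supplier instance because the enumeration includes, in particular, the group-count matrix of a true optimum. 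For the $\ell$-diversity case one simply specializes to disjoint color classes with $\alpha_j=1/\ell$, $\beta_j=0$, where $\Gamma=\omega$, recovering the $(k\omega)^{O(k\omega)}\cdot n^{O(1)}$ bound.
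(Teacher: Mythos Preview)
Your overall architecture is right: partition clients into the $\Gamma$ signature groups $P_1,\dots,P_\Gamma$, guess the cost level $\lambda$, and reduce the task to finding an integer matrix $(n_{i\ell})_{i\in[k],\ell\in[\Gamma]}$ that satisfies the (now linear) fairness and outlier constraints and is realizable by a circulation. The genuine gap is exactly the step you yourself flag as the crux. You propose to \emph{enumerate} the matrices and assert that only $(k\Gamma)^{O(k\Gamma)}$ ``canonical'' ones need be tried, appealing vaguely to monotonicity and to there being only $O(k\omega+k+\Gamma)$ constraints. That argument does not go through: the entries $n_{i\ell}$ range up to $n$, the feasible region is a general integer polytope in $k\Gamma$ variables, and neither a bounded number of constraints nor monotonicity of the downstream circulation test yields a combinatorial enumeration bound of the form you claim. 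Without a concrete mechanism here your running time is $n^{k\Gamma}$, not $(k\Gamma)^{O(k\Gamma)}$.

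The paper (and in fact the Bandyapadhyay~\emph{et al.} paper you cite) resolves this bottleneck by a different route that avoids enumeration entirely. One writes a single \emph{mixed} integer linear program whose only integer variables are the $k\Gamma$ group counts $h_{f,\ell}=\sum_{x\in P_\ell}g_{x,f}$, while the per-client assignment variables $g_{x,f}\in[0,1]$ remain continuous; the fairness and outlier constraints are all linear in the $g$'s. A Lenstra-type MILP solver (Proposition~\ref{prop:MILP}) handles this in $p^{O(p)}\cdot n^{O(1)}$ time with $p=k\Gamma$, which is precisely where the $(k\Gamma)^{O(k\Gamma)}$ factor comes from. The circulation network enters only afterwards, to round the fractional $g_{x,f}$ to integers while holding the already-integral $h_{f,\ell}$ fixed; feasibility is immediate because the fractional $g$'s themselves are a feasible flow and all arc bounds are integral. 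Note also that the MILP variables already encode which facility serves which client, so your extra $k^{k}$ guess of the part-to-center map is unnecessary.
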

\begin{proof}

Let $F = \{f_1,\dotsc,f_k\}$ is the given set of facility locations for which we want an optimal partitioning satisfying the fair outlier constraints. Let $Z$ denote the optimal set of outliers and $OPT$ denote the optimal $k$-supplier cost of assigning $C \setminus Z$ to $F$ while satisfying the fair constraints. Since there are $|F| \cdot |C|$ possible distances between $C$ and $F$, the algorithm tries each possibility for $OPT$. For each possibility, the algorithm finds a feasible assignment of clients to facilities. Then, the algorithm outputs that assignment that gives the minimum assignment cost. The assignment problem can modeled as an integer linear program; however solving an integer linear program is $\mathsf{NP}$-hard in general. 
Therefore, we model the problem as a mixed integer linear program, which can be solved in $\mathsf{FPT}$ time parameterized by the number of integer variables. The following is a formal statement for the same:
\begin{prop}[Proposition 8.1 of~\cite{fairness:2021_constrained_Bandyapadhyay}]\label{prop:MILP}
Given a real valued matrix $A \in \mathbb{R}^{m \times d}$, vector $b \in \mathbb{R}^{m}$, vector $c \in \mathbb{R}^{d}$, and a positive integer $p \leq d$. There is an $\mathsf{FPT}$ time algorithm that finds a vector $x = (x_1,\dotsc,x_{d}) \in \mathbb{R}^{d}$ that minimizes $c \cdot x$, and satisfies that $A \cdot x \leq b$ and $x_1,\dotsc,x_p \in \mathbb{Z}$. The running time of the algorithm is $O(p^{2.5p+o(p)}d^{4}B)$ and the space complexity is polynomial in $B$, where $B$ is the bitsize of the given instance.
\end{prop}

\noindent Now, we model the assignment problem as a mixed integer linear program (MILP), as follows: 
\begin{framed}
\vspace{-2.5mm}
\begin{align*}
    \textrm{Constraints:} \quad \quad &\sum_{f \in F} g_{x,f} \leq 1 \quad &&\textrm{for every client $x \in C$} \\
    &\sum_{x \in P_i} g_{x,f} = h_{f,i} \quad &&\textrm{for every group $P_i \in \{P_1,\dotsc,P_{\Gamma}\}$ and facility $f \in F$} \\
    &\sum_{f \in F, x \in C} g_{x,f} \geq |C| - m \quad &&\textrm{for client set $C$ and facility set $F$} \\
    & \sum_{x \in C_{j}} g_{x,f} \leq \alpha_{j} \cdot \sum_{x \in C} g_{x,f}   &&\textrm{for every facility $f \in F$ and color class $C_{j} \in \{C_1,\dotsc,C_{\omega} \}$} \\
    & \sum_{x \in C_{j}} g_{x,f} \geq \beta_{j} \cdot \sum_{x \in C} g_{x,f}   &&\textrm{for every facility $f \in F$ and color class $C_{j} \in \{C_1,\dotsc,C_{\omega} \}$} \\
    &0 \leq g_{x,f} \leq 1 \quad &&\textrm{for every client $x \in C$ and facility $f \in F$} \\
    & h_{f,i} \in \mathbb{Z}_{\geq 0} \quad &&\textrm{for every group $P_i \in \{P_1,\dotsc,P_{\Gamma}\}$ and facility $f \in F$} \\
\end{align*}
\vspace{-12.0mm}
\end{framed}

\noindent In the above MILP, for every client $x \in C$ and facility $f \in F$, there is a fractional variable $g_{x,f} \leq 1$ that denotes the fraction of client $x$ assigned to facility $f$. Moreover, for every group $P_{i} \in \{P_1,\dotsc,P_{\Gamma}\}$ and facility $f \in F$, there is an integer variable $h_{f,i}$ that denotes the total fraction of clients in $P_i$ that is assigned to facility $f$. In other words, $h_{f,i} = \sum_{x \in P_{i}} g_{x,f}$. The third constraint of the MILP corresponds to the number of outliers being at most $m$. Lastly, the fourth and fifth constraints of MILP correspond to the fairness constraints. 

We solve the above mixed integer linear program in time $O(k \Gamma)^{k \Gamma} \cdot n^{O(1)}$ as per Proposition~\ref{prop:MILP}. However, the obtained optimal solution contains fractional $g_{x,f}$ values.Next, we show that there always exists a solution with integral $g_{x,f}$ values that can be obtained in polynomial time. We reduce the problem to the \emph{circulation} problem on directed graphs. We construct a flow network $G = (V,E)$ with upper and lower bound flow requirements on every edge. The construction is shown in Figure~\ref{figure:partition_fair}. 

\begin{figure}[h]
\begin{framed}
\includegraphics[width=12cm, height=6cm]{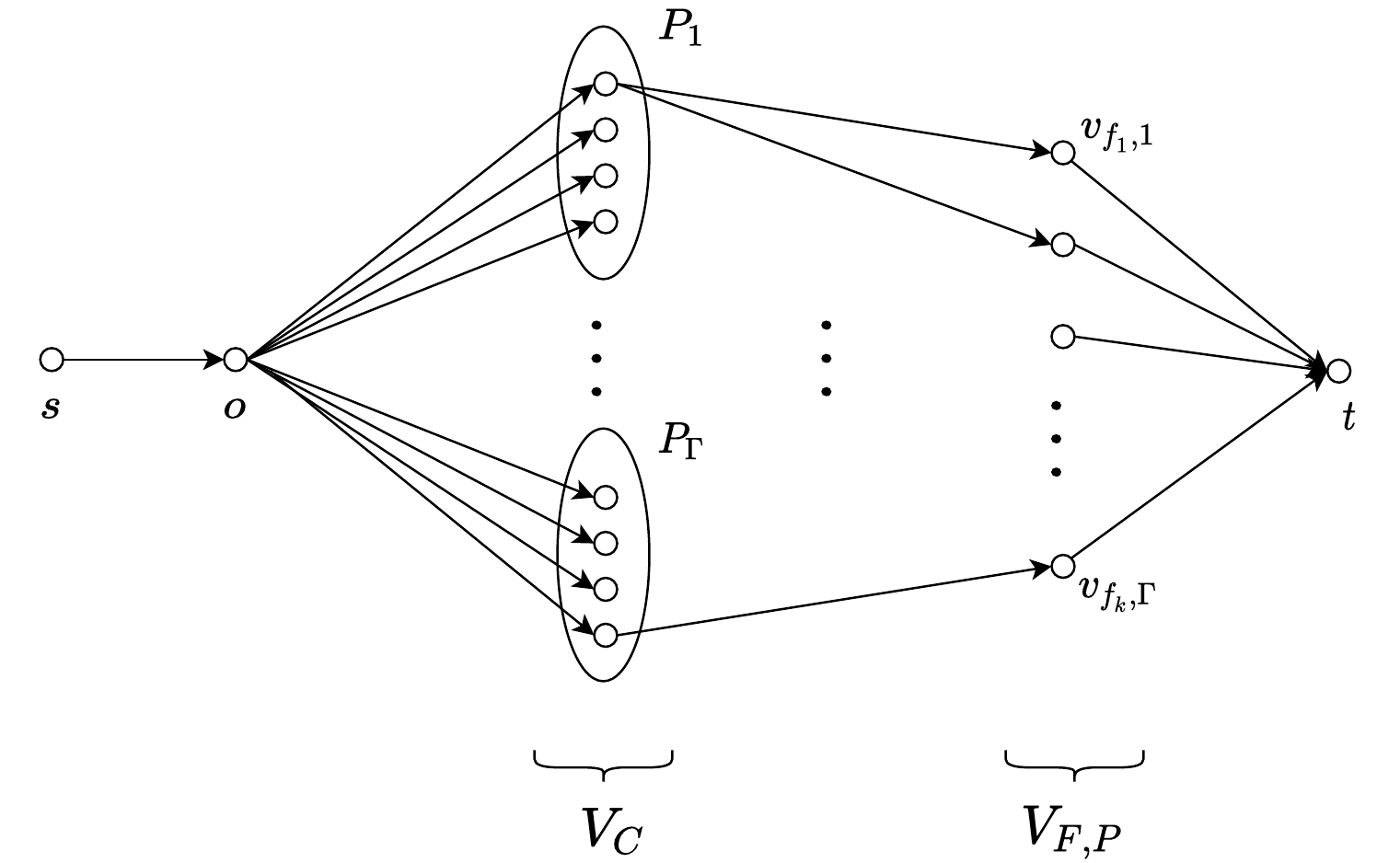}
\centering
\caption{ The flow network $G = (V,E)$ that is used by the partition algorithm of the fair outlier $k$-supplier problem.}
\label{figure:partition_fair}
\end{framed}
\vspace*{-4mm}
\end{figure}

The vertex set $V$ contains a source vertex $s$, a sink vertex $t$, and an \emph{outlier regulating vertex} $o$. We will describe the functioning of $o$ shortly.
The rest of the vertex set is partitioned into two sets: $V_{C}$ and $V_{F,P}$. The vertex set $V_{C}$ corresponds to the client set $C$. In other words, for every client $x \in C$ there is a vertex $v_{x} \in C$. The vertex set $V_{F,P}$ corresponds to facility set $F$ and groups $P_1,\dotsc,P_{\Gamma}$. In other words, for every facility $f \in F$ and group $P_i \in \{P_1,\dotsc,P_{\Gamma}\}$, there is a vertex $v_{f,i}$ in $V_{F,P}$. Furthermore, the edge set $E$ is partitioned as follows. There is an edge $(s,o)$ with lower and upper bound flow requirement of exactly $\sum_{x \in C, f \in F} g_{x,f}$. Note that $\sum_{x \in C, f \in F} g_{x,f}$ is an integer since $\sum_{x \in C, f \in F} g_{x,f} = \sum_{i = 1}^{\Gamma} \sum_{f \in F} h_{i,f}$ and $h_{i,f}$'s are integers. Also note that $\sum_{x \in C, f \in F} g_{x,f} \geq |C|-m$ as per Constraint $2$ of the MILP. Therefore, it ensures that at least $|C|-m$ clients are assigned to $F$, and that's why we call the vertex $o$, the outlier regulating vertex. Furthermore, the edge set $E$ contains for every vertex $v_{x} \in C$, an edge $(o,v_{x})$ with lower bound flow requirement $0$ and upper bound flow requirement $1$. These edges ensure that a client is assigned to at most one facility in $F$. Furthermore, for every vertex $v_{x} \in V_{C}$ and $v_{f,i} \in V_{F,P}$, there is an edge $(v_{x},v_{f,i})$ if and only if $x \in P_{i}$. Each edge has lower bound flow requirement $0$ and upper bound flow requirement $1$. Lastly, for every vertex $v_{f,i} \in V_{F,P}$, there is an edge $(v_{f,i},t)$ with lower and upper bound flow requirement of exactly $h_{f,i}$. These edges ensure that exactly $h_{f,i}$ clients of $P_{i}$ are assigned to any facility $f \in F$. It is easy to see that the constructed flow network $G$ admit a feasible flow if we send a flow of $g_{x,f}$ through every edge $(v_{x},v_{f,i})$. Moreover, this is the maximum flow that we can send through the network since the capacity of edge $(s,o)$ is $\sum_{x \in C, f \in F} g_{x,f}$. Since the lower and upper bound requirements on every edge is an integer, there exists a feasible integral flow through the network. We find that flow in polynomial time using the algorithms in~\cite{circulation:1985_Eva_Tardos,circulation:1972_Edmond_Karp,circulation:2012_Nikhil_Bansal}. Let the new integral flow through $(v_x,v_{f,i})$ is $g'_{x,f}$. Then, we show that the values $g'_{x,f}$'s also satisfy the MILP constraints. The constraints $(1)$-$(3)$ of the MILP are trivially satisfied from the flow network. Also note that $\sum_{x \in P_i} g'_{x,f} = h_{f,i}$ due to the flow network. Therefore, the fair constraints $(4)$ and $(5)$ are satisfied since $\sum_{x \in C} g_{x,f} = \sum_{x \in C} g'_{x,f}$ and $\sum_{x \in C_j} g_{x,f} = \sum_{x \in C_j} g'_{x,f}$ for every facility $f \in F$ and every color class $C_{j} \in \{C_1,\dotsc,C_{\omega}\}$. These two equalities follow from the following two sequence of equalities:
\begin{enumerate}
    \item  $$\sum_{x \in C} g_{x,f} = \sum_{i = 1}^{\Gamma}\sum_{x \in P_{i}} g_{x,f} = \sum_{i = 1}^{\Gamma} h_{f,i} = \sum_{x \in C} g'_{x,f}$$  
    \item $$\sum_{x \in C_j} g_{x,f} = \sum_{i \colon P_{i} \subseteq C_j }\sum_{x \in P_{i}} g_{x,f} = \sum_{i \colon P_{i} \subseteq C_j } h_{f,i} = \sum_{x \in C_j} g'_{x,f}$$
\end{enumerate}

\noindent This completes the proof of the lemma.
\end{proof}

\section{Related Work in Detail}\label{appendix:related_work}
In this section, we discuss the known results for the constrained $k$-supplier problems in detail. A summary of these results in given in Table~\ref{table:known_results}.

\begin{enumerate}
    \item \textbf{$r$-gather $k$-supplier problem:} For the $r$-gather $k$-supplier problem  with and without outliers, Ahmadian and Swamy~\cite{rgather:Misc_2016_Ahmadian} gave $5$ and $3$ approximation algorithms, respectively (see Theorems $15$ and $16$ of~\cite{rgather:Misc_2016_Ahmadian}). The running time of their algorithm is polynomial in the input size and it also holds for the non-uniform variant of the problem with location-wise constraints. For the $r$-gather $k$-center problem (with uniform lower bounds)  with and without outliers, Aggarwal~\emph{at al.}~\cite{rgather:k_center_2010_Aggarwal} gave $4$ and $2$ approximation algorithms, respectively (see Section 2.4 of~\cite{rgather:k_center_2010_Aggarwal}). The running time of their algorithm is polynomial in the input size. 
    \item \textbf{$r$-capacity $k$-supplier problem:} For the $r$-capacity $k$-supplier and $k$-center problems (without outliers), An~\emph{et al.}~\cite{capacitated:kcenter_2015_An} gave $11$ and $9$ approximation algorithms, respectively (see Theorems 1 and 6 of~\cite{capacitated:kcenter_2015_An}). Their algorithm is polynomial time and also works for the non-uniform variant of the problem with location-wise constraints. For the $r$-capacity $k$-supplier problem (without outliers) with uniform upper bounds, Khuller and Sussmann~\cite{capacitated:kcenter_2000_khuller} gave $5$ approximation algorithm.\footnote{The authors call the soft assignment version of the $r$-capacity $k$-center problem as the \emph{capacitated multi-$k$-center problem}.} For the outlier version, Cygan and Kociumaka~\cite{capacitated:kcenter_outliers_2014_Cygan_Tomasz} gave a 25-approximation algorithm for the $r$-gather $k$-supplier and $k$-center problems with non-uniform upper bounds imposed on facility locations. Moreover, the authors improved the approximation guarantee to $13$ for the uniform $r$-gather $k$-supplier and $k$-center problems with outliers (see Theorem 1 and Corollary 1 of Theorem 1 of~\cite{capacitated:kcenter_outliers_2014_Cygan_Tomasz}). All the above mentioned algorithms have polynomial running time.
    \item \textbf{Balanced $k$-supplier problem:} For the balanced $k$-supplier/center problem with non-uniform lower and upper bounds, no constant factor approximation is known yet. However, for the uniform lower bounds and non-uniform upper bounds on locations (without outliers), Ding~\emph{et al.}~\cite{capacitated:kcenter_rgather_outliers_2017_Hu_Ding_WADS} gave 13 and 9 approximation algorithms for the balanced $k$-supplier and $k$-center problems, respectively. Furthermore, for the same variant, the authors gave $25$ approximation algorithm for both the balanced $k$-supplier and $k$-center problems with outliers. For the uniform lower bounds and uniform upper bounds (without outliers), the authors improved the approximation bounds to $9$ and $6$ for the balanced $k$-supplier and $k$-center problems, respectively. Furthermore, for the same variant, the authors gave $13$ approximation algorithm for both the balanced $k$-supplier and $k$-center problems with outliers. All the above mentioned algorithms have polynomial running time.
    
    For the balanced $k$-center problem (without outliers) with uniform lower and upper bounds, Hu Ding~\cite{rgather:balanced_CCCG_HU_Ding} gave a $4$-approximation algorithm with $\mathsf{FPT}$ running time $k^{O(k)} \cdot n^{O(1)}$. In this work, we improve this approximation guarantee to $2$.
    Furthermore, we give $\mathsf{FPT}$ time $3$-approximation algorithm for the balanced $k$-supplier problem.
    \item \textbf{Chromatic $k$-supplier problem:} For the chromatic $k$-supplier problem, no constant factor approximation is known yet. However, for the $k$-median and $k$-means objectives, $\mathsf{FPT}$ time approximation algorithms are known in the continuous Euclidean space and general discrete metric spaces~\cite{constrained:2015_Ding_and_Xu,chromatic:k_cones_ICALP_2011_ding_and_xu,fairness:2021_constrained_Bandyapadhyay,constrained:2020_GJK_FPT}. 
    \item \textbf{Fault-tolerant $k$-supplier problem:} In the fault-tolerant $k$-supplier problem, given a facility set $F \subseteq L$, the cost of a client $x \in C$ is proportional to the distance to its $\ell_{x}$ closest facility location in $F$. If $\ell_{x}$ is the same for every $x$ in $C$, then we call the problem the \emph{uniform fault-tolerant $k$-supplier problem}. On the other hand, if $\ell_{x}$ is not the same for every $x$, then we call the problem the non-uniform fault-tolerant $k$-supplier problem. For the non-uniform version, no constant factor approximation is known yet. However, for the uniform fault-tolerant $k$-supplier and $k$-center problem (without outliers), Khuller~\emph{et al.}~\cite{fault:kcenter_2000_khuller} gave $3$ and $2$ approximation algorithms, respectively. Recently, Inamdar and Varadarajan gave $6$ approximation algorithm for the uniform fault-tolerant \emph{$k$-center} problem with outliers. However, no constant factor approximation is known for the uniform fault-tolerant \emph{$k$-supplier} problem with outliers. 
    \item \textbf{Strongly private $k$-supplier problem:} This problem has recently been proposed by Rösner and Schmidt~\cite{rgather:2018_Rosner}. The authors gave $5$ and $4$ approximation algorithms for the $k$-supplier and $k$-center versions, respectively, without outliers. No constant factor approximation algorithm is known for the problem with outliers. 
    \item \textbf{$\ell$-diversity $k$-supplier problem:} Bandyapadhyay~\emph{et al.}~\cite{fairness:2021_constrained_Bandyapadhyay} noted that the $\ell$-diversity clustering problem is a special case of the fair clustering problem when the color classes are disjoint, and $\alpha_{j} = 1/\ell$ and $\beta_j = 0$ for every color class $C_j \in \{C_1,\dotsc,C_{\omega}\}$.  Therefore, the problem admit $7$ and $5$ approximation algorithms for the $k$-supplier and $k$-center versions, respectively, without outliers~\cite{fairness:2019_Bercea_Schmidt_APPROX}. 
    
    There is another variant of the $\ell$-diversity clustering problem that was proposed by Li~\emph{et al.}~\cite{L_diversity:2010_kcenter_Li_Jian}. Given an integer constant $\ell \geq 0$, the task is to find a clustering $\OC = \{O_1, ..., O_t\}$ of the client set $C$ with minimum $\Psi^{*}(\OC)$ such that for every cluster $O_i$, $|O_i|\geq \ell$ and $O_i$ should not have any two clients with the same color. Note that unlike other clustering problems, here we do not have any restriction on the number of open centers. The authors gave $2$-approximation algorithm for the problem for $L = C$. When the number of partitions are restricted to $k$, then the problem has not been studied before. However, the problem satisfies the Definition~\ref{definition:constrained_k_supplier} of the constrained clustering problem. Furthermore, the problem admit an $\mathsf{FPT}$ time partition algorithm since it is a special case of the hybrid $k$-supplier problem discussed in Appendix~\ref{appendix:partition_algorithm}. Therefore, the problem has $\mathsf{FPT}$ (in $k$) time $3$ and $2$ approximation algorithms for the $k$-supplier and $k$-center versions, respectively. 
    \item \textbf{Fair $k$-supplier problem:} In this problem, we are given $\omega$ color classes: $C_1,\dotsc,C_{\omega}$ that are subsets of the client set $C$. When the color classes are pair-wise disjoint, $7$ and $5$ approximation algorithms are known for the $k$-supplier and $k$-center versions, respectively, without outliers~\cite{fairness:2019_Bercea_Schmidt_APPROX}. On the other hand, when the color classes are not necessarily disjoint, no true constant factor approximation algorithm is known yet. The existing algorithms give $5$ and $3$ approximation guarantees for the $k$-supplier and $k$-center objectives, respectively; however, they violate the fairness constraint by an additive factor of $4\Delta +3$, where $\Delta$ denote the maximum number of groups a client can be part of~\cite{fairness:2019_Bera_NIPS,fairness:2020_Harb_NIPS}. Moreover, no constant factor approximation algorithm is known for the outlier-version of this problem.  
    
\end{enumerate}

\end{document}